\tikzset{round/.style = { rounded corners=2mm }}
\newcommand{\sgk}{\gamma^*_k}
\newcommand{\p}{p}
\newcommand{\price}{\boldsymbol{p}}
\newcommand{\beq}{\begin{eqnarray}}
\newcommand{\eeq}{\end{eqnarray}}
\newcommand{\beqn}{\begin{eqnarray*}}
\newcommand{\eeqn}{\end{eqnarray*}}
\newcommand{\LCkunit}{\mathsf{LP}^{\mathsf{OCRS}}_{k}}
\newcommand{\DCkunit}{\mathsf{Dual}^{\mathsf{OCRS}}_{k}}
\newcommand{\bI}{\mathbbm{1}}
\theoremstyle{definition}
\newtheorem{theorem}{Theorem}
\newtheorem{lemma}{Lemma}
\newtheorem{proposition}{Proposition}
\newtheorem{claim}{Claim}
\newtheorem{defn}{Definition}
\begin{document}


\RUNAUTHOR{Jiang, Ma and Zhang}

\RUNTITLE{Multi-unit Prophet Inequalities and Online Knapsack}

\TITLE{Tight Guarantees for Multi-unit Prophet Inequalities and Online Stochastic Knapsack}

\ARTICLEAUTHORS{%
\AUTHOR{$\text{Jiashuo Jiang}^\dagger$ \quad $\text{Will Ma}^\ddagger$ \quad $\text{Jiawei Zhang}^\S$}

\AFF{\  \\
$\dagger~$Department of Industrial Engineering \& Decision Analytics, Hong Kong University of Science and Technology\\
$\ddagger~$Graduate School of Business and Data Science Institute, Columbia University\\
$\S~$Department of Technology, Operations \& Statistics, Stern School of Business, New York University\\
}
}

\ABSTRACT{

\textbf{Abstract:}
Prophet inequalities are a useful tool for designing online allocation procedures and comparing their performance to the optimal offline allocation. In the basic setting of $k$-unit prophet inequalities, the well-known procedure of \citet{alaei2011bayesian} with its celebrated performance guarantee of $1-\frac{1}{\sqrt{k+3}}$ has found widespread adoption in mechanism design and general online allocation problems in online advertising, healthcare scheduling, and revenue management. Despite being commonly used to derive approximately-optimal algorithms for multi-resource allocation problems, the tightness of Alaei's guarantee has remained unknown. In this paper characterize the tight guarantee in Alaei's setting, which we show is in fact strictly greater than $1-\frac{1}{\sqrt{k+3}}$ for all $k>1$.

We also consider the more general online stochastic knapsack problem where each individual allocation can consume an arbitrary fraction of the initial capacity. Here we introduce a new ``best-fit'' procedure
with a performance guarantee of $\frac{1}{3+e^{-2}}\approx0.319$, which we also show is tight with respect to the standard LP relaxation.
This improves the previously best-known guarantee of 0.2 for online knapsack.
Our analysis differs from existing ones by eschewing the need to split items into ``large'' or ``small'' based on capacity consumption, using instead an invariant for the overall utilization on different sample paths.
Finally, we refine our technique for the unit-density special case of knapsack, and improve the guarantee from 0.321 to 0.3557 in the multi-resource appointment scheduling application of \citet{stein2020advance}.
}

\HISTORY{This version from Oct 7th, 2023. A preliminary version appeared at SODA 2022.}

\maketitle
\section{Introduction}

Online resource allocation problems arise in many domains, such as posted-price mechanism design, transportation logistics, e-commerce fulfillment, online advertising, healthcare scheduling, and revenue management.
These problems can be characterized by a decision-maker facing a sequence of stochastically-generated queries, which must be irrevocably assigned to be served by a resource or rejected as they arrive online.
The resources have limited capacities, and the objective is to maximize the cumulative reward collected from serving queries over a finite time horizon.
We provide some concrete formulations of online resource allocation problems below.

\textbf{$k$-unit prophet inequalities.}
Prophet inequalities date back to \citet{krengel1978semiamarts}, and the $k$-unit version of it was pioneered by \citet{hajiaghayi2007automated,alaei2011bayesian}, with applications in posted-price mechanism design.
In this problem, there are $k$ copies of a single item (resource) and more than $k$ agents who want one.
Each agent has a valuation that is drawn independently from a known distribution.
The agents arrive sequentially and an agent's valuation is revealed upon arrival, at which point the agent must be either immediately given an item or irrevocably rejected.
Agents cannot be served once no items remain.
The objective is to maximize expected welfare, i.e.\ the sum of valuations of agents who receive an item, and compare to the expected welfare obtainable by a prophet who sees all the realized valuations in advance.
$k$-unit prophet inequalities can also be used to design posted-price mechanisms when the objective is to maximize revenue \citep{hajiaghayi2007automated,chawla2010multi}.

\textbf{Online knapsack.} Online knapsack is a classical problem in Operations Research dating back to \citet{papastavrou1996dynamic,kleywegt1998dynamic}, who called it the dynamic and stochastic knapsack problem, with applications in freight transportation, scheduling, and pricing.  Online knapsack can be viewed as a generalization of $k$-unit prophet inequalities in which arriving queries reveal both a valuation and a size.  The valuation and size of each query are drawn from a known joint distribution that is independent
(but could be heterogeneous)
across queries.  A query can be served as long as its size does not exceed the remaining resource capacity, and if served, its size is subtracted from the resource capacity and its valuation is collected as reward.  The objective is to maximize the total reward collected in expectation.
Again this can be compared to the expected reward obtainable by a prophet who sees all valuation/size realizations in advance.

\textbf{Online matching/assignment.} Online matching is the generalization of $k$-unit prophet inequalities to multiple resources, each starting with some number of units.  Queries have a separate valuation for each resource, drawn from a known distribution that could be correlated across resources but is independent across queries.  These valuations are revealed upon arrival, at which point the query must be irrevocably assigned ("matched") to a resource with units remaining, or rejected.  If the query is matched to a resource, then its valuation for that resource is collected as reward, noting that zero valuations can be used to indicate incompatibility with a resource.  The objective is to maximize the total reward collected from matching finite resources over a finite time horizon, which has applications in e-commerce fulfillment \citep{jasin2015lp} and matching impressions with bidders in online advertising \citep{alaei2012online}.

Online assignment is the further generalization of online knapsack to multiple resources, in which queries could take  a different size for each resource, and can only be assigned to a resource for which its size does does not exceed the remaining capacity.  This has applications in healthcare scheduling, where patients may take different amounts of time if assigned to doctors with different specialties, as described in \citet{stein2020advance}.

In these multi-resource problems, the comparison is against a prophet who sees all valuation/size realizations in advance and can make the optimal matching/assignment decisions in hindsight.

\subsection{Scope of this Paper}

We study the aforementioned problems, all of which fall under the most general problem of online assignment.
We always assume that valuation/size distributions are \textit{known} and \textit{independent across queries}, but otherwise place no restrictions on them.
We label the queries $t=1,\ldots,T$ and assume they arrive in that order\footnote{This is only for simplicity.  Our algorithmic results hold even if the order of queries is chosen by the adaptive adversary described in \citet{kleinberg2019matroid}; see the remarks after \Cref{constructprophet}.  (This is important in the mechanism design applications where agents may be strategic about their order of arrival.)}.
We allow the queries to have heterogeneous distributions, capturing valuation/size distributions that vary over time (but noting that realizations are still independent).

Our goal is to derive polynomial-time algorithms with \textit{guarantees} on how their expected total reward compares to the expected total reward of a prophet who sees all valuation/size realizations in advance.
This serves two purposes.
\begin{itemize}
\item In the single-resource problems, optimal or near-optimal online algorithms can be found via dynamic programming, which can be implemented in place of our algorithm.  However, our result still provides a guarantee on how much welfare the agency is extracting (using their optimal algorithm, whose reward is no less than our algorithm) compared to the alternative of waiting for all agents to arrive before committing to any allocations to achieve the prophet's reward. The magnitude of this guarantee provides insights for the higher-order decision of "Should the agency make agents wait until the end, in order to achieve higher social welfare?"
\item In the multi-resource problems, dynamic programming is intractable due to the curse of dimensionality (the state space is exponential in the number of resources).  For these problems, our polynomial-time algorithms come with guarantees on how well they approximate the optimal dynamic program (in fact, how well they approximate the stronger prophet benchmark).
\end{itemize}
All of our comparisons and guarantees are in terms of \textit{ratios}.

Finally, our paper follows a well-known framework of reducing multi-resource matching/assignment problems to single-resource accept/reject problems via a Linear Programming (LP)
relaxation \citep{alaei2012online, alaei2013online, wang2018online, stein2020advance}.  Therefore, we focus on describing our results for the single-resource accept/reject problems, $k$-unit prophet inequalities and online knapsack, although we do provide a self-contained explanation of the reduction in \Cref{sec:multiToSingle}.  We note that this reduction framework also extends to actions more general than assignment, e.g.\ joint assortment and pricing, so our results also apply to these problems in revenue management.  However, we do not formalize this connection in the present paper, instead deferring to the expansive literature \citep{gallego2015online,goyal2020asymptotically,ma2021dynamic,feng2022near,chen2023assortment}.

\subsection{Contribution of this Paper}


This paper characterizes the \textit{tight} guarantee relative to the LP relaxation for both single-resource problems: $k$-unit prophet inequalities and online knapsack.  By tight guarantee, we mean the best-possible ratio that an online algorithm can obtain (in terms of its expected reward, divided by the value of the LP relaxation), on a worst-case instance chosen by an adversary.  This tight ratio depends on whether the problem is $k$-unit prophet inequalities (and the specific value of $k$) or online knapsack, with an adversary choosing the number of queries $T$ and the valuation/size distribution of each query.

\textbf{$k$-unit prophet inequalities.}
We characterize the tight LP-relative guarantee for all positive integers $k$, improving the previously best-known lower bound of $1-1/\sqrt{k+3}$ from \citet{alaei2011bayesian} for $k>1$.  (When $k=1$, Alaei's bound equals 1/2 and is tight, which is the original "prophet inequality".)  There is no closed-form for our tight ratios when $k>1$, but we display some values in \Cref{RatioTable} and \Cref{ComparisonRatioFigure}.  We note that Alaei's lower bound was recently improved for small values of $k$ by \citet{chawla2020static}, which is the existing lower bound displayed in \Cref{RatioTable}.  The best-known existing upper bound relative to the LP is inherited from the correlation gap in the IID setting \citep[see][]{yan2011mechanism}, and our tight result improves both the upper and lower bounds.

The LP relaxation enables us to directly generalize our results to a multi-resource setting where there are $m$ resources and each resource $j$ can serve up to $k_j$ queries. To elaborate, denote by $\gamma^*_{k}$ the tight LP-relative guarantee for the single-resource problem. Our results then imply a tight LP-relative guarantee of $\gamma^*_{k_{\min}}$ for multi-resource online matching, where $k_{\min}=\min_{j=1,\dots,m}k_j$.

\begin{table}
\centering
\caption{
Our tight ratios compared to existing lower/upper bounds, displayed up to $k=8$.
}\label{RatioTable}
\begin{tabular}{|c|cccccccc|}
\hline
value of  $k$& 1&2&3&4&5&6&7&8\\
\hline
Existing lower bound &0.5000 & 0.5859 & 0.6309 & 0.6605 & 0.6821 & 0.6989 & 0.7125 & 0.7240   \\
\hline
Our tight ratios &0.5000 & 0.6148 & 0.6741 & 0.7120  & 0.7389 & 0.7593 & 0.7754 & 0.7887  \\
\hline
Existing upper bound& 0.5000 & 0.7293 & 0.7760  & 0.8046 & 0.8245 & 0.8394 & 0.8510  & 0.8604 \\
\hline
\end{tabular}
\end{table}

\begin{figure}
\centering
\includegraphics[width=0.6\textwidth]{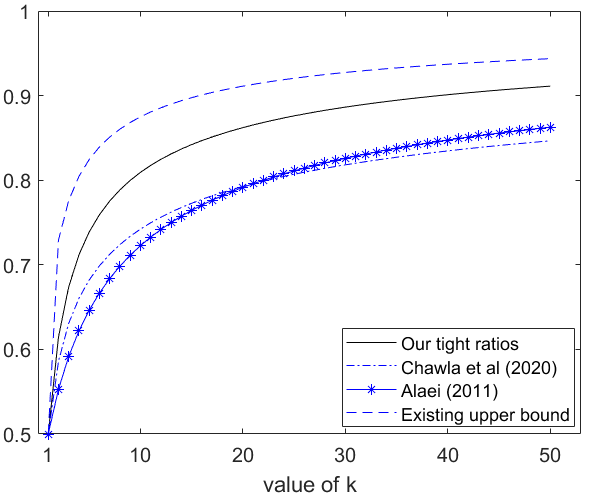}
\caption{
Our tight ratios compared to existing lower/upper bounds, plotted up to $k=50$.
}\label{ComparisonRatioFigure}
\end{figure}

Alaei's result was derived through a "Magician's problem".  We instead analyze $k$-unit prophet inequality through the lens of \textit{Online Contention Resolution Schemes (OCRS)}, which is equivalent to the Magician's problem, but allows for an LP formulation whose optimal solution is nicely structured.  Given this, we then solve an optimization from the adversary's perspective to implicitly characterize the tight guarantee.  We elaborate further on these techniques in \Cref{sec:newTechProph}.

\textbf{Online knapsack.}
We show the tight LP-relative guarantee to be $1/(3+e^{-2})\approx0.319$, improving the previously best-known guarantee of 0.2 from \citet{dutting2020prophet}.  In the \textit{unit-density} special case, where the realized size always equals the realized valuation, we establish an improved guarantee of 0.355, better than the approximation ratio of 0.321 from \citet{stein2020advance}.

Like in the $k$-unit prophet inequality case, both of these results go through the LP relaxation and hence extend to the online assignment problem with multiple resources.

Previous results \citep{dutting2020prophet,stein2020advance,feldman2021online} analyzed knapsack algorithms that accepted either only "large-sized" items or only "small-sized" items.  We instead consider a "best-fit" algorithm that can pack large-sized items alongside small-sized items, which we are able to analyze by establishing an \textit{invariant} on its distribution of capacity consumption at any point in time.  Our invariant technique is flexible, and we show how it can be modified under the unit-density assumption to yield an improved guarantee in this special case.  We elaborate further in \Cref{sec:newTechKnap}.  We note that even in the easier setting where the order of queries is uniformly random (instead of fixed by an adversary), no guarantee better than $1/(3+e^{-2})$ is known.

\textbf{Significance of LP-relative tightness.}
Our ratios are best-possible if the denominator is the relaxed LP value;
it is however plausible that better guarantees are possible if the denominator is directly the prophet's expected reward (not by much, as we show in \Cref{Upperprophetproposition} for $k=2$).
Nonetheless, such guarantees are currently unknown, so our guarantees are state-of-the-art even if the denominator is the prophet's expected reward.
We remark that the multi-resource to single-resource reductions do require comparing against the LP, so our guarantees indicate the limits of approximation ratios obtained through the LP relaxation.
Finally, even for a single resource, tight guarantees relative to the LP are a fundamental quantity of interest, with connections to correlation gaps and online contention resolution (see \citet{chekuri2014submodular} and \citet{feldman2021online} for further details).
%

\subsection{Roadmap}
\Cref{sec:PFormulation} formalizes the problems described at the beginning of the Introduction.  \Cref{sec:multiToSingle} reduces these problems to single-resource OCRS problems.  \Cref{sec:newTechProph,sec:newTechKnap} explain our new techniques for the $k$-unit and knapsack OCRS problems, respectively.  \Cref{sec:relatedWork} discusses further related work.  \Cref{1overkspecialcase} provides detailed results for $k$-unit OCRS, while \Cref{randomsizesection} provides detailed results for knapsack OCRS.  \Cref{sec:extensions} presents two extensions related to knapsack OCRS.  \Cref{sec:conc} concludes.

\section{Problem Formulations, New Techniques for OCRS}

We first formalize the most general problem studied in this paper, online assignment, which can capture as special cases the other three problems, $k$-unit prophet inequalities, online knapsack, and online matching.
We then explain our techniques on the Online Contention Resolution Scheme (OCRS) versions of the single-resource problems, $k$-unit OCRS, and knapsack OCRS.

\subsection{Problem Formulation for Online Assignment}\label{sec:PFormulation}

There are $m$ resources and the initial capacity of each resource is scaled to $1$. At each period $t$, query $t$ arrives and is associated with a non-negative reward $\mathbf{\tilde{r}}_t=(\tilde{r}_{t1},\dots,\tilde{r}_{tm})$ and a size $\mathbf{\tilde{d}}_t=(\tilde{d}_{t1},\dots,\tilde{d}_{tm})\in(0,1]^m$, where vector $(\mathbf{\tilde{r}}_t, \mathbf{\tilde{d}}_t)$ is assumed to be stochastic and drawn from a known distribution $F_t(\cdot)$. After the value of $(\mathbf{\tilde{r}}_t, \mathbf{\tilde{d}}_t)$ is revealed, the decision maker has to decide to serve or reject this query irrevocably. If served, the decision maker also needs to assign a resource $j\in\{1,\ldots,m\}$ to serve query $t$. Then, query $t$ will take up $\tilde{d}_{tj}$ capacity of resource $j$ and a reward $\tilde{r}_{tj}$ will be collected. The goal is to maximize the total collected reward without violating the capacity constraint of any resource.

Any online policy $\pi$ for the decision maker can be specified by a set of decision variables $\{x^{\pi}_{tj}\}_{j=1,\dots,m, t=1,\dots,T}$, where $x^{\pi}_{tj}$ is a binary variable and denotes whether query $t$ is served by resource $j$. Note that $x^\pi_{tj}$ is a binary random variable that can also depend on randomness in the policy. 
$\pi$ needs to satisfy the following capacity constraint:
\begin{equation}\label{capacityconstraint}
\sum_{t=1}^{T}\tilde{d}_{tj}\cdot x^{\pi}_{tj}\leq 1,~~~\forall j=1,\dots,m.
\end{equation}
and the constraint $\sum_{j=1}^{m}x^\pi_{tj}\leq1$ for all $t=1,\dots,T$.
The total reward collected by policy $\pi$ is denoted by $V^{\pi}(\bm{I})=\sum_{t=1}^{T}\sum_{j=1}^{m}\tilde{r}_{tj}\cdot x^{\pi}_{tj}$, where $\bm{I}=((\mathbf{\tilde{r}}_1, \mathbf{\tilde{d}}_1), (\mathbf{\tilde{r}}_2, \mathbf{\tilde{d}}_2), \dots, (\mathbf{\tilde{r}}_T, \mathbf{\tilde{d}}_T))$ denotes the realized rewards/sizes.  We let $\mathcal{H}=(F_1,F_2,\dots,F_T)$ denote the problem instance and it is understood that $\bm{I}\sim\mathcal{H}$ means $(\mathbf{\tilde{r}}_t, \mathbf{\tilde{d}}_t)$ is drawn independently from $F_t$ for all $t$, so that $\mathbb{E}_{\pi, \bm{I}\sim\mathcal{H}}[V^{\pi}(\bm{I})]$ denotes the expected total reward collected by policy $\pi$.  

The expected reward collected by the online algorithm is compared to that of the prophet, who can make decisions based on the knowledge of the realizations of all the queries. The prophet's expected reward is denoted by $\mathbb{E}_{\bm{I}\sim\mathcal{H}}[V^{\text{off}}(\bm{I})]$. For any feasible online policy $\pi$, its guarantee $\gamma$ is defined as
\begin{equation}\label{competitiveratio}
\gamma=\inf_{\mathcal{H}}\frac{\mathbb{E}_{\pi, \bm{I}\sim\bm{F}}[V^\pi(\bm{I})]}{\mathbb{E}_{\bm{I}\sim\bm{F}}[V^{\text{off}}(\bm{I})]}
\end{equation}

Typically, the best-known guarantees come from comparing the online algorithm to a \textit{Linear Programming (LP) relaxation} of the prophet, which only has to satisfy the capacity constraint in expectation. The LP relaxation can be formulated as follows.
\begin{subequations}\label{LPuppermultidimension}
\begin{align}
\text{UP}(\mathcal{H})=&~\max~~\sum_{t=1}^{T}\mathbb{E}_{(\tilde{\mathbf{r}}_t, \tilde{\mathbf{d}}_t)\sim F_t}[\sum_{j=1}^m \tilde{r}_{tj}\cdot x_{tj}(\tilde{\mathbf{r}}_t, \tilde{\mathbf{d}}_t)]\\
&~~\mbox{s.t.}~~~~\sum_{t=1}^{T}\mathbb{E}_{(\tilde{\mathbf{r}}_t, \tilde{\mathbf{d}}_t)\sim F_t}[\tilde{d}_{tj}\cdot x_{tj}(\tilde{\mathbf{r}}_t, \tilde{\mathbf{d}}_t)]\leq1,~~~\forall j=1,\dots,m  \label{con:LPcapacity} \\
&~~~~~~~~~~~\sum_{j=1}^{m}x_{tj}(\mathbf{r}_t,\mathbf{d}_t)\leq 1,~~~\forall t, \forall (\mathbf{r}_t, \mathbf{d}_t) \label{LP:probvec} \\
&~~~~~~~~~~~ x_{tj}(\mathbf{r}_t, \mathbf{d}_t)\geq0,~~~\forall t, \forall (\mathbf{r}_t, \mathbf{d}_t), \forall j=1,\dots,m. \label{LP:nonneg}
\end{align}
\end{subequations}
Here $\mathcal{H}=(F_1,\dots, F_T)$ denotes the problem instance, while $x_{tj}(\mathbf{r}_t, \mathbf{d}_t)$ denotes the probability of assigning resource $j$ to serve query $t$ conditional on its reward and size vectors realizing to $\mathbf{r}_t=(r_{t1},\dots,r_{tm})$ and $\mathbf{d}_t=(d_{t1},\dots,d_{tm})$, respectively.

\textbf{Special cases.} \textit{Online matching} is captured by having size vector $\mathbf{\tilde{d}}_t$ deterministically equal $(1/k_1,\ldots,1/k_m)$ for all queries $t$.  That is, a query when served by resource $j$ always consumes $1/k_j$ of its initial capacity, which can be interpreted as consuming one of $k_j$ initial "units".  Orthogonally, the special case of \textit{online knapsack} is captured when $m=1$.  The intersection of both special cases is the \textit{$k$-unit prophet inequalities} problem, where we have omitted subscript $j$ when denoting the starting number of units $k$ of the single resource.

\textbf{Note about distributions.} We always assume that the random rewards and sizes are input as discrete distributions, so $\text{UP}(\mathcal{H})$ is finite and polynomial-sized. We re-iterate that for convenience, we assume sizes are always positive. When sums are indexed by $d_t$, it is understood that this is summing over the finite, positive support of the distribution of sizes that query $t$ can take.

\subsection{Reduction to Single-resource OCRS Problems} \label{sec:multiToSingle}

The online assignment problem described in \Cref{sec:PFormulation} can be solved as follows.
Let $x^*_{tj}(\mathbf{r}_t, \mathbf{d}_t)$ denote an optimal solution to $\text{UP}(\mathcal{H})$.
For each query $t$, observe its realization $(\tilde{\mathbf{r}}_t, \tilde{\mathbf{d}}_t)$ and "route" it to at most one resource, such that the probability of routing to each resource $j$ is $x^*_{tj}(\tilde{\mathbf{r}}_t, \tilde{\mathbf{d}}_t)$, which satisfies $\sum_{j=1}^m x^*_{tj}(\tilde{\mathbf{r}}_t, \tilde{\mathbf{d}}_t)\le 1$ by LP constraints~\eqref{LP:probvec}--\eqref{LP:nonneg}.  An OCRS (specified later) for that resource $j$ will then determine whether to accept query $t$: if so, the algorithm serves query $t$ using resource $j$; otherwise, the algorithm does not serve query $t$ at all.

We remark that it is possible for a query $t$ to not get routed to any resource $j$, which would ensure its rejection.
Intuitively, query $t$ is only routed to resources $j$ for which the realized reward $\tilde{r}_{tj}$ is high compared to the realized consumption $\tilde{d}_{tj}$; it may not get routed at all on realizations where the entries of $\tilde{\mathbf{r}}_t$ are low (even when the number of resources is $m=1$).
Also, once routed, the exact value of $\tilde{r}_{tj}$ is ignored, with the presumption that it is "high enough" relative to $\tilde{d}_{tj}$.
We finally remark that the algorithm does not check resource state when randomly routing---a query $t$ can get routed to a resource $j$ with insufficient (less than $\tilde{d}_{tj}$) remaining capacity.  In this case, the query does not get re-routed to another resource, and again its rejection is ensured.

We now explain how an OCRS works and the guarantee it provides, first for a single \textit{$k$-unit} resource.  We should interpret a query as "active" if it was routed to the resource.
\begin{itemize}
\item \textbf{Input}: for each query $t=1,\ldots,T$, the probability $p_t$ with which it is independently active.
\item \textbf{Output}: for each query $t$, the (randomized) decision of whether to accept it when it is active, depending on how many of the $k$ units of capacity have already been consumed.
\item \textbf{Guarantee}: as long as $\sum_{t=1}^T p_t\le k$, every query $t$ will be accepted w.p.\ at least $\gamma^*_k$ conditional on it being active, where $\gamma^*_k$ is the tight guarantee for $k$-unit OCRS (to be specified later).
\end{itemize}
This $k$-unit OCRS is used for the online matching and $k$-unit prophet inequalities problems.  To elaborate, fix a resource $j$.  We set $p_t=\mathbb{E}_{(\tilde{\mathbf{r}}_t, \tilde{\mathbf{d}}_t)\sim F_t}[x^*_{tj}(\tilde{\mathbf{r}}_t, \tilde{\mathbf{d}}_t)]$, the probability that each query $t=1,\ldots,T$ is active (routed to resource $j$).  We obtain $\sum_{t=1}^T p_t \le k_j$ through LP constraints~\eqref{con:LPcapacity} after noting that $\tilde{d}_{tj}=1/k_j$ w.p.~1.
Therefore, the OCRS guarantees to accept every query $t$ with probability at least $\gamma^*_{k_j}$ whenever $t$ is active.  The expected reward collected from resource $j$ is $\sum_{t=1}^T \gamma^*_{k_j}\mathbb{E}_{(\tilde{\mathbf{r}}_t, \tilde{\mathbf{d}}_t)\sim F_t}[\tilde{r}_{tj} x^*_{tj}(\tilde{\mathbf{r}}_t, \tilde{\mathbf{d}}_t)]$, where we note that conditional on query $t$ being active for resource $j$, whether it is actually served (occuring w.p.~$\gamma^*_{k_j}$) is independent of $\tilde{r}_{tj}$ (because the OCRS's decisions do not depend on the exact values of $\tilde{r}_{tj}$).  Summing over resources and noting that $\min_j\gamma^*_{k_j}=\gamma^*_{\min_j k_j}=\gamma^*_{k_{\min}}$ ($\gamma^*_k$ is increasing in $k$), the algorithm's expected reward is at least
\begin{align*}
\sum_{j=1}^m \gamma^*_{k_j}\cdot\sum_{t=1}^T \mathbb{E}_{(\tilde{\mathbf{r}}_t, \tilde{\mathbf{d}}_t)\sim F_t}[\tilde{r}_{tj} x^*_{tj}(\tilde{\mathbf{r}}_t, \tilde{\mathbf{d}}_t)]
\ge(\min_{j=1,\ldots,m}\gamma^*_{k_j})\cdot\sum_{j=1}^m\sum_{t=1}^T\mathbb{E}_{(\tilde{\mathbf{r}}_t, \tilde{\mathbf{d}}_t)\sim F_t}[\tilde{r}_{tj} x^*_{tj}(\tilde{\mathbf{r}}_t, \tilde{\mathbf{d}}_t)]
=\gamma^*_{k_{\min}}\cdot\text{UP}(\mathcal{H}).
\end{align*}

We now explain OCRS for a single \textit{knapsack} resource.  Here, a query $t$, when active, also takes one of various non-zero sizes $d_t$.
\begin{itemize}
\item \textbf{Input}: for each query $t=1,\ldots,T$, the independent distribution of sizes $d_t$ it can take, given by probabilities $p_t(d_t)$ satisfying $\sum_{d_t} p_t(d_t)\le 1$; the query is inactive w.p.~$1-\sum_{d_t}p_t(d_t)$.
\item \textbf{Output}: for each query $t$, the (randomized) decision of whether to accept it under any realized size $d_t$, depending on how much of the resource's capacity has already been consumed.
\item \textbf{Guarantee}: as long as the expected sum of sizes does not exceed the initial capacity 1, every query $t$ will be accepted w.p.\ at least $\gamma=1/(3+e^{-2})$, conditional on any size $d_t$ taken.
\end{itemize}
This knapsack OCRS is used for the online assignment and online knapsack problems.  To elaborate, again fix a resource $j$.  We set $p_t(d_t)=\mathbb{E}_{(\tilde{\mathbf{r}}_t, \tilde{\mathbf{d}}_t)\sim F_t}[\bI(\tilde{d}_{tj}=d_t)x^*_{tj}(\tilde{\mathbf{r}}_t, \tilde{\mathbf{d}}_t)]$ as the probability that each query $t$ takes each size $d_t$ (for resource $j$).  It is easy to see that $\sum_{d_t} p_t(d_t)\le 1$ (because $x^*_{tj}(\tilde{\mathbf{r}}_t, \tilde{\mathbf{d}}_t)\le1$), and the expected sum of sizes satisfies
\begin{align*}
\sum_t \sum_{d_t} d_t \cdot p_t(d_t)
&=\sum_t\sum_{d_t} \mathbb{E}_{(\tilde{\mathbf{r}}_t, \tilde{\mathbf{d}}_t)\sim F_t}[d_t\cdot\bI(\tilde{d}_{tj}=d_t)\cdot x^*_{tj}(\tilde{\mathbf{r}}_t, \tilde{\mathbf{d}}_t)]
\\ &=\sum_t\mathbb{E}_{(\tilde{\mathbf{r}}_t, \tilde{\mathbf{d}}_t)\sim F_t}[\tilde{d}_{tj}\cdot x^*_{tj}(\tilde{\mathbf{r}}_t, \tilde{\mathbf{d}}_t)]
\\ &\le1
\end{align*}
by LP constraints~\eqref{con:LPcapacity}.  Therefore, the OCRS guarantees to accept every query $t$ with probability at least $\gamma=1/(3+e^{-2})$ conditional on any non-zero size $d_t$ taken.
The expected reward collected from resource $j$ is $\sum_{t=1}^T \gamma\cdot \mathbb{E}_{(\tilde{\mathbf{r}}_t, \tilde{\mathbf{d}}_t)\sim F_t}[\tilde{r}_{tj}\cdot x^*_{tj}(\tilde{\mathbf{r}}_t, \tilde{\mathbf{d}}_t)]$, again noting that conditional on query $t$ being routed to resource $j$, it is always served w.p.~$1/(3+e^{-2})$ independent of the value of $\tilde{r}_{tj}$.  Summing over resources, the algorithm's expected reward is at least
\begin{align*}
\sum_{j=1}^m \gamma\sum_{t=1}^T \mathbb{E}_{(\tilde{\mathbf{r}}_t, \tilde{\mathbf{d}}_t)\sim F_t}[\tilde{r}_{tj}\cdot x^*_{tj}(\tilde{\mathbf{r}}_t, \tilde{\mathbf{d}}_t)]
=\frac{1}{3+e^{-2}}\cdot\text{UP}(\mathcal{H}).
\end{align*}

Our $k$-unit OCRS is described in \Cref{constructprophet} and the remarks afterward.  Our knapsack OCRS is described in \Cref{definedistributionrandomsize} and the remarks afterward.  Formal specifications of the multi-resource algorithms that use these OCRS's as subroutines are deferred to \Cref{sec:FinalAlgorithm}.

In proving optimality, we show that even for a single resource, the guarantees in the $k$-unit and knapsack OCRS problems cannot exceed $\gamma^*_k$ and $1/(3+e^{-2})$ respectively.
Now, it may seem like the OCRS problem is unnecessarily stringent---it requires a \textit{query-wise} acceptance guarantee, instead of only a guarantee on the algorithm's \textit{total} reward compared to $\text{UP}(\mathcal{H})$.
However, \citet{lee2018optimal} use a simple LP duality argument to show that under adversarially-chosen reward values, guarantees for the original reward collection problem are no better than guarantees for the corresponding OCRS problem.
Therefore, \textbf{from this point on in the paper}, we focus solely on the OCRS problems, having established that they suffice for providing guarantees on the single- or multi- resource reward collection problems and that the OCRS guarantees are best-possible when comparing to $\text{UP}(\mathcal{H})$.

\subsection{New Techniques for $k$-unit OCRS} \label{sec:newTechProph}
In \Cref{sec:multiToSingle} we explained why the single-resource $k$-unit OCRS problem is useful as a subroutine for the online matching and $k$-unit prophet inequalities problems.  We now formalize the $k$-unit OCRS problem in \Cref{def:kUnitOnlineRounding}, and explain our new techniques for solving it optimally.

\begin{defn}[$k$-unit OCRS Problem] \label{def:kUnitOnlineRounding}
There is a sequence of queries $t=1,\ldots, T$, each of which is active independently according to a known probability $p_t$.
Whether a query is active is sequentially observed, and active queries can be immediately served or rejected, while inactive queries must be rejected.
At most $k$ queries can be served in total, and it is promised that $\sum_tp_t\le k$.
The goal of an online algorithm is to serve every query $t$ with probability at least $\gamma$ conditional on it being active, for a constant $\gamma\in[0,1]$ as large as possible, potentially with the aid of randomization.
\end{defn}

It is easy to see\footnote{
For example, suppose that $k=1$, $T=2$, and $p_1=p_2=1/2$.  If we attempt to set $\gamma=1$, then the first query would be served ex-ante w.p.~1/2, i.e.,  whenever it is active.  This means that half the time no capacity would remain for query 2, i.e., half the time query 2 is active it does not get served.  For this example, the optimal value of $\gamma$ can be calculated to be 2/3.
} that despite $\price$ being fractionally feasible, a guarantee of $\gamma=1$ in \Cref{def:kUnitOnlineRounding} is generally impossible.
The work of \citet{alaei2011bayesian} implies a solution to \Cref{def:kUnitOnlineRounding} with $\gamma=1-\frac{1}{\sqrt{k+3}}$.
Presented in the slightly different context of a ``$\gamma$-Conservative Magician,''
Alaei's procedure has the further appealing property that it does not need to know vector $\price$ in advance, as long as each $p_t$ is revealed when query $t$ is observed, and it is promised that $\sum_tp_t\le k$.
However, it has remained unknown whether Alaei's $\price$-agnostic procedure or its analyzed bound of $\gamma=1-\frac{1}{\sqrt{k+3}}$ is tight for an arbitrary positive integer $k$.
In this paper, we resolve this question, in the following steps.
\begin{enumerate}
\item Under the assumption that $\price$ is known, we formulate the optimal $k$-unit OCRS problem using a new LP.
This LP tracks the probability distribution of the capacity utilization, which must lie in $\{0,\frac{1}{k},\ldots,1\}$, over time $t=1,\ldots, T$.
The decision variables correspond to subdividing and selecting sample paths
at each time $t$, with total measure exactly $\gamma$, on which the algorithm will serve query $t$ whenever it is active.
This selection is constrained to sample paths with at least $\frac{1}{k}$ capacity remaining, which is enforced in the LP through tracking the capacity utilization.
Finally, $\gamma$ is also a decision variable, with the objective being to maximize $\gamma$.
\item For an arbitrary $\price$, we characterize an optimal solution to this LP based on the structure of its dual.  The optimal selection prioritizes sample paths with the \textit{least} capacity utilized, at every time $t$, \textit{irrespective} of the values $p_{t+1},p_{t+2},\ldots$ in the future.
Such a solution corresponds to the $\gamma$-Conservative Magician from \citet{alaei2011bayesian}, except that $\gamma$, instead of being fixed to $1-\frac{1}{\sqrt{k+3}}$, is set to an optimal value that depends on the vector $\price$.
\item We derive a closed-form expression for this optimal value of $\gamma$ as a function of $\price$.  We show that $\gamma$ is minimized when $p_t=k/T$ for each $t$ and $T\to\infty$, corresponding to a Poisson distribution of rate $k$.
We characterize this infimum value of $\gamma$ using an ODE and provide an efficient procedure for computing it numerically.
\end{enumerate}

For any $k$, let $\sgk$ denote the infimum value of $\gamma$ described in Step~3 above.
The conclusion is that setting $\gamma=\sgk$ is a feasible solution to \Cref{def:kUnitOnlineRounding}, with $\sgk>1-\frac{1}{\sqrt{k+3}}$ for all $k>1$, achieved using the $\price$-agnostic procedure described above.
Moreover, the guarantee of $\sgk$ is the best possible, since even a procedure that knows $\price$ in advance cannot do better than a $\gamma$-Conservative Magician with an optimized value of $\gamma$, which in the Poisson worst case can be as low as $\sgk$.

\textbf{Comparison to \citet{alaei2012online}.}
$k$-unit prophet inequalities have been analyzed using LPs before in \citet{alaei2012online}, who formulate a primal LP encoding the adversary's problem of minimizing an online algorithm's optimal dynamic programming value. They then use an auxiliary ``Magician's problem,'' analyzed through a ``sand/barrier'' process, to construct a feasible dual solution with $\gamma=1-\frac{1}{\sqrt{k+3}}$. By contrast, we directly formulate the $k$-unit OCRS problem using an LP under the assumption that the vector $\price$ is known.
Our LP dual along with complementary slackness allows us to establish the structure of the optimal $k$-unit OCRS, showing that it indeed corresponds to a $\gamma$-Conservative Magician.  However, in our case $\gamma$ is set to a value dependent on $\price$, which we show is always at least $\sgk$, and strictly greater than $1-\frac{1}{\sqrt{k+3}}$ for all $k>1$.

\textbf{Comparison to \citet{wang2018online}.}
The values of $\sgk$ we derive have previously appeared in \citet{wang2018online} through the stochastic analysis of a ``reflecting'' Poisson process.  Our work differs by establishing \textit{optimality} for these values $\sgk$, as the solutions to a sequence of optimization problems from our framework.
Moreover, their paper assumes Poisson arrivals to begin with, while we allow arbitrary probability vectors $\price$ and show the limiting Poisson case to be the worst case.

\textbf{The classical prophet inequality comparison.}
We should note that classically in the $k$-unit prophet inequality problem, the goal is to compute the worst-case performance of an online algorithm, which sequentially observes independent draws from known distributions and can keep $k$ of them, and compare instead to a \textit{prophet}, whose performance is the expected sum of the $k$ highest realizations. The prophet’s performance is upper-bounded by the LP relaxation, so our guarantees that are tight relative to the LP also imply the \textit{best-known prophet inequalities to date} for all $k>1$. We do give an example that demonstrates this guarantee to be “almost” tight even
when compared to the weaker prophet benchmark.
Through our LP's and complementary slackness, we can convert the Poisson worst case for the $k$-unit OCRS problem into an explicit instance of $k$-unit prophet inequalities, on which the reward of any online algorithm relative to the LP relaxation is upper-bounded by $\sgk$.
Moreover, by modifying such an instance, we also provide a new upper bound of $0.6269$ relative to the prophet,  when $k=2$ (\Cref{Upperprophetproposition}). Since $\gamma^*_2\approx0.6148$, this shows that not much improvement beyond $\sgk$ is possible relative to the prophet when $k=2$.

\subsection{New Techniques for the Knapsack Setting} \label{sec:newTechKnap}
In \Cref{sec:multiToSingle} we explained why the single-resource knapsack OCRS problem is useful as a subroutine for the online assignment and online knapsack problems.  We now formalize the knapsack OCRS problem in \Cref{def:knapsackOnlineRounding}, and explain our new techniques for solving it.

\begin{defn}[Knapsack OCRS Problem] \label{def:knapsackOnlineRounding}
There is a sequence of queries $t=1,\ldots, T$, and each query $t$ independently realizes a size, which equals $d_t\in(0,1]$ with a known probability $p_t(d_t)$ satisfying $\sum_{d_t} p_t(d_t)\leq 1$. With probability $1-\sum_{d_t} p_t(d_t)$, the query is "inactive" with size 0 and can be ignored. After the query's size is observed, the query must be immediately served or rejected.
The total size of queries served cannot exceed 1, and it is promised that $\sum_t\sum_{d_t} p_t(d_t)\cdot d_t\le 1$.
The goal of an online algorithm is to serve every query $t$ with probability at least $\gamma$ conditional on the size being realized to $d_t$, for each $d_t\in(0,1]$, and for a constant $\gamma\in[0,1]$ as large as possible.
\end{defn}

Similar to our approach for the multi-unit setting, we design a solution for the knapsack OCRS by tracking the distribution of capacity utilization over time. For each size realization $d_t\in(0,1]$, we select for each query $t$ a $\gamma$-measure of sample paths on which it should be served whenever the size is realized as $d_t$, under the constraint that these paths have a current utilization of at most $1-d_t$.
However, different from the multi-unit setting, in the knapsack setting, we need to always maintain a $\gamma$-measure of sample paths on which utilization is 0, in case an item $T$ with size realization $d_T=1$ and $p_T(d_T)=\varepsilon$ arrives at the end. Accordingly, in stark contrast to the $\gamma$-Conservative Magician, our knapsack procedure selects for each query and each size realization the sample paths with the \textit{most} capacity utilized, on which that query still fits.
We dub this procedure a ``Best-fit Magician.''\footnote{This is because it resembles the ``best-fit'' heuristic for bin packing \citep{garey1972worst}.}
In the more general knapsack setting, capacity utilization can only be tracked in polynomial time after discretizing size realizations by $1/K$ for some large integer $K$; nonetheless, we will show (in \Cref{sec:knapsackPolytimeImpl}) that this loses a negligible additive term of $O(1/K)$ in the guarantee.

To derive the maximum feasible guarantee $\gamma$ for a Best-fit Magician, we note that the expected capacity utilization over the sample paths is $\gamma\cdot\sum_t\sum_{d_t} p_t(d_t)\cdot d_t$, which is always upper-bounded by $\gamma$, since $\sum_t\sum_{d_t} p_t(d_t)\cdot d_t\le1$.
Therefore, to lower-bound the measure of sample paths with 0 utilization, it suffices to upper-bound the measure of sample paths whose utilization is small but non-zero.
To do so, we use the rule of the Best-fit Magician, namely, that an arriving query with a size realization $d_t$ will only be served on a previously empty sample path if there is less than a $\gamma$-measure of sample paths with utilization in $(0,1-d_t]$.
Based on this fact, we derive an \textit{invariant} that holds after each query $t$ and upper-bounds the measure of sample paths with utilization in $(0,b]$ by a decreasing exponential function of the measure with utilization in $(b,1-b]$, for any small size $b\in(0,1/2]$.
This allows us to show that a $\gamma$ as large as $\frac{1}{3+e^{-2}}\approx0.319$ allows for a $\gamma$-measure of sample paths to have 0 utilization at all times, and hence is feasible.
The Best-fit Magician is also agnostic to knowing the probabilities $\{p_t(d_t)\}_{\forall t, \forall d_t}$ in advance, as long as it is promised that $\sum_t\sum_{d_t} p_t(d_t)\cdot d_t\le1$.
Nonetheless, we construct a counterexample showing it to be \textit{optimal}, in that $\gamma=\frac{1}{3+e^{-2}}$ is an upper bound on the guarantee for the knapsack OCRS problem even if the probabilities $\{p_t(d_t)\}_{\forall t, \forall d_t}$ are known in advance.

To our knowledge, our analysis differs from existing ones for knapsack in an online setting \citep{dutting2020prophet,stein2020advance,feldman2021online} by eschewing the need to split queries into ``large'' vs. ``small'' based on their size (usually, whether their size is greater than 1/2).
In fact, we show that any algorithm that \textit{packs large and small queries separately} is limited to $\gamma\le0.25$ in our problem (\Cref{Largesmallproposition}), whereas our tight guarantee is $\gamma=\frac{1}{3+e^{-2}}\approx0.319$.

Our result can be further improved in the case of \textit{unit-density} online knapsack, where the random size and reward of a query are always identical.
Indeed, since it is no longer possible for a small query to have a high reward, we no longer need to guarantee a uniform lower bound $\gamma$ on the probability of serving any query with any size realization.
Instead, we show that our invariant still holds for a decreasing sequence of service probabilities $\gamma_1\ge\cdots\ge\gamma_T$, and devise a particular sequence that guarantees an expected reward that is at least 0.3557 times the optimal LP value. 
This then implies a 0.3557 approximation for the multi-resource appointment scheduling problem of \citet{stein2020advance}, improving upon their 0.321 approximation.

\textbf{Comparison to \cite{alaei2013online}.} Another related setting is the online stochastic generalized assignment problem of \citet{alaei2013online}, for which the authors establish a guarantee of $1-\frac{1}{\sqrt{k}}$ when each query can realize a random size that is at most $1/k$.
They eliminate the possibility of ``large'' queries by imposing $k$ to be at least 2, showing that a constant-factor guarantee is impossible when $k=1$.
Although our problem can be generalized to random sizes, we need to assume that size is observed \textit{before} the algorithm makes a decision, whereas in their problem size is randomly realized \textit{after} the algorithm decides to serve a query.
This distinction allows our problem to have a constant-factor guarantee that holds even when queries can have size 1.  Moreover, our procedure starkly contrasts with theirs in that we prioritize selecting sample paths with the \textit{most} capacity utilized on which a query fits, while they prioritize sample paths with the \textit{least} capacity utilized.

\subsection{Further Related Work}\label{sec:relatedWork}

\textbf{Online knapsack.}
We should point out though that in the unit-density setting with a \textit{single} knapsack, a guarantee of 1/2, better than our guarantee of 0.3557, is possible under any fixed sequence of adversarial arrivals \citep{han2015randomized}.
However, such a guarantee fails\footnote{An additional factor of 1/2 would be lost, resulting in a guarantee of only 1/4; see \citet{ma2019competitive}.
In fact, a guarantee of 1/2 relative to the LP is impossible, due to the upper bound of 0.432 presented in our \Cref{Upperunitdensity}.} to extend to multiple knapsacks, whereas our guarantee of 0.3557, which holds relative to the LP, directly extends there, following {the same reduction argument as in \citet{stein2020advance}.}

\textbf{Prophet inequalities.}
{Prophet inequalities were originally posed in the statistics literature by \citet{krengel1978semiamarts}.
Due to their implications for posted pricing and mechanism design, prophet inequalities have been a surging topic in algorithmic game theory since the seminal works of \citet{hajiaghayi2007automated,chawla2010multi,yan2011mechanism,alaei2011bayesian}.
Of particular interest in these works are bounds for $k$-item prophet inequalities, and in this paper, we improve such bounds for all $k>1$ and show that our bounds are tight relative to the LP relaxation, under an adversarial arrival order.
More recently, prophet inequalities have also been studied under random order \citep{esfandiari2017prophet, correa2021prophet, arnosti2021tight}, free order \citep{correa2021prophet, beyhaghi2021improved}, or IID arrivals \citep{hill1982comparisons, correa2017posted, jiang2022tightness}, with $k$-unit prophet inequalities, in particular, being studied by \citet{arnosti2021tight} under random order, \citet{beyhaghi2021improved} under free order, and \citet{jiang2022tightness} under IID arrivals.
Prophet inequalities have also been studied under the batched setting \citep{alaei2022descending} with applications to descending-price auctions and have also been used as algorithmic subroutine for other revenue management problems (e.g. \citet{cominetti2010optimal, alaei2021revenue}). 
A survey of recent results in prophet inequalities can be found in \citet{correa2019recent}. 
}

\textbf{OCRS.} A guarantee of $\gamma$ for our problem in \Cref{def:kUnitOnlineRounding} (resp.\ \Cref{def:knapsackOnlineRounding}) is identical to a \textit{$\gamma$-selectable OCRS} for the $k$-uniform matroid (resp.\ knapsack polytope) as introduced in \citet{feldman2021online}.
However, we should clarify some assumptions about what is known beforehand and the choice of arrival order.
Our OCRS holds against an \textit{online adversary}, who can adaptively choose the next query to arrive but does not know the realizations of queries yet to arrive.  We show that the guarantee does not improve against the \textit{weakest adversary}, who has to reveal the arrival order in advance.  However, our OCRS do not satisfy the \textit{greedy} property and consequently do not hold against the \textit{almighty adversary}, who knows the realizations of all queries before having to choose the order.
We note that our $k$-unit OCRS does satisfy monotonicity \citep[see][]{chekuri2014submodular} but our knapsack OCRS does not.

In \citet{feldman2021online}, the authors derive a 1/4-selectable\footnote{
\citet{lee2018optimal} have improved this to a 1/2-selectable OCRS for general matroids, against the weakest adversary.  Our guarantees of $\sgk$ are all greater than 1/2 and hold against the online adversary but in the special case of $k$-uniform matroids.
} greedy OCRS for general matroids, and a $0.085$-selectable greedy OCRS for the knapsack polytope, both of which hold against an almighty adversary.
We establish significantly improved selectabilities against the weaker online adversary, and, importantly, show that our guarantees are \textit{tight} for our setting.

\textbf{Magician's problem.}
Our algorithms do enjoy a property not featured in the OCRS setting though: they \textit{need not know the universe} of elements in advance, holding even if the adversary can adaptively ``create'' the $p_t$ (and $d_t$) of the next query $t$, under the promise that $\sum_tp_td_t\le1$.
This property is inherited from the \textit{Magician's problem}, introduced by
\citet{alaei2011bayesian} as a powerful black box for approximately solving combinatorial auctions.
Our work fully resolves\footnote{
The main difference in the Magician's problem is that a query must be selected \textit{before} it is known whether it is active, and, if so, is irrevocably served.  The goal is to select each query with an ex-ante probability at least $\gamma$.  Our problem can be reinterpreted as selecting a $\gamma$-measure of sample paths on which each query should be served whenever it is active, which is completely equivalent.  Therefore, all of our results also hold for Alaei's Magician problem and its applications.
} his $k$-unit Magician problem, showing his $\gamma$-Conservative Magician to be optimal, and, importantly, showing how to find the optimal value $\gamma=\sgk$, which is greater than the value of $\gamma=1-\frac{1}{\sqrt{k+3}}$, for all $k>1$.
This improves all of the guarantees for combinatorial auctions, summarized in \citet{alaei2014bayesian}, that depend on this value of $\gamma$.

\section{$k$-unit Prophet Inequalities}\label{1overkspecialcase}
For each $k$, we derive the
tight guarantee for the $k$-unit prophet inequality problem with respect to the LP upper bound, or equivalently the
optimal solution $\sgk$ to the $k$-unit OCRS problem.  Note that our values $\sgk$ strictly exceed $1-\frac{1}{\sqrt{k+3}}$ for all $k>1$, and hence we also improve the best-known prophet inequalities for all $k>1$.
The structure of our proof follows the three steps outlined in \Cref{sec:newTechProph}. In a preliminary version \citep{jiang2022tight} of this work, we illustrate our approach for a special case $k=2$.

\subsection{LP Formulation of $k$-unit OCRS Problem}\label{sec:LPkOCRS}

We first present a new LP formulation of the $k$-unit OCRS problem, with the vector $\bm{p}$ satisfying $\sum_{t=1}^Tp_t\leq k$.
We name our LP as $\LCkunit(\bm{p})$.
\begin{align}
\LCkunit(\bm{p})=~  &\max ~~~  \theta \label{dualdual} \\
  &~~\text{s.t.}~~~  \theta\leq \frac{\sum_{l=1}^{k}x_{l,t}}{\p_t}~~~\forall t \tag{\ref*{dualdual}a} \label{ddconstraint5} \\
  &~~~~~~~~~ x_{1,t}\leq \p_t\cdot (1-\sum_{\tau<t}x_{1,\tau})~~~\forall t \tag{\ref*{dualdual}b}\label{ddconstraint6}\\
  &~~~~~~~~~ x_{l,t}\leq \p_t\cdot \sum_{\tau<t}(x_{l-1,\tau}-x_{l,\tau})~~~\forall t, \forall l=2,\dots,k \tag{\ref*{dualdual}c}\label{ddconstraint7}\\
  &~~~~~~~~~\theta, x_{1,t}\geq0, x_{2,t}\geq0, \dots, x_{k,t}\geq0 \nonumber.
\end{align}
Here, the variable $\theta$ can be interpreted as guarantee $\gamma$ in the $k$-unit OCRS problem and $x_{l,t}$ can be interpreted as the ex-ante probability of serving query $t$ as the $l$-th one. Then, constraint \eqref{ddconstraint5} guarantees that each query $t$ is served with an ex-ante probability $\theta\cdot\p_t$. Moreover, it is easy to see that the term $\sum_{\tau<t}x_{l-1,\tau}$ can be interpreted as the probability that the number of served queries has ``reached'' $l-1$ during the first $t-1$ periods, while the term $\sum_{\tau<t}x_{l,\tau}$ can be interpreted as the probability that the number of served queries is larger than $l-1$. Then, the term $\sum_{\tau<t}(x_{l-1,\tau}-x_{l,\tau})$ denotes the probability that the number of served queries is $l-1$ at the beginning of period $t$. Similarly, the term $1-\sum_{\tau<t}x_{1,\tau}$ denotes the probability that no query is served at the beginning of period $t$. Further note that each query $t$ can be served only after it becomes active, which happens independently with probability $\p_t$, and hence we get constraint \eqref{ddconstraint6} and \eqref{ddconstraint7}. 

The ``$\gamma$-Conservative Magician'' procedure of \citet{alaei2011bayesian} implies a feasible solution to $\LCkunit(\bm{p})$,
for any $\bm{\p}$ satisfying $\sum_{t=1}^{T}\p_t\leq k$, despite being presented in the different context of the Magician's problem. We now describe this implied solution
in \Cref{constructprophet}, which is based on a predetermined $\theta$. In general, our approach would continuously increase the value of $x_{l,t}$ from $0$ until one of the constraints \eqref{ddconstraint5}, \eqref{ddconstraint6} and \eqref{ddconstraint7} hold with equality, for each $l=1,\dots,k$ and each $t=1,\dots, T$. To be more specific, we define $t_1=0$ and then sequentially for each $l=1,\dots,k$, we remain $x_{l,t}=0$ for $t\leq t_l$ and increase the value of $x_{l,t}$ until the constraint \eqref{ddconstraint5} is binding sequentially for each $t>t_l$, until a time index $t_{l+1}$ such that constraint \eqref{ddconstraint7} is going to be violated. Then, sequentially for each $t>t_{l+1}$, we increase the value of $x_{l,t}$ such that constraint \eqref{ddconstraint7} holds with equality. The final algorithm for the multi-resource setting is presented in \Cref{sec:FinalAlgorithm}.

\begin{algorithm}
\caption{Pre-processed algorithm for the $k$-unit OCRS problem}
\begin{algorithmic}[1]
\State \textbf{Input}: a parameter $\theta$ and the probability sequence $\bm{p}$.
\State For a fixed $\theta\in[0,1]$, we define $x_{1,t}(\theta)=\theta\cdot\p_t$ from $t=1$ up to $t=t_2$, where $t_2$ is defined as the first time among $\{1,\dots,T\}$ such that $\theta>1-\sum_{t=1}^{t_2}\theta\cdot\p_t$ and if such a $t_2$ does not exist, we denote $t_2=T$. Then we define $x_{1,t}(\theta)=\p_t\cdot(1-\sum_{\tau=1}^{t-1}x_{1,\tau}(\theta))$ from $t=t_2+1$ up to $t=T$.
\For{$l=2,3,\dots, k-1$}
    \State Define $x_{l,t}(\theta)=0$ from $t=1$ up to $t=t_l$.
    \State Define $x_{l,t}(\theta)=\theta\cdot\p_t-\sum_{v=1}^{l-1}x_{v,t}(\theta)$ from $t=t_l+1$ up to $t=t_{l+1}$, where $t_{l+1}$ is defined as the first time among $\{1,\dots,T\}$ such that
    \[
    \theta\cdot\p_{t_{l+1}+1}-\sum_{v=1}^{l-1}x_{v,t_{l+1}+1}(\theta)>\p_{t_{l+1}+1}\cdot\sum_{t=1}^{t_{l+1}}(x_{l-1,t}(\theta)-x_{l,t}(\theta))
    \]
    and if such a $t_{l+1}$ does not exist, we denote $t_{l+1}=T$.
    \EndFor
\State Define $x_{k,t}(\theta)=0$ from $t=1$ up to $t=t_k$ and define $x_{k,t}(\theta)=\theta\cdot\p_t-\sum_{v=1}^{k-1}x_{v,t}(\theta)$ from $t=t_k+1$ up to $t=T$.
\State \textbf{Output}: the candidate solution $\{ x_{l,t}(\theta) \}$.
\end{algorithmic}    
\label{constructprophet}
\end{algorithm}

\textbf{Remarks about \Cref{constructprophet}.}
\begin{enumerate}
\item In \Cref{constructprophet} the policy is described as an LP solution (to later aid our proof of optimality).  The policy is actually implemented as follows: when each query $t=1,\ldots,T$ arrives, conditional on query $t$ being active and $l-1$ queries having already been served, serve query $t$ w.p.~$\frac{x_{1,t}(\theta)}{\p_t\cdot (1-\sum_{\tau<t}x_{1,\tau}(\theta))}$ if $l=1$, and w.p.~$\frac{x_{l,t}(\theta)}{\p_t\cdot \sum_{\tau<t}(x_{l-1,\tau}(\theta)-x_{l,\tau}(\theta))}$ if $l=2,\ldots,k$.  It will be preserved that $\Pr[\text{$l-1$ queries having already been served when query $t$ arrives}]$ equals $1-\sum_{\tau<t}x_{1,\tau}(\theta)$ if $l=1$ and $\sum_{\tau<t}(x_{l-1,\tau}(\theta)-x_{l,\tau}(\theta))$ if $l=2,\ldots,k$, and since query $t$ is active independently w.p.~$p_t$, it will become the $l$'th query served with probability exactly $x_{l,t}(\theta)$.
\item This will only lead a valid policy if parameter $\theta$ and the resulting values of $x_{l,t}(\theta)$ from \Cref{constructprophet} describe a feasible solution to $\LCkunit(\bm{p})$.  We will subsequently characterize the maximum feasible $\theta$, i.e.\ optimal $\theta^*$ for a given vector of active probabilities $\bm{\p}$.
\item Finally, we prove that setting $\theta=\gamma^*_k$ (computed in \Cref{sec:worstkunit}) is always feasible.  In this case, we note that the values $x_{1,t}(\gamma^*_k),\ldots,x_{k,t}(\gamma^*_k)$ for each query $t$ can actually be constructed on-the-fly, and the policy only needs to discover the value of each $p_t$ after making decisions for query $t$.  That is, our results hold even if an adaptive adversary chooses at each time $t$ the next query to arrive (see \citet{kleinberg2019matroid} for a precise definition of this "online" adversary).
\end{enumerate}

\subsection{Characterizing the Optimal LP Solution for a Given \textit{p}}\label{sec:optimalsolution}

In what follows, we identify $\theta^*$ for a fixed $\bm{\p}$, prove the optimality of $\{\theta^*, x_{l,t}(\theta^*)\}$, and describe the procedure of computing $\sgk$.
We begin by proving the condition on $\theta$ for $\{\theta, x_{l,t}(\theta)\}$ to be a feasible solution to $\LCkunit(\bm{p})$.

\begin{lemma}\label{feasiproposition}
For any vector $\bm{\p}$,
there exists a unique $\theta^*\in[0,1]$ such that $\sum_{\tau=1}^{T-1}x_{k,\tau}(\theta^*)=1-\theta^*$. Moreover, for any $\theta\in[0,\theta^*]$, $\{\theta, x_{l,t}(\theta)\}$ is a feasible solution to $\LCkunit(\bm{p})$.
\end{lemma}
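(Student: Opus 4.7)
The plan is to reduce feasibility of the entire candidate solution $\{\theta, x_{l,t}(\theta)\}$ to a single scalar inequality involving only the top layer $k$, and then analyze that inequality as a function of $\theta$.

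First I would verify by induction on $l$ that the construction in Definition~\ref{constructprophet} gives a well-defined, non-negative family $\{x_{l,t}(\theta)\}$ satisfying the identity $\sum_{l=1}^k x_{l,t}(\theta) = \theta p_t$ for every $t$, so constraint~\eqref{ddconstraint5} is tight with equality. The constraints~\eqref{ddconstraint6}-\eqref{ddconstraint7} for layers $l < k$ then hold automatically: trivially when $x_{l,t}=0$ (i.e.\ $t \leq t_l$), at equality on the upper-bound phase ($t > t_{l+1}$), and with slack on the middle phase $t_l < t \leq t_{l+1}$ by the very property used to define $t_{l+1}$. Hence the only constraint that can possibly fail is~\eqref{ddconstraint7} for $l = k$.

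Next I reduce that remaining constraint to a scalar condition. Writing $Q_l^{(t)}(\theta) := \sum_{\tau<t} x_{l,\tau}(\theta)$ with convention $Q_0^{(t)} := 1$, for any $t > t_k$ every layer $v < k$ is past its upper-bound transition, so $x_{v,t} = p_t (Q_{v-1}^{(t)} - Q_v^{(t)})$; summing over $v < k$ telescopes to $p_t (1 - Q_{k-1}^{(t)})$, and combining with $\sum_l x_{l,t} = \theta p_t$ gives $x_{k,t}(\theta) = p_t (\theta - 1 + Q_{k-1}^{(t)}(\theta))$. Substituting into the upper bound $x_{k,t} \leq p_t(Q_{k-1}^{(t)} - Q_k^{(t)})$ simplifies the constraint to $Q_k^{(t)}(\theta) \leq 1 - \theta$. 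Since $Q_k^{(t)}$ is non-decreasing in $t$, this is tightest at $t=T$, so feasibility of the whole construction is equivalent to $G(\theta) \leq 1$, where $G(\theta) := \theta + \sum_{\tau=1}^{T-1} x_{k,\tau}(\theta)$.

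Finally I would analyze $G$ on $[0,1]$. Continuity holds because each formula-switch in the piecewise construction happens exactly at a point where the two adjacent formulas coincide (this is how each $t_{l+1}$ is chosen). One checks $G(0)=0$ and $G(1) = 1 + \sum_\tau x_{k,\tau}(1) \geq 1$, so the Intermediate Value Theorem yields a root $\theta^* \in [0,1]$ of $G(\theta)=1$. Uniqueness of $\theta^*$ and the bound $G(\theta) \leq 1$ on $[0,\theta^*]$ follow from strict monotonicity of $G$. The trivial term $\theta$ is strictly increasing, so it suffices to show that $\sum_\tau x_{k,\tau}(\theta) = Q_k^{(T)}(\theta)$ is non-decreasing in $\theta$; this is the main obstacle, because individual $x_{l,t}(\theta)$ need not be monotone in $\theta$ (once a layer enters its upper-bound phase, raising $\theta$ consumes capacity sooner and depresses $x_{l,t}$ later). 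I would establish non-decreasingness via a coupling argument: interpret the construction as the $\gamma$-Conservative Magician of \citet{alaei2011bayesian} with $\gamma = \theta$, and couple two such magicians with parameters $\theta_1 < \theta_2$ on a common arrival sequence so that every query served by the $\theta_1$-magician is also served by the $\theta_2$-magician; then the total number of queries served under $\theta_2$ stochastically dominates that under $\theta_1$, yielding $Q_k^{(T)}(\theta_2) \geq Q_k^{(T)}(\theta_1)$ as required. The second claim of the lemma then follows immediately from the reduction in the previous paragraph, since $\theta \leq \theta^*$ implies $G(\theta) \leq G(\theta^*) = 1$.
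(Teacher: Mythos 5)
Your high-level skeleton matches the paper's proof exactly: the feasibility check collapses to the single inequality $\sum_{\tau<T}x_{k,\tau}(\theta)\le 1-\theta$ (this is the paper's Lemma~\ref{feasilemma2}), and existence and uniqueness of $\theta^*$ are then obtained from continuity, the endpoint values $G(0)=0$, $G(1)\ge 1$, and monotonicity of $\theta\mapsto\sum_\tau x_{k,\tau}(\theta)$. Where you diverge is the monotonicity step, which is the real crux, and there you have a gap.

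The paper's Lemma~\ref{feasilemma3} proves monotonicity by a fairly delicate double induction on $l$ and $t$, and crucially it tracks \emph{two} invariants simultaneously: $y_{l,t}(\theta)\le y_{l,t}(\theta+\Delta)$ together with a budget bound $\sum_{v\le l}y_{v,t}(\theta+\Delta)\le\sum_{v\le l}y_{v,t}(\theta)+\Delta\sum_{\tau\le t}p_\tau$. The second invariant is what makes the induction close, and it is not something a naive coupling hands you for free. Your proposal instead asserts that one can couple a $\theta_1$-Conservative Magician and a $\theta_2$-Conservative Magician ($\theta_1<\theta_2$) so that \emph{every} query served by the former is served by the latter. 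As stated this is too strong: the two Magicians maintain different utilization distributions and hence different acceptance thresholds $\ell_t^{\theta_1},\ell_t^{\theta_2}$, so on a sample path where the $\theta_2$-counter has pulled ahead of the $\theta_1$-counter it is entirely possible that $U_t^{\theta_1}\le\ell_t^{\theta_1}$ while $U_t^{\theta_2}>\ell_t^{\theta_2}$, i.e.\ the $\theta_1$-Magician accepts and the $\theta_2$-Magician rejects. That does not contradict pointwise dominance of the counters, but it does falsify the subset claim you are invoking. Moreover even the weaker pointwise coupling (counters ordered at all times) runs into trouble precisely on the boundary case $U_t^{\theta_1}=U_t^{\theta_2}=\ell_t^{\theta_1}=\ell_t^{\theta_2}$, where one must compare the two threshold-randomization probabilities, and it is not clear the $\theta_2$-Magician randomizes more generously there. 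So you would need to construct the coupling much more carefully, or else fall back on the direct algebraic argument of the paper; as written, the proposal replaces the hard part of the proof with a plausible-sounding but unverified stochastic-dominance heuristic.
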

The proof is relegated to \Cref{prooffeasiproposition}.
We now prove that $\{\theta^*, x_{l,t}(\theta^*)\}$ is an optimal solution to $\LCkunit(\bm{p})$. The dual of $\LCkunit(\bm{p})$ can be formulated as follows:
\begin{align}
 \DCkunit(\bm{\p})~=~ \min \ \ & \sum_{t=1}^{T}\p_t\cdot \beta_{1,t} \label{dual} \\
  \text{s.t.}\ & \beta_{l,t}+\sum_{\tau>t}\p_\tau\cdot (\beta_{l,\tau}-\beta_{l+1,\tau})-\xi_t\geq0,~~\forall t=1,\dots, T, ~\forall l=1,2,\dots,k-1\nonumber\\
  &\beta_{k,t}+\sum_{\tau>t}\p_\tau\cdot\beta_{k,\tau}-\xi_t\geq0,~~\forall t=1,\dots, T\nonumber \\
  & \sum_{t=1}^{T}\p_t\cdot \xi_t=1 \nonumber\\
  & \beta_{l,t}\geq0, \xi_t\geq0,~~\forall t=1,\dots,T, \forall l=1,\dots,k\nonumber.
\end{align}
To prove the optimality of $\{\theta^*, x_{l,t}(\theta^*)\}$, we will construct a feasible dual solution $\{\beta^*_{l,t}, \xi^*_t\}$ to $\DCkunit(\bm{p})$ such that complementary slackness conditions hold for the primal-dual pair $\{\theta^*, x_{l,t}(\theta^*)\}$ and $\{\beta^*_{l,t}, \xi^*_t\}$; then, the well-known primal-dual optimality criterion \citep{dantzig2006linear} establishes that $\{\theta^*, x_{l,t}(\theta^*)\}$ and $\{\beta^*_{l,t}, \xi^*_t\}$ are the optimal primal-dual pair to $\LCkunit(\bm{p})$ and $\DCkunit(\bm{p})$, which completes our proof. The above arguments are formalized in the following \Cref{Prophetoptimaltheorem}. The proof of \Cref{Prophetoptimaltheorem} is completed based on an induction argument, with details presented in \Cref{newProoftheorem3}. In \Cref{Prooftheorem3}, we also give an alternative constructive proof of \Cref{Prophetoptimaltheorem}, with the formulation of $\{\beta^*_{l,t}, \xi^*_t\}$ given explicitly. 
\begin{theorem}\label{Prophetoptimaltheorem}
The solution $\{\theta^*, x_{l,t}(\theta^*)\}$ is optimal for $\LCkunit(\bm{p})$, where $\theta^*$ is the unique solution to $\sum_{\tau=1}^{T-1}x_{k,\tau}(\theta^*)=1-\theta^*$.
\end{theorem}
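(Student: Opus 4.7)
The plan is to invoke the primal--dual sufficiency criterion: Lemma~\ref{feasiproposition} already exhibits $\{\theta^*, x_{l,t}(\theta^*)\}$ as a feasible solution to $\text{Dual}(\bm{p}, k)$, so it suffices to construct a feasible solution $\{\beta^*_{l,t}, \xi^*_t\}$ for $\text{Primal}(\bm{p}, k)$ with matching objective, or equivalently one that satisfies complementary slackness (CS) with the primal candidate. Strong LP duality then forces both to be optimal.

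First, I would extract the tight/slack structure that CS imposes. By the Conservative Magician construction in Definition~\ref{constructprophet}, the identity $\sum_{l=1}^{k} x_{l,t}(\theta^*) = \theta^* \p_t$ holds at every $t$, so \eqref{ddconstraint5} is always tight and CS places no restriction on the sign of $\xi^*_t$. Partitioning $\{1,\dots,T\}$ into phases $(t_l, t_{l+1}]$ for $l = 1,\dots,k$, in phase $l$ we have $x_{v,t}(\theta^*) = 0$ for $v > l$, while for $v < l$ the upper bound \eqref{ddconstraint7} (or \eqref{ddconstraint6} for $v=1$) is tight by step~2(c) of the definition. The CS conditions therefore force the $\text{Primal}(\bm{p}, k)$ constraint indexed by $(v, t)$ to hold as an equality whenever $t > t_v$, and allow me to set $\beta^*_{v, t} = 0$ whenever $t \leq t_v$.

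Next, I would construct $\{\beta^*_{l,t}, \xi^*_t\}$ explicitly. After fixing $\beta^*_{v,t} = 0$ for $t \leq t_v$, the CS equalities become a backward-in-time recursion determining $\beta^*_{v,t}$ on $t > t_v$ from $\xi^*_t$ and the higher-layer variables. I would solve layer $k$ first, where the equality reduces to $\beta^*_{k,t} + \sum_{\tau > t} \p_\tau \beta^*_{k,\tau} = \xi^*_t$ on phase $k$, and then propagate downward in $l$ using $\beta^*_{l,t} + \sum_{\tau > t} \p_\tau (\beta^*_{l,\tau} - \beta^*_{l+1,\tau}) = \xi^*_t$. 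Guided by the substitution after~\eqref{dual}, where $\xi_t$ represents the worst-case reward $\hat r_t$, I would take $\xi^*_t$ piecewise constant in the phases and normalized so that $\sum_t \p_t \xi^*_t = 1$; non-negativity of the resulting $\beta^*_{l,t}$ should then follow from $0 \leq \theta^* \leq 1$ and the monotone order in which phases are entered.

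Finally, I would verify (i) that the constraints not forced tight by CS still hold as valid inequalities, in particular those indexed by $(v, t)$ with $t \leq t_v$, and (ii) the objective-value equality $\theta^* = \sum_t \p_t \beta^*_{1,t}$, which after telescoping should reduce exactly to the defining relation $\sum_{\tau=1}^{T-1} x_{k,\tau}(\theta^*) = 1 - \theta^*$ from Lemma~\ref{feasiproposition}. The main obstacle is (i) together with the corresponding non-negativity at the phase transitions $t = t_l$: a Primal constraint that is slack just before the transition becomes tight just after it, and reconciling this with the backward recursion is delicate. It is precisely the defining equation for $\theta^*$, rather than a generic $\theta$, that produces the cancellations ensuring the recursion stays non-negative and feasible through the last phase and that makes the objective equality hold by telescoping; for any other value of $\theta$, either some $\beta^*_{l,t}$ would become negative or the telescoping would fail, so the special choice $\theta = \theta^*$ is essential.
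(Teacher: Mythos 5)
Your plan is the same as the paper's: exhibit a Primal$(\bm{p},k)$-feasible pair $\{\beta^*_{l,t},\xi^*_t\}$ satisfying complementary slackness with $\{\theta^*, x_{l,t}(\theta^*)\}$, built from the phase structure with $\beta^*$ vanishing early, $\xi^*$ piecewise constant across phases, and the remaining $\beta^*$-values determined by a backward recursion—this is exactly what the paper carries out explicitly in \Cref{Prooftheorem3}. One index to watch when you flesh it out: complementary slackness forces $\beta^*_{l,t}=0$ on the larger range $t\le t_{l+1}$ (not merely $t\le t_l$), since the Dual constraint indexed by $(l,t)$ can remain slack throughout phase $(t_l,t_{l+1}]$, and the paper indeed enforces this in condition \eqref{condition1}.
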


\Cref{Prophetoptimaltheorem} shows that \Cref{constructprophet} constructs an optimal solution to $\LCkunit(\bm{p})$, \textit{as long as the $\theta$ is set as the optimal $\theta^*$}, as defined in \Cref{feasiproposition}.  This optimal $\theta^*$ is uniquely defined based on $\bm{\p}$.  \Cref{feasiproposition} further shows that any $\theta\le\theta^*$ is feasible, and hence if we can find a $\theta$ that is no greater than the $\theta^*$ arising from any $\bm{\p}$, then \Cref{constructprophet} will correspond to a $\price$-agnostic procedure for the $k$-unit prophet inequality or OCRS problem with a guarantee of $\theta$.

\subsection{Characterizing the Worst-case Distribution}\label{sec:worstkunit}
Our goal is now to find the $\bm{\p}$ such that the optimal objective value $\theta^*$ of $\LCkunit(\bm{p})$ in \eqref{dualdual} reaches its minimum. We would like to characterize the worst-case distribution and then compute the guarantee.

We first characterize the worst-case distribution for which the optimal objective value of $\LCkunit(\bm{p})$ reaches its minimum. Obviously, it is enough for us to consider only the $\bm{p}$ satisfying $\sum_{t=1}^{T}\p_t=k$. We show in the following lemma that splitting one query into two queries can only make the optimal objective value of $\LCkunit(\bm{p})$ become smaller, and thus, in the worst-case distribution, each $\p_t$ should be infinitesimally small.
\begin{lemma}\label{splititemlemma}
For any $\bm{\p}=(\p_1,\dots,\p_T)$ satisfying $\sum_{t=1}^{T}\p_t=k$, and any $\sigma\in[0,1]$, $1\leq q\leq T$, if we define a new sequence of arrival probabilities $\tilde{\bm{p}}=(\tilde{p}_1,\dots,\tilde{p}_{T+1})$ such that
\[\begin{aligned}
&\tilde{p}_t=\p_t~~~\forall t<q,~~~~\tilde{p}_q=\p_q\cdot \sigma,~~~~\tilde{p}_{q+1}=\p_q\cdot (1-\sigma)\text{~~and~~}\tilde{p}_{t+1}=\p_t~~~\forall q+1\leq t\leq T,
\end{aligned}\]
then it holds that $\LCkunit(\bm{\p})\geq\LCkunit(\tilde{\bm{p}})$.
\end{lemma}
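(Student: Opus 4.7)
The plan is to prove the lemma by LP duality, working with the primal minimization LP $\text{Primal}(\bm{\p}, k)$ in \eqref{dual} rather than the maximization $\text{Dual}(\bm{\p}, k)$. Strong duality gives $\text{Dual}(\bm{\p}, k) = \text{Primal}(\bm{\p}, k)$ (and likewise for $\tilde{\bm{\p}}$), so it suffices to exhibit a feasible solution for $\text{Primal}(\tilde{\bm{\p}}, k)$ with objective equal to $\text{Primal}(\bm{\p}, k)$; this yields $\text{Primal}(\tilde{\bm{\p}}, k) \leq \text{Primal}(\bm{\p}, k)$ and hence $\text{Dual}(\tilde{\bm{\p}}, k) \leq \text{Dual}(\bm{\p}, k)$, as required.

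Let $\{\beta^*_{l,t}, \xi^*_t\}$ be the optimal primal solution to $\text{Primal}(\bm{\p}, k)$ arising from the DP in \eqref{DPformula}, so $\beta^*_{l,t} = V^*_t(1 - \tfrac{l-1}{k}) - V^*_{t+1}(1 - \tfrac{l-1}{k})$ (optimal per \cite{adelman2007dynamic}, as noted after \eqref{dual}). For the split instance with $T+1$ queries indexed by $s$, I would simply duplicate the variables at position $q$: set $\tilde{\beta}_{l,s} = \beta^*_{l, \phi(s)}$ and $\tilde{\xi}_s = \xi^*_{\phi(s)}$, where $\phi(s) = s$ for $s \leq q$, $\phi(q+1) = q$, and $\phi(s) = s-1$ for $s \geq q+2$. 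Because $\tilde{p}_q + \tilde{p}_{q+1} = p_q$, the new objective $\sum_s \tilde{p}_s \tilde{\beta}_{1,s}$ equals the original $\sum_t p_t \beta^*_{1,t} = \text{Primal}(\bm{\p}, k)$, and the normalization $\sum_s \tilde{p}_s \tilde{\xi}_s = 1$ is preserved automatically.

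I would then verify the inequality constraints of $\text{Primal}(\tilde{\bm{\p}}, k)$. For $s \notin \{q, q+1\}$, a short reindexing argument collapses the pair of terms at $\tau \in \{q, q+1\}$ in $\sum_{\tau > s}\tilde{p}_\tau(\tilde{\beta}_{l,\tau} - \tilde{\beta}_{l+1,\tau})$ back into the single original term at $\tau = q$, so the split constraint is identical to the original constraint at time $\phi(s)$, which holds by feasibility of $\{\beta^*, \xi^*\}$. At $s = q+1$ the constraint reduces verbatim to the original constraint at $t = q$. The critical case is $s = q$: here the LHS equals the original LHS at $t = q$ plus an extra non-negative term $\tilde{p}_{q+1}(\beta^*_{l,q} - \beta^*_{l+1,q})$ for $l \leq k-1$, or $\tilde{p}_{q+1}\beta^*_{k,q}$ for $l = k$.

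I expect the main (and essentially only) obstacle to be showing that this extra term is non-negative, i.e., the monotonicity $\beta^*_{1,q} \geq \beta^*_{2,q} \geq \cdots \geq \beta^*_{k,q} \geq 0$. The plan here is to observe that $\beta^*_{l,q}$ is exactly the one-step DP pickup $\mathbb{E}_{\omega_q}[\max\{0,\, r(\omega_q) + V^*_{q+1}(c - \tfrac{1}{k}) - V^*_{q+1}(c)\}]$ evaluated at $c = 1 - (l-1)/k$, and then prove by backward induction on $t$ that $V^*_t(\cdot)$ is concave on $\{0, \tfrac{1}{k}, \ldots, 1\}$ (the recursion \eqref{DPformula} preserves concavity by a standard argument). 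Concavity makes $V^*_{q+1}(c) - V^*_{q+1}(c - \tfrac{1}{k})$ non-increasing in $c$, so the integrand inside the max is non-decreasing in $c$; taking the positive part and expectation preserves monotonicity, hence the pickup is non-decreasing in $c$ and thus non-increasing in $l$. Combined with the primal non-negativity $\beta^*_{k,q} \geq 0$, every constraint of $\text{Primal}(\tilde{\bm{\p}}, k)$ is satisfied, completing the proof.
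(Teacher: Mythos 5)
Your reduction—duplicating the primal variables at time $q$, checking that the objective and normalization are preserved because $\tilde{p}_q+\tilde{p}_{q+1}=p_q$, and observing that the only new constraints to verify are those at $s=q$, which acquire the extra term $\tilde{p}_{q+1}\bigl(\beta^*_{l,q}-\beta^*_{l+1,q}\bigr)$—is exactly the skeleton the paper uses. The genuine divergence is in how you establish the monotonicity $\beta^*_{1,q}\ge\cdots\ge\beta^*_{k,q}\ge 0$. The paper proves it from the \emph{explicit} dual-optimal solution $\{\beta^*_{l,t},\xi^*_t\}$ constructed in \eqref{constructiondual}, invoking Lemma~\ref{duallemma1} (the inequality $\delta_{l_1,l_2}\ge\delta_{l_1+1,l_2}$ among the auxiliary coefficients) to read off $\beta^*_{l-1,q}\ge\beta^*_{l,q}$ directly from the formula. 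You instead take the DP-derived optimal solution, interpret $\beta^*_{l,q}$ as the one-step pickup $p_q\max\{0,\,\xi^*_q-(V^*_{q+1}(c)-V^*_{q+1}(c-\tfrac{1}{k}))\}$ at $c=1-\tfrac{l-1}{k}$, and deduce the monotonicity from discrete concavity of $V^*_{t}(\cdot)$ in remaining capacity—a property that does propagate through the recursion \eqref{DPformula} in the $k$-unit case because the marginal-value difference $V^*_{t+1}(c)-V^*_{t+1}(c-\tfrac1k)$ is non-increasing in $c$, making the accept/reject threshold structure well-ordered. This is a valid and self-contained alternative: it is more conceptual and avoids the combinatorial bookkeeping with the $B_{l,q}$ and $\delta_{l_1,l_2}$ coefficients, at the cost of needing the (standard but not spelled out in this paper) concavity-preservation argument and leaning on the cited fact from \citet{adelman2007dynamic} that an optimal solution of $\text{Primal}(\bm{p},k)$ realizes the DP values. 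The paper's route buys self-containedness given the infrastructure already built for Theorem~\ref{Prophetoptimaltheorem}; yours buys transparency about \emph{why} the monotonicity holds.
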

The proof is relegated to \Cref{ProofLemma8}.
Now, for each $\bm{\p}=(\p_1,\dots,\p_T)$ satisfying $\sum_{t=1}^{T}\p_t=k$, we assume without loss of generality that $\p_t$ is a rational number for each $t$, i.e., $\p_t=\frac{n_t}{N}$ where $n_t$ is an integer for each $t$ and $N$ is an integer denoting the common denominator. We first split $\p_1$ into $\frac{1}{N}$ and $\frac{n_1-1}{N}$ to form a new sequence of arrival probabilities. By Lemma \ref{splititemlemma}, we know that such an operation can only decrease the optimal objective value of $\LCkunit(\bm{p})$. We then split $\frac{n_1-1}{N}$ into $\frac{1}{N}$ and $\frac{n_1-2}{N}$ and so on. In this way, we split $\p_1$ into $n_1$ copies of $\frac{1}{N}$ to form a new sequence of arrival probabilities and Lemma \ref{splititemlemma} guarantees that the optimal objective value of $\LCkunit(\bm{p})$ can only become smaller. We repeat the above operation for each $t$. Finally, we form a new sequence of arrival probabilities, denoted by $\bm{\p}^{N}=(\frac{1}{N},\dots,\frac{1}{N})\in\mathbb{R}^{Nk}$, and we have $\LCkunit(\bm{\p})\geq\LCkunit(\bm{\p}^N)$. Intuitively, when $N\rightarrow\infty$, then the optimal objective value of $\LCkunit(\bm{p})$ reaches its minimum. Note that when $N\rightarrow\infty$, we always have $\sum_{t=1}^{Nk}\p_t^{N}=k$, and then the Bernoulli arrival process approximates a Poisson process with rate $1$ over the time interval $[0,k]$. The above argument implies that the worst-case arrival process is a Poisson process.

Under the Poisson process, for each fixed ratio $\theta\in[0,1]$, our solution in \Cref{constructprophet} can be interpreted as a solution to an ordinary differential equation (ODE). We further note that for $\bm{p}^N$ and any $\theta\in[0,1]$, our solution in \Cref{constructprophet} can be regarded as the solution obtained from applying Euler's method to solve this ODE by uniformly discretizing the interval $[0, k]$ into $Nk$ discrete points. Then, for any fixed ratio $\theta\in[0,1]$, after showing the Lipschitz continuity of the function defining this ODE, we can apply the global truncation error theorem of Euler's method (Theorem 212A in \citet{butcher2008numerical}) to establish the solution under the Poisson process as the limit of the solution under $\bm{p}^N$ when $N\rightarrow\infty$. Based on this convergence, we can prove that the optimal value under the Poisson process is equivalent to $\lim_{N\rightarrow\infty}\LCkunit(\bm{p}^N)$, which is the optimal ratio we are looking for.

For general $k$, the values of $\sgk$ have previously been shown in \citet{wang2018online} through the analysis of a ``reflecting'' Poisson process. However, we show that these values $\sgk$ are \textit{optimal}, deriving them instead from $\LCkunit(\bm{p})$.
Moreover, \citet{wang2018online} assume Poisson arrivals to begin with, whereas we allow for arbitrary probability vectors $\price$ and show that the limiting Poisson case is the worst case.

Specifically in the case of $k=2$, we construct an example showing that relative to the weaker prophet benchmark $\mathbb{E}_{\bm{I}\sim\bm{F}}[V^{\text{off}}(\bm{I})]$ (the offline optimum itself rather than the LP upper bound $\text{UP}(\mathcal{H})$), it is not possible to do much better than $\gamma^*_2$.
Our construction is based on adapting the tight example relative to the stronger benchmark $\text{UP}(\mathcal{H})$. 
We note that this suggests that there is \textit{some separation} between optimal ex-ante vs. non-ex-ante prophet inequalities when $k>1$, which is not the case when $k=1$ (because they are both 1/2).
The formal proof of \Cref{Upperprophetproposition} below is relegated to \Cref{Proofproposition3}.

\begin{proposition}\label{Upperprophetproposition}
For the $2$-unit prophet inequality problem, it holds that $\inf_{\mathcal{H}}\frac{\mathbb{E}_{\pi, \bm{I}\sim\bm{F}}[V^\pi(\bm{I})]}{\mathbb{E}_{\bm{I}\sim\bm{F}}[V^{\text{off}}(\bm{I})]}\leq0.6269$ for any online algorithm $\pi$, while $\gamma^*_2\approx0.6148$.
\end{proposition}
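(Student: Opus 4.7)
The plan is to construct a specific instance $\mathcal{H}$ of the $2$-unit prophet-inequality problem on which every online policy collects at most $0.6269$ times the prophet's expected value. Two effects have to be balanced: tightness of the LP bound against the best online algorithm (already shown in Section~4 to equal $\gamma^*_2 \approx 0.6148$ in the Poisson limit), and tightness of the prophet benchmark against the LP, which one wants as close to $1$ as possible so that the ratio $\mathbb{E}[V^\pi(\bm{I})]/\mathbb{E}[V^{\mathrm{off}}(\bm{I})]$ approaches $\gamma^*_2$ from above rather than being inflated by prophet-vs-LP slackness.

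Starting from the worst-case sequence $\bm{p}^N=(1/N,\ldots,1/N)\in\mathbb{R}^{2N}$ identified in Section~4.3 and invoking Theorem~\ref{Prophetoptimaltheorem} together with complementary slackness, one extracts primal-optimal variables $\xi^*_t$ of $\text{Primal}(\bm{p}^N,2)$; normalizing so that $\text{UP}(\mathcal{H})=1$, I would take $\xi^*_t$ as the reward of query $t$ conditional on its activity (and $0$ otherwise). On this instance, the optimal online value converges to $\gamma^*_2$ as $N\to\infty$ while the LP value stays at $1$. The prophet's value is, in the Poisson limit, computable from the top-two-order-statistic functional applied to the limiting reward profile $\xi^*(\cdot)$ marked against a rate-$1$ Poisson process on $[0,2]$. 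If the resulting ratio $\gamma^*_2/\mathbb{E}[V^{\mathrm{off}}(\bm{I})]$ already falls at or below $0.6269$, we are done; otherwise I would adapt the instance by appending a ``bait'' query with probability $\varepsilon$ and reward proportional to $1/\varepsilon$, chosen so that in the limit $\varepsilon\to 0$ the bait is almost surely one of the prophet's top two picks but contributes only a $\gamma^*_2$ fraction of its expected reward to the online optimum. Calibrating the bait's expected contribution then tunes the prophet benchmark and pins the ratio at exactly $0.6269$.

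The principal obstacle will be the exact numerical calibration: $0.6269$ is not a closed-form constant but the output of the ODE that defines $\xi^*(\cdot)$ in Section~4.3, combined with a Poisson top-two-order-statistic computation and the bait scaling. Rigorously justifying the bound requires carrying the Lipschitz/Euler discretization argument of Section~4.3 through the top-two functional---which is itself Lipschitz in the reward vector, since replacing a single coordinate changes the sum of the two largest coordinates by at most the change itself---so that the bound passes cleanly from $\bm{p}^N$ to the Poisson limit and then to the augmented instance. The remainder is careful numerical evaluation of the resulting expression and a monotone-limit argument on the bait size $\varepsilon$, together with a verification that no non-ex-ante online strategy can exploit the added bait to beat $\gamma^*_2$ uniformly.
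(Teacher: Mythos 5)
Your high-level plan shares some DNA with the paper's proof---both start from the idea of adapting the Poisson worst case for $\text{UP}(\mathcal{H})$ and both append a rare, high-reward ``bait'' query at the end. But as written, your proposal has a genuine gap: it is a plan to \emph{search} for an instance rather than a construction of one, and the central step---using the complementary-slackness rewards $\xi^*_t$ from $\text{Primal}(\bm{p}^N,2)$ as the query rewards---does not obviously produce the required bound. Those $\xi^*_t$ certify tightness \emph{relative to the LP benchmark}, but they do not control the gap between the LP and the (weaker) prophet benchmark on that same instance. You acknowledge this tension (``prophet benchmark against the LP, which one wants as close to $1$ as possible'') but then defer it entirely to unspecified ``numerical calibration'' and ``bait scaling,'' with an explicit escape hatch (``if the resulting ratio ... already falls at or below $0.6269$, we are done; otherwise...''). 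That leaves the proof undone: no concrete instance is produced, no prophet value is computed, and no argument is given that the tuning can actually reach $0.6269$.

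The paper sidesteps all of this by working with an explicit three-parameter family that does not come out of the LP dual at all: two \emph{deterministic} queries of reward $1$ at the start (these guarantee the prophet a floor of $2$), a Poisson stream of reward-$r_1$ queries over $[0,1]$ with rate $\lambda$, and a reward-$r_2/\epsilon$ bait arriving with probability $\epsilon$. The prophet's value $\hat V$ is written in closed form (up to $O(\epsilon)$), the online optimum is bounded by a case analysis on how many of the two initial unit-reward queries the policy serves, and the ratio $g(r_1,r_2,\lambda)$ is minimized numerically, giving $r_1=r_2\approx 1.4119$, $\lambda\approx 1.2319$, and the bound $0.6269$. In particular, the two initial deterministic queries---which your $\xi^*_t$-based construction would not naturally produce---are the key device that keeps the prophet-vs-LP slackness under control. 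Your claim that the Lipschitz/Euler machinery of Section~4.3 ``carries through the top-two functional'' is plausible but would need to be proved, and even granting it, you would still be missing the concrete reward profile and the case analysis on the online policy's behavior that actually yields $0.6269$.
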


We now discuss how the construction in \Cref{constructprophet} should be interpreted when the arrival process is a Poisson process. We find it is more convenient to work with the functions $\{\tilde{y}_{l,\theta}(\cdot)\}_{\forall l=1,\dots,k}$ over $[0,k]$, where $\tilde{y}_{l,\theta}(t)$ denotes the ex-ante probability that there is a query served as the $l$-th one during the period $[0,t]$. Note that the variable $x_{l,t}(\theta)$ denotes the ex-ante probability that there is a query accepted as the $l$-th query at time $t$, and hence we have $x_{l,t}(\theta)=d\tilde{y}_{l,\theta}(t)$.  We denote $\tilde{y}_{0,\theta}(t)=1$ for each $t\in[0,k]$. Then the functions $\{\tilde{y}_{l,\theta}(\cdot)\}_{\forall l=1,\dots,k}$ corresponding to the construction in \Cref{constructprophet} under Poisson arrivals can be interpreted as follows.
\begin{defn}\label{ODEdefinition}
Ordinary Differential Equation (ODE) formula under Poisson arrival
\begin{enumerate}
  \item For each fixed $\theta\in[0,1]$, we define $\tilde{y}_{0,\theta}(t)=1$ for each $t\in[0,k]$ and $t_1=0$.
  \item For each $l=1,2,\dots,k-1$, we do the following:
  \begin{enumerate}
    \item $\tilde{y}_{l,\theta}(t)=0$ when $t\leq t_l$.
    \item When $t_l\leq t\leq t_{l+1}$, it holds that
    \begin{equation}\label{ODE1}
     \frac{d\tilde{y}_{l,\theta}(t)}{dt}=\theta-\sum_{v=1}^{l-1}\frac{d\tilde{y}_{v,\theta}(t)}{dt}=\theta-1+\tilde{y}_{l-1,\theta}(t),~~~\forall t_l\leq t\leq t_{l+1},
    \end{equation}
    where $t_{l+1}$ is defined as the first time that $\tilde{y}_{l,\theta}(t_{l+1})=1-\theta$. If such a $t_{l+1}$ does not exist, we denote $t_{l+1}=k$.
    \item When $t_{l+1}\leq t\leq k$, it holds that
    \begin{equation}\label{ODE2}
    \frac{d\tilde{y}_{l,\theta}(t)}{dt}=\tilde{y}_{l-1,\theta}(t)-\tilde{y}_{l,\theta}(t),~~~\forall t_{l+1}\leq t\leq k.
    \end{equation}
  \end{enumerate}
  \item $\tilde{y}_{k,\theta}(t)=0$ if $t\leq t_k$ and $\frac{d\tilde{y}_{k,\theta}(t)}{dt}=\theta-1+\tilde{y}_{k-1,\theta}(t)$ if $t_k\leq t\leq k$.
\end{enumerate}
\end{defn}
Thus, by Theorem \ref{Prophetoptimaltheorem}, the solution to the equation $\tilde{y}_{k,\theta}(k)=1-\theta$ should be the minimum of the optimal objective value of $\LCkunit(\bm{p})$ in \eqref{dualdual}, which is the guarantee $\sgk$ we are looking for. The above arguments are formalized in the following theorem and the proof is relegated to \Cref{Prooftheorem4}. Note that the following \Cref{worstcasetheorem} is our ultimate result for the $k$-unit case, while \Cref{ODEdefinition} characterizes the ODE formula mentioned in \Cref{sec:newTechProph}. In the remaining part of this section, we will describe the computational procedure for $\sgk$.
\begin{theorem}\label{worstcasetheorem}
For each $\theta\in[0,1]$, denote by $\{\tilde{y}_{l,\theta}(\cdot)\}$ the functions defined in Definition \ref{ODEdefinition}. Then there exists a unique $\sgk\in[0,1]$
such that $\tilde{y}_{k,\sgk}(k)=1-\sgk$ and it holds that
\[
\sgk=~~\inf_{\bm{\p}}~\LCkunit(\bm{\p})~~\mbox{s.t.}~\sum_{t=1}^{T}\p_t=k.
\]
\end{theorem}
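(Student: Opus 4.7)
The plan is to combine three ingredients: (i) a monotonicity and continuity argument to show that the equation $\tilde{y}_{k,\theta}(k) = 1 - \theta$ admits a unique root $\sgk \in [0,1]$; (ii) an application of \Cref{splititemlemma} to reduce the infimum over $\bm{p}$ to the limit along uniform vectors $\bm{p}^N = (1/N,\dots,1/N) \in \mathbb{R}^{Nk}$; and (iii) a convergence argument identifying $\lim_{N\to\infty} \text{Dual}(\bm{p}^N,k)$ with $\sgk$ via Euler's method.

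For step (i), I would show by induction on $l$ that $(\theta,t)\mapsto\tilde{y}_{l,\theta}(t)$ is jointly continuous on $[0,1]\times[0,k]$ and monotone nondecreasing in $\theta$. The two phases of the ODE in \Cref{ODEdefinition} are $\dot{\tilde y}_l = \theta - 1 + \tilde y_{l-1}$ and $\dot{\tilde y}_l = \tilde y_{l-1} - \tilde y_l$; at the switching threshold $\tilde y_l = 1-\theta$ these two vector fields coincide ($\theta-1+\tilde y_{l-1} = \tilde y_{l-1} - (1-\theta)$), so after substitution the composite dynamics is driven by a continuous, Lipschitz right-hand side. Together with the strict monotonicity of $1-\theta$, this yields at most one root $\sgk$; checking boundary values $\theta=0$ (where $\tilde{y}_{k,0}(k) = 0 < 1$) and $\theta=1$ (where propagation through the reflecting phase gives $\tilde{y}_{k,1}(k) > 0$) then produces existence by the intermediate value theorem.

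For step (ii), assume $\bm{p}$ has rational coordinates $p_t = n_t/N$ with common denominator $N$. Applying \Cref{splititemlemma} repeatedly splits each $p_t$ into $n_t$ copies of $1/N$, weakly decreasing $\text{Dual}(\bm{p},k)$ at every step, so $\text{Dual}(\bm{p},k)\geq \text{Dual}(\bm{p}^N,k)$. Irrational $\bm{p}$ are handled by rational approximation together with continuity of $\text{Dual}(\cdot,k)$ in $\bm{p}$, which follows from LP stability applied to \eqref{dualdual}. Hence $\inf_{\bm{p}}\text{Dual}(\bm{p},k) = \lim_{N\to\infty}\text{Dual}(\bm{p}^N,k)$, and it remains only to evaluate the right-hand limit.

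The crux is step (iii): showing $\theta^{*,N}\to\sgk$, where $\theta^{*,N}$ is the optimum of $\text{Dual}(\bm{p}^N,k)$, characterized by \Cref{Prophetoptimaltheorem} as the unique solution of $\sum_{\tau=1}^{Nk-1}x_{k,\tau}(\theta^{*,N}) = 1 - \theta^{*,N}$. I would read the discrete construction in \Cref{constructprophet} for $\bm{p}^N$ as Euler's method with step $1/N$ applied to the ODE in \Cref{ODEdefinition}. Because the composite vector field is Lipschitz in the state (by the agreement observation from step (i), noting that all $\tilde{y}_l$ stay in $[0,1]$), the global truncation error theorem (Theorem 212A of \citet{butcher2008numerical}) gives uniform $O(1/N)$ convergence of the discrete iterates $\{x_{l,t}(\theta)\}$ to $\{\tilde{y}_{l,\theta}(\cdot)\}$ on $[0,k]$, for each fixed $\theta$. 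Combined with the joint continuity and strict monotonicity from step (i), passing to the limit in the defining equation forces $\theta^{*,N}\to\sgk$, which closes the proof.

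\textbf{The main obstacle} I anticipate is controlling the piecewise ``switching times'' $t_l$ (both the discrete thresholds from \Cref{feasiproposition} and the continuous thresholds in \Cref{ODEdefinition}), because the classical Euler bound presumes a smooth ODE on a fixed interval, whereas our dynamics have a free boundary that depends on $\theta$. The key enabling observation is again that the two phase vector fields agree at the threshold $\tilde y_l = 1-\theta$, so the composite ODE has no jumps and a clean induction on $l$ lets the approximation error propagate in a controlled way from $l=1$ through $l=k$; a separate short argument is then needed to show that the discrete crossing time (i.e., the smallest $t_{l+1}$ in \Cref{feasiproposition}) converges to its continuous counterpart, which follows from the strict monotonicity of $\tilde{y}_{l,\theta}$ near the crossing.
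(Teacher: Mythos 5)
Your proposal follows essentially the same path as the paper's proof: reduce to the uniform sequence $\bm{p}^N$ via \Cref{splititemlemma}, interpret the construction of \Cref{constructprophet} under $\bm{p}^N$ as an Euler discretization of the ODE in \Cref{ODEdefinition} (using exactly your observation that the two phase vector fields agree at the threshold $\tilde y_l = 1-\theta$ to get a single Lipschitz right-hand side), apply the global truncation error bound (Theorem 212A of \citet{butcher2008numerical}), and pass to the limit. The one minor divergence is that you propose to establish continuity and monotonicity of $\theta\mapsto\tilde{y}_{k,\theta}(k)$ directly by an inductive argument on the continuous ODE, whereas the paper instead transfers these properties from the discrete functions $Y_N(\theta)=y_{k,\theta,N}(k)$---already known to be continuous and nondecreasing in $\theta$ from \Cref{feasilemma3}---via the uniform limit theorem, which is the more economical route since it reuses work already done and sidesteps re-analyzing the $\theta$-dependent switching times in the continuous setting.
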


We now show that the ODE in Definition \ref{ODEdefinition} admits an analytical solution that enables us to compute $\sgk$ for each $k$. For each fixed $\theta$, when $l=1$, it is immediate that
\[
\tilde{y}_{1,\theta}(t)=\theta\cdot t,~~\text{when~}t\leq t_2=\frac{1-\theta}{\theta},\text{~~and~~}\tilde{y}_{1,\theta}(t)=1-\theta\cdot\exp(t_2-t),~~\text{for~}t_2\leq t\leq k.
\]
Now suppose that there exists a fixed $2\leq l\leq k$ such that for each $1\leq v\leq l-1$, it holds that
\[\begin{aligned}
&\tilde{y}_{v,\theta}(t)=\zeta_{v}+\theta\cdot t+\sum_{q=0}^{v-2}\zeta_{v,q}\cdot t^q\cdot\exp(-t),~~&&\text{when~}t_v\leq t\leq t_{v+1}\\
&\tilde{y}_{v,\theta}(t)=1+\sum_{q=0}^{v-1}\psi_{v,q}\cdot t^q\cdot\exp(-t),~~&&\text{when~} t_{v+1}\leq t\leq k
\end{aligned}\]
for some parameters $\{\zeta_{v},\zeta_{v,q}, \psi_{v,q}\}$, which are specified by $\theta$. Then by ODE \eqref{ODE1} and \eqref{ODE2}, it must hold that
\[\begin{aligned}
&\tilde{y}_{l,\theta}(t)=\zeta_{l}+\theta\cdot t+\sum_{q=0}^{l-2}\zeta_{l,q}\cdot t^q\cdot\exp(-t),~~&&\text{when~}t_l\leq t\leq t_{l+1}\\
&\tilde{y}_{l,\theta}(t)=1+\sum_{q=0}^{l-1}\psi_{l,q}\cdot t^q\cdot\exp(-t),~~&&\text{when~} t_{l+1}\leq t\leq k.
\end{aligned}\]
The parameters $\{\zeta_{l},\zeta_{l,q}, \psi_{l,q}\}$ can be computed in the following steps:
\begin{enumerate}
  \item Set $\zeta_{l,l-1}=0$ and compute $\zeta_{l,q}$ iteratively from $q=l-2$ up to $q=0$ by setting
  \[
  \zeta_{l,q}=(q+1)\cdot\zeta_{l,q+1}-\psi_{l-1,q}.
  \]
  \item Set the value of $\zeta_l$ such that $\tilde{y}_{l,\theta}(t_l)=0$. If $l=k$, we set $t_{l+1}=k$; otherwise, we set $t_{l+1}$ to be the solution to the following equation:
      \[
      1-\theta=\tilde{y}_{l,\theta}(t)=\zeta_{l}+\theta\cdot t+\sum_{q=0}^{l-2}\zeta_{l,q}\cdot t^q\cdot\exp(-t).
      \]
      Note that by definition $\tilde{y}_{l,\theta}(t)$ is monotone increasing with $t$, and hence we can do a bisection search on the interval $[t_l,k]$ to obtain the value of $t_{l+1}$.
  \item Set $\psi_{l,q}=\frac{\psi_{l-1,q-1}}{q}$ for each $q=1,\dots,l-1$. If $l<k$, the value of $\psi_{l,0}$ is determined such that
  \[
  1-\theta=1+\sum_{q=0}^{l-1}\psi_{l,q}\cdot t_{l+1}^q\cdot\exp(-t_{l+1}).
  \]
\end{enumerate}
Thus, for each fixed $\theta$, we can follow the above procedure to obtain the value of $\tilde{y}_{k,\theta}(k)$. Note that \Cref{feasilemma3} established in \Cref{prooffeasiproposition} implies that the value of $\tilde{y}_{k,\theta}(k)$ is monotone increasing with $\theta$, and hence we can do a bisection search on $\theta\in[0,1]$ to obtain the value of $\sgk$ as the unique solution of the equation $\tilde{y}_{k,\theta}(k)=1-\theta$. By \Cref{worstcasetheorem}, $\sgk$ is the optimal value for the guarantee. The above procedure describes how we compute numerically the value of $\sgk$ for each $k$ and the value of $\sgk$ is reported previously in \Cref{RatioTable} and \Cref{ComparisonRatioFigure}.

\section{Results for the Knapsack Setting}\label{randomsizesection}
In this section, we derive the tight guarantee of $\frac{1}{3+e^{-2}}$ for the knapsack OCRS problem, following the techniques outlined in \Cref{sec:newTechKnap}.

\subsection{Algorithm and Interpretation}\label{sec:AlgorithmKnapsack}

Our knapsack policy differs from existing ones for knapsack in an online setting \citep{dutting2020prophet, feldman2021online, stein2020advance} by eschewing the need to split queries into ``large'' vs. ``small'' based on whether its size is greater than 1/2.
In fact, we can show that any algorithm which considers large and small queries separately in our problem is limited to $\gamma\le1/4$, and hence could not match the $\frac{1}{3+e^{-2}}$ upper bound provided earlier. The result follows by considering a problem setup $\mathcal{H}$ where there are 4 queries and $(\tilde{r}_t, \tilde{d}_t)$ is realized as $({r}_t, {d}_t)$ with probability $p_t$ and is realized as $(0,0)$ otherwise, for each query $t$, and letting
\[
({r}_1, \p_1, {d}_1)=(r,1,\epsilon),~~({r}_2, \p_2, {d}_2)=({r}_3, \p_3, {d}_3)=(r, \frac{1-2\epsilon}{1+2\epsilon}, \frac{1}{2}+\epsilon),~~({r}_4, \p_4, {d}_4)=(r/\epsilon,\epsilon,1)
\]
for $r>0$ and some small $\epsilon>0$. We formalize the above arguments as follows, where the formal proof is relegated to \Cref{pfprop:largesmall}.

\begin{proposition}\label{Largesmallproposition}
If the policy $\pi$ serves only either ``large'' queries with a size larger than 1/2, or ``small'' queries with a size no larger than 1/2, then it holds that
$\inf_{{\mathcal{H}}}\frac{\mathbb{E}_{\bm{I}\sim\bm{F}}[V^\pi(\bm{I})]}{\text{UP}({\mathcal{H}})}\leq\frac{1}{4}$.
\end{proposition}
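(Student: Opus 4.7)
The plan is to instantiate the problem setup $\hat{\mathcal{H}}$ with the specific four-query data given above and show that the ratio tends to $1/4$ as $\epsilon \to 0$. Writing $q := (1-2\epsilon)/(1+2\epsilon)$, I would first verify LP feasibility via $\sum_{t=1}^4 p_t \hat{d}_t = \epsilon + 2q(\tfrac{1}{2}+\epsilon) + \epsilon = 1$, which also yields $\text{UP}(\hat{\mathcal{H}}) = \sum_t p_t \hat{r}_t = r + 2qr + r = 2r(1+q)$, tending to $4r$.

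The small-only case is immediate: query 1 is the only query with $\hat{d}_t \le 1/2$ and nonzero reward, contributing at most $p_1 \hat{r}_1 = r$ in expectation. For the large-only case I would show $\mathbb{E}[V^\pi(\hat{\bm{I}})] \le r$ as follows. Only queries $2, 3, 4$ ever realize size $>1/2$, and any two of them combine to more than one unit of capacity (since $2(\tfrac{1}{2}+\epsilon)>1$ and $(\tfrac{1}{2}+\epsilon)+1>1$). Hence $\pi$ successfully serves at most one of $\{2,3,4\}$ per sample path. Let $E$ denote the event that $\pi$ serves neither query 2 nor query 3; then $\Pr(\pi \text{ serves 2 or 3}) = 1-\Pr(E)$. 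Since query 4's activation is independent of queries 1--3 and of $\pi$'s internal randomness, and $\pi$ can successfully serve query 4 only on the event $E$, we have $\Pr(\pi \text{ serves 4}) \le \Pr(E \cap \{\text{4 active}\}) = \epsilon \Pr(E)$. Combining these bounds,
\begin{equation*}
\mathbb{E}[V^\pi(\hat{\bm{I}})] = r \cdot \Pr(\pi \text{ serves 2 or 3}) + \tfrac{r}{\epsilon} \cdot \Pr(\pi \text{ serves 4}) \le r(1-\Pr(E)) + r \cdot \Pr(E) = r.
\end{equation*}
In both restricted policy classes this gives $\mathbb{E}[V^\pi(\hat{\bm{I}})]/\text{UP}(\hat{\mathcal{H}}) \le 1/(2(1+q))$, which tends to $1/4$ as $\epsilon \to 0$, proving the proposition.

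The main obstacle I anticipate is establishing the large-only bound itself: a naive ex-ante LP argument would maximize $r a_2 + r a_3 + (r/\epsilon) a_4$ subject only to the packing constraints $a_t \le p_t$ and $a_2+a_3+a_4 \le 1$, yielding $\approx 2r$ and thus only a $1/2$ ratio. The tight bound of $r$ requires using the \emph{online} dynamic in a crucial way: successfully serving query 4 (the ``high density'' query arriving last) is possible only on sample paths where the algorithm has skipped queries 2 and 3, and the independence of query 4's activation from the prior history produces the perfect telescope $r(1-\Pr(E)) + r \cdot \Pr(E) = r$ that collapses both terms to a single unit of reward.
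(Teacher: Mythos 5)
Your proposal is correct and takes essentially the same approach as the paper: same four-query instance, same split into small-only and large-only cases, same conclusion that the ratio tends to $1/4$. The paper simply asserts the large-only bound as ``obvious'' ($V^{\pi}_L = r + O(\epsilon)$), so your telescoping argument via the event $E$ and the independence of query~4's activation is a worthwhile filling-in of that step, and your exact computation of $\text{UP}(\hat{\mathcal{H}}) = 2r(1+q)$ (rather than the paper's limiting value $4r$) is slightly more careful.
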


We now turn to the OCRS problem, formalizing our policy in \Cref{definedistributionrandomsize}.
Based on $\gamma$, for each $t$, we use $\tilde{X}_{t,\gamma}$ to denote the distribution of the capacity consumption under our policy at the end of period $t$, where $\tilde{X}_{0,\gamma}$ takes value $0$ deterministically. Then, for each size realization $d_t$ of query $t$, we specify a threshold $\eta_{t,\gamma}(d_t)$ such that the probability of $\tilde{X}_{t-1,\gamma}\in(\eta_{t,\gamma}(d_t),1-d_t]$  is smaller than or equal to $\gamma$, and the probability that  $\tilde{X}_{t-1,\gamma}\in[\eta_{t,\gamma}(d_t),1-d_t]$ is larger than or equal to $\gamma$. When the size of query $t$ is realized as $\tilde{d}_t$, we serve query $t$ when the realized capacity consumption is among $(\eta_{t,\gamma}(\tilde{d}_t),1-\tilde{d}_t]$, or we serve query $t$ with a certain probability, when the realized capacity consumption equals $\eta_{t,\gamma}( \tilde{d}_t)$. {If $\gamma$ is feasible such that $\tilde{X}_{t,\gamma}$ is well-defined for every $t=1,\dots,T$,} i.e., $\eta_{t,\gamma}(d_t)$ exists for all possible sizes $d_t$, it is clear to see that our policy guarantees that query $t$ is served with a total probability $\gamma$. We finally update the distribution of capacity consumption in step \ref{Updaterandomdistribution}.
In this section we establish guarantees while ignoring implementation runtime; in \Cref{sec:knapsackPolytimeImpl} we show how through discretization, a runtime polynomial in $K$ is attainable while losing only an additive $O(1/K)$ in the guarantee, for any large integer $K$. The final algorithm for the multi-resource setting is presented in \Cref{sec:FinalAlgorithm}. 
\begin{algorithm}
\caption{Pre-processed algorithm for the knapsack OCRS problem ($\pi_{\gamma}$)}
\begin{algorithmic}[1]
\State \textbf{Input}: a parameter $\gamma$ and the sequence of probabilities $\{p_t(d_t), \forall t, \forall d_t\}$ satisfying $\sum_t \sum_{d_t} p_t(d_t)\cdot d_t\leq 1$ and $\sum_{d_t} p_t(d_t)\le 1$ for each $t$.
\State We initialize $\tilde{X}_{0,\gamma}$ as a random variable that takes the value $0$ deterministically.
\For{$t=1,2,\dots,T$,}
\State For each realization $d_t>0$, we denote a threshold $\eta_{t,\gamma}(d_t)$ satisfying:
        \begin{equation}\label{defineetarandomsize}
        P(\eta_{t,\gamma}(d_t)<\tilde{X}_{t-1,\gamma}\leq1-d_t)\leq\gamma\leq P(\eta_{t,\gamma}(d_t)\leq\tilde{X}_{t-1,\gamma}\leq1-d_t).
        \end{equation}
\State Initialize $\tilde{X}_{t, \gamma}$ as $\tilde{X}_{t-1, \gamma}$. We now update $\tilde{X}_{t, \gamma}$ to reflect the distribution of the capacity utilization at the end of period $t-1$ via the following procedures. We first let the distribution of $\tilde{X}_{t, \gamma}$ be identical to $\tilde{X}_{t-1,\gamma}$. Then, for each size realization $d_t>0$ and each point $x\in (\eta_{t,\gamma}(d_t) ,1-d_t]$, we decrease the probability of $P(\tilde{X}_{t, \gamma}=x)$ by $p_t(d_t)\cdot P(\tilde{X}_{t-1,\gamma}=x)$ and increase the probability of $P(\tilde{X}_{t, \gamma}=x+d_t)$ by $p_t(d_t)\cdot P(\tilde{X}_{t-1,\gamma}=x)$. For each size realization $d_t$ and $x=\eta_{t,\gamma}(d_t)$, we decrease the probability of $P(\tilde{X}_{t, \gamma}=x)$ by $p_t(d_t)\cdot(\gamma-P(\eta_{t,\gamma}(d_t)<\tilde{X}_{t-1,\gamma}\leq1-d_t))$ and increase the probability of $P(\tilde{X}_{t, \gamma}=x+d_t)$ by $p_t(d_t)\cdot(\gamma-P(\eta_{t,\gamma}(d_t)<\tilde{X}_{t-1,\gamma}\leq1-d_t))$.\label{Updaterandomdistribution}
\EndFor
\State \textbf{Output}: the distributions $\{\tilde{X}_{t,\gamma}, \forall t\}$ and the thresholds $\{ \eta_{t, \gamma}(d_t), \forall t, \forall d_t \}$. 
\end{algorithmic}
\label{definedistributionrandomsize}
\end{algorithm}

\textbf{Remarks about \Cref{definedistributionrandomsize}.}
\begin{enumerate}
\item The policy described in \Cref{definedistributionrandomsize} is implemented as follows: when each query $t=1,\ldots,T$ arrives, conditional it taking size $\tilde{d}_t$ and the capacity utilization being $\tilde{X}_{t-1,\gamma}$, serve query $t$ w.p.~1 if $\eta_{t,\gamma}(\tilde{d}_t)< \tilde{X}_{t-1,\gamma}\leq1-\tilde{d}_t$, w.p.~$\frac{\gamma-P(\eta_{t,\gamma}(\tilde{d}_t)<\tilde{X}_{t-1,\gamma}\leq1-\tilde{d}_t)}{P(\eta_{t,\gamma}(\tilde{d}_t)=\tilde{X}_{t-1,\gamma})}$ if $\tilde{X}_{t-1,\gamma}=\eta_{t,\gamma}(\tilde{d}_t)$, and w.p.~0 otherwise.
It will be preserved that $\tilde{X}_{t-1,\gamma}$ is the true distribution of capacity utilization when query $t$ arrives, and hence conditional on any size $\tilde{d}_t$ taken by query $t$, it will be served with probability exactly $$P(\eta_{t,\gamma}(\tilde{d}_t)<\tilde{X}_{t-1,\gamma}\leq1-\tilde{d}_t)+\frac{\gamma-P(\eta_{t,\gamma}(\tilde{d}_t)<\tilde{X}_{t-1,\gamma}\leq1-\tilde{d}_t)}{P(\eta_{t,\gamma}(\tilde{d}_t)=\tilde{X}_{t-1,\gamma})}P(\eta_{t,\gamma}(\tilde{d}_t)=\tilde{X}_{t-1,\gamma})=\gamma.$$
\item This will only describe a valid policy if parameter $\gamma$ leads to threshold values $\eta_{t,\gamma}(d_t)$ that are well-defined (i.e., \eqref{defineetarandomsize} can be satisfied) for all $t$ and $d_t$.  Unlike the $k$-unit case, we do not characterize the optimal value of $\gamma$ for given size distributions; we instead prove that setting $\gamma=1/(3+e^{-2})$ leads to well-defined thresholds for \textit{any} size distributions.  This again implies that the thresholds $\eta_{t,1/(3+e^{-2})}(d_t)$ can be constructed on-the-fly and that our algorithm and analysis hold even if an adaptive adversary chooses the arrival order of queries.
\end{enumerate}


\subsection{Proof of Guarantee and Tightness}
In this section, we analyze the guarantee of our policy in \Cref{definedistributionrandomsize}.  We also show that no policy can do better.
The key point is to find the largest possible $\gamma$ such that the policy $\pi_{\gamma}$ is feasible for all problem setups $\mathcal{H}$, i.e., the random variables $\tilde{X}_{t,\gamma}$ are well defined in that $\eta_{t,\gamma}(d_t)$ exists for all possible sizes $d_t$, for each $t$.

We now find such a $\gamma$. For any $a$ and $b$, denote $\mu_{t,\gamma}(a,b]=P(a<\tilde{X}_{t,\gamma}\leq b)$ assuming $\tilde{X}_{t,\gamma}$ is well defined.
Following the rules of the Best-fit Magician, we can establish an \textit{invariant} that upper-bounds the measure of sample paths with utilization in $(0,b]$ by a decreasing exponential function of the measure with utilization in $(b,1-b]$. Our invariant holds for all $b\in(0,1/2]$, at all times $t$.
\begin{lemma}\label{generalsamplepathlemma}
For any $0<b\leq\frac{1}{2}$ and any $0<\gamma<1$ such that $\tilde{X}_{t,\gamma}$ is well-defined in that $\eta_{t,\gamma}(d_t)$ exists for all possible sizes $d_t$, the following inequality
\begin{equation}\label{generalsamplepathinequality}
\frac{1}{\gamma}\cdot\mu_{t,\gamma}(0,b]\leq\exp(-\frac{1}{\gamma}\cdot\mu_{t,\gamma}(b,1-b])
\end{equation}
holds for all $t=0,1,\dots,T$.
\end{lemma}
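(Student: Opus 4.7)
The plan is proof by induction on $t$. The base case $t=0$ is immediate: $\tilde X_{0,\gamma}\equiv 0$ gives $\mu_{0,\gamma}(0,b]=0$, which satisfies \eqref{generalsamplepathinequality} trivially for every $b\in(0,1/2]$.

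For the inductive step, fix $b\in(0,1/2]$ and abbreviate $u=\mu_{t-1,\gamma}(0,b]$, $v=\mu_{t-1,\gamma}(b,1-b]$, $u'=\mu_{t,\gamma}(0,b]$, $v'=\mu_{t,\gamma}(b,1-b]$; the IH reads $u\le \gamma e^{-v/\gamma}$, and the goal is $u'\le \gamma e^{-v'/\gamma}$. By Step~\ref{Updaterandomdistribution} of Algorithm~\ref{definedistributionrandomsize}, $\mu_{t,\gamma}$ is obtained from $\mu_{t-1,\gamma}$ by superposing, over the realizations $(r,d)$ of query $t$, translations by $d$ of the top $p(r,d)\gamma$-mass of $\mu_{t-1,\gamma}$ restricted to $[0,1-d]$. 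Two structural facts drive the argument: (a) \emph{best-fit}: the atom at $0$ participates in the shift for a given $d$ only when $\mu_{t-1,\gamma}(0,1-d]<\gamma$, in which case exactly $p(r,d)(\gamma-\mu_{t-1,\gamma}(0,1-d])$-mass is shifted from $0$ to $d$; and (b) \emph{subset monotonicity}: when $d\le b$, $(0,1-b]\subseteq(0,1-d]$, so $\mu_{t-1,\gamma}(0,1-d]\ge u+v$, which means no mass leaves $0$ for shift size $d\le b$ whenever $u+v\ge\gamma$.

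I would then case-split on $d$: (i) if $d>1-b$ the shift's source $[0,1-d]\subset[0,b)$ and destination $[d,1]\subset(1-b,1]$ are arranged so that both $u'$ and $v'$ can only weakly decrease from this realization; (ii) if $d\in(b,1-b]$ the destination is disjoint from $(0,b]$, so $u'$ receives no positive contribution; (iii) if $d\in(0,b]$ the only way mass enters $(0,b]$ is via the $0\mapsto d$ shift, controlled by (b). Using these three cases I would sum the resulting increments over realizations and verify $u'e^{v'/\gamma}\le\gamma$ by plugging in the tight IH $u=\gamma e^{-v/\gamma}$ wherever the potential saturates.

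The main obstacle is that the potential $(u,v)\mapsto u\, e^{v/\gamma}$ is not jointly convex (its Hessian has determinant $-e^{2v/\gamma}/\gamma^2<0$), so averaging over the realizations cannot be handled by Jensen's inequality. I expect to circumvent this either by reducing to the worst case, which I anticipate is a deterministic single-size shift of size $d\le b$ with $u+v<\gamma$, after which the remaining inequality reduces to algebra using facts (a) and (b); or alternatively via a continuous-time/Gronwall argument in which the transition is interpolated by a parameter $s\in[0,1]$ and one verifies $\frac{d}{ds}\bigl[u(s)\, e^{v(s)/\gamma}\bigr]\le 0$ at the boundary $u(s)\,e^{v(s)/\gamma}=\gamma$, with best-fit constraining the sign of $du$ and $dv$.
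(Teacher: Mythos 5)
Your skeleton---induction on $t$, tracking $u=\mu_{t-1,\gamma}(0,b]$ and $v=\mu_{t-1,\gamma}(b,1-b]$, the best-fit fact (a), the subset-monotonicity fact (b), and a size-based case split---matches the paper's proof (which establishes the more general \Cref{UDsamplepathlemma} and specializes to constant $\gamma$). The gap is the central verification step, which you do not carry out, and on which the concern you raise is a false alarm. The algorithm's update is a superposition of shifts indexed by the realization $d$, so the increments to $u$ and $v$ are \emph{additive} in the realization probabilities; nowhere is the potential $u\,e^{v/\gamma}$ averaged over realizations, so Jensen (and hence joint convexity) is never invoked. The paper instead first derives \emph{linear} bounds
\begin{align*}
u' &\le u + \bigl(\gamma - (u+v) - u\bigr)\hat p_1 - u\hat p_2 - u\hat p_3,\\
v' &\le v + u\hat p_1 + \gamma\hat p_2 + u\hat p_3,
\end{align*}
where $\hat p_1,\hat p_2,\hat p_3$ are aggregate probabilities over a five-way split of realizations: your three size ranges crossed with whether the threshold $\eta_{t,\gamma}(d)$ equals $0$ (your fact (a) supplies the dimension your split omits). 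It then verifies $u'\le\gamma e^{-v'/\gamma}$ directly, using only $e^{-x}\ge1-x$, $e^{-x}\le1$, and $\hat p_1+\hat p_2+\hat p_3\le1$, after one final split on whether $\hat p_1>0$: if $\hat p_1>0$, then some small realization has zero threshold, your fact (b) yields $u+v\le\gamma$, and the algebra closes \emph{without} the induction hypothesis; if $\hat p_1=0$, the hypothesis $u\le\gamma e^{-v/\gamma}$ is applied once at the end. No extremality argument and no Gronwall step appear.

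Your two proposed escapes are both left unproven and are, I suspect, harder than what is actually required. The reduction to a worst-case deterministic single-size shift with $d\le b$ and $u+v<\gamma$ is not justified: the map from realization-weights to $(u',v')$ is (multi)linear feeding into a non-convex functional, and it is not clear that any single realization is extremal for the exponential invariant. The Gronwall-style interpolation would need to rule out interior violations of $u(s)e^{v(s)/\gamma}\le\gamma$ along the whole interpolation path, which is precisely the hard content being deferred. What is missing in your write-up is to commit to the additive linear aggregation above and then push through the $e^{-x}\ge1-x$ algebra in the two sub-cases on $\hat p_1$.
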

We omit proving \Cref{generalsamplepathlemma} since we will prove a more general \Cref{UDsamplepathlemma} in \Cref{sec:unitDensity}.
For a fixed $t$, assume that the random variable $\tilde{X}_{t,\gamma}$ is well defined. Then, given the invariant \eqref{generalsamplepathinequality} established in \Cref{generalsamplepathlemma}, 
we show that a $\gamma$ as large as $\frac{1}{3+e^{-2}}\approx 0.319$ allows for a $\gamma$-measure of sample paths to have zero utilization at time $t$, which implies that the random variable $\tilde{X}_{t+1,\gamma}$ is well defined. We iteratively apply the above arguments for each $t=1$ up to $t=T$, and hence we prove the feasibility of our Best-fit Magician policy. The above arguments are formalized in the following theorem and the proof is relegated to \Cref{Prooftheorem2}.
\begin{theorem}\label{knapsackratiotheorem}
When $\gamma=\frac{1}{3+e^{-2}}$, the Best-fit Magician policy $\pi_{\gamma}$ in \Cref{definedistributionrandomsize} is feasible and has a guarantee at least $\frac{1}{3+e^{-2}}$.
\end{theorem}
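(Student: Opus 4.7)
My plan is to split the proof into (i) the competitive ratio conditional on feasibility, and (ii) feasibility itself.

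Part (i) is essentially by construction of \Cref{definedistributionrandomsize}: whenever $\tilde{X}_{t-1,\gamma}$ is well defined, Step~\ref{Implementrandom} serves each realization $(r_t,d_t)$ with total probability exactly $\gamma$---the ``strict'' band contributes $P(\eta_{t,\gamma}<\tilde{X}_{t-1,\gamma}\le 1-d_t)$, and the atom at $\eta_{t,\gamma}$ contributes the residual amount mandated by~\eqref{defineetarandomsize}. Summing over $t$ gives $\mathbb{E}[V^{\pi_\gamma}]=\gamma\sum_t\mathbb{E}[\tilde{r}_t]=\gamma\cdot\text{UP}(\mathcal{H})$, so the ratio is exactly $\gamma=\frac{1}{3+e^{-2}}$.

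Part (ii) I would prove by induction on $t$. The algorithm is well defined at step $t$ iff some $\eta_{t,\gamma}(r_t,d_t)\in[0,1-d_t]$ satisfies~\eqref{defineetarandomsize} for every realization, which requires $P(\tilde{X}_{t-1,\gamma}\le 1-d_t)\ge\gamma$; the binding case $d_t=1$ reduces to $P(\tilde{X}_{t-1,\gamma}=0)\ge\gamma$. The base $t=1$ is immediate from $\tilde{X}_{0,\gamma}\equiv 0$. For the inductive step, the hypothesis activates the invariant of \Cref{generalsamplepathlemma} at step $t-1$ and the utilization identity $\mathbb{E}[\tilde{X}_{t-1,\gamma}]=\gamma\sum_{\tau<t}\mathbb{E}[\tilde{d}_\tau]\le\gamma$; since $1-\gamma(2+e^{-2})=\frac{1}{3+e^{-2}}=\gamma$, it suffices to prove the key estimate $\mu_{t-1,\gamma}((0,1])\le\gamma(2+e^{-2})$.

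For the key estimate I fix $\epsilon>0$, split $(0,1]$ at the thresholds $1/2\pm\epsilon$, and set $\nu=\mu_{t-1,\gamma}((1/2-\epsilon,1/2+\epsilon])$. The invariant at $b=1/2-\epsilon$ bounds the left block by $\gamma e^{-\nu/\gamma}$, while the utilization lower bound $(1/2-\epsilon)\nu+(1/2+\epsilon)\mu_{t-1,\gamma}((1/2+\epsilon,1])\le\gamma$ bounds the right block by $(\gamma-(1/2-\epsilon)\nu)/(1/2+\epsilon)$. Choosing $\epsilon$ equal to the distance from $1/2$ to the infimum of the support of $\mu_{t-1,\gamma}$ in $(1/2,1]$ and sending this distance to $0^+$ collapses the three-term sum to $\gamma e^{-\nu/\gamma}+\nu+(2\gamma-\nu)=2\gamma+\gamma e^{-\nu/\gamma}$ with $\nu\le 2\gamma$; at the extremal $\nu=2\gamma$ (attained by atoms $\gamma e^{-2}$ at $0^+$ and $2\gamma$ at $(1/2)^+$) this evaluates to exactly $\gamma(2+e^{-2})$.

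The main obstacle is verifying that the one-parameter envelope above is not improved by the adversary exploiting invariants at several $b$'s simultaneously. The candidate bad configurations---extra mass inside $(0,1/2)$, an atom at $\{1/2\}$, or mass in $(1/2,1]$ held away from $1/2$---I would rule out case by case by either (a) invoking the invariant at a larger $b$ where $\mu_{t-1,\gamma}((b,1-b])\ge 2\gamma$, tightening the exponential factor enough to absorb the added left-side mass, or (b) showing that the configuration wastes utilization so that the right-block bound strictly drops and the total remains below $\gamma(2+e^{-2})$. Closing this case analysis cleanly is the delicate part of the argument.
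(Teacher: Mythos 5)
Your decomposition into (i) ratio-given-feasibility and (ii) feasibility-by-induction matches the paper, and the overall strategy—combine \Cref{generalsamplepathlemma} with the utilization bound $\mathbb{E}[\tilde X_{t-1,\gamma}]\le\gamma$ to bound $\mu_{t-1,\gamma}(0,1]$—is the same. Part (i) is fine. The gap is in the ``key estimate.''

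The three-block bound you derive, namely
\[
\mu_{t-1,\gamma}(0,1]\;\le\;\gamma e^{-\nu/\gamma}+\nu+(2\gamma-\nu)\;=\;2\gamma+\gamma e^{-\nu/\gamma},
\]
is a valid upper bound, but it is a \emph{decreasing} function of $\nu$. Over the relevant range $\nu\ge 0$ it is therefore maximized at $\nu=0$, where it equals $3\gamma>1-\gamma$, not at $\nu=2\gamma$ as you assert. Your extremal configuration (atoms $\gamma e^{-2}$ at $0^+$ and $2\gamma$ at $(1/2)^+$) does achieve $U_{t-1}(1)=\gamma(2+e^{-2})$, but it achieves the \emph{minimum} of your envelope, not its maximum; nothing in your argument precludes the $\nu\to0$ regime, so the envelope as written does not establish the theorem. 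The reason $\nu=0$ can in fact be excluded is that if no mass sits in $(1/2-\epsilon,1/2+\epsilon]$, then the right-block mass is bounded away from $1/2$ and the per-unit utilization is strictly more than $1/2$, so the true right-block bound is strictly tighter than $2\gamma-\nu$; but your single-threshold bound discards this coupling, and the ``delicate case analysis'' you defer is exactly where this needs to be quantified. The paper sidesteps the discrete-block issues entirely: it introduces $u^*$ via $\gamma u^*-\gamma\ln u^*=U_t(1)$ and $s^*=\min\{s\le 1/2:U_t(s)\ge\gamma u^*\}$, writes $\gamma\ge U_t(1)-\int_0^1 U_t(s)\,ds$, pairs $U_t(s)$ with $U_t(1-s)$ using the invariant in the form $U_t(s)+U_t(1-s)\le 2U_t(s)-\gamma\ln(U_t(s)/\gamma)$, and exploits quasi-convexity of $x\mapsto 2x-\gamma\ln(x/\gamma)$ on $[\gamma u^*,\gamma]$ to bound the integral by $s^*(2\gamma u^*-\gamma\ln u^*)+(\tfrac12-s^*)\max\{2\gamma u^*-\gamma\ln u^*,2\gamma\}$, splitting into two clean cases. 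That parametrization automatically tracks how much mass sits near $1/2$ without committing to any particular split point, which is what your fixed-$\epsilon$ block argument loses. As it stands, your key estimate is not proved, and the envelope inequality you write down is too weak to prove it.
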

Finally, we show that the guarantee $\gamma=\frac{1}{3+e^{-2}}$ is tight. The proof is completed by bounding the largest ratio of the knapsack OCRS problem when the size of query $t$ is realized to be $d_t$ with probability $p_t$. The sequence $\{(p_t, d_t), \forall t\}$ is specified as follows,
\begin{equation}\label{counterexample}
(\p_1, d_1)=(1, \epsilon),~~(\p_t, d_t)=(\frac{1-2\epsilon}{(T-2)(\frac{1}{2}+\epsilon)}, \frac{1}{2}+\epsilon)\text{~~for~all~}2\leq t\leq T-1\text{~~and~~}(\p_T,d_T)=(\epsilon, 1)
\end{equation}
for some $\epsilon>0$. Our result is formally stated in the following theorem, with the proof relegated to \Cref{pf:uppertheorem}.
\begin{theorem}\label{uppertheorem}
For any feasible online policy $\pi$, it holds that $\inf_{{\mathcal{H}}}\frac{\mathbb{E}_{{\bm{I}}\sim {\bm{F}}}[V^\pi({\bm{I}})]}{\text{UP}({\mathcal{H}})}\leq\frac{1}{3+e^{-2}}$.
\end{theorem}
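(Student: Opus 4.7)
The plan is to apply Lemma~\ref{observationlemma} to the instance in~\eqref{counterexample} and bound $\text{Dual}(\bm p,\bm D)$ from above as $\epsilon\to 0$ and $T\to\infty$. Recall that $\alpha_t(d_t,c)$ represents the ex-ante probability of serving query $t$, conditional on it realizing size $d_t$, when the remaining capacity at the start of period $t$ equals $c$. The key structural feature of this instance is that two medium queries cannot both be packed (since $2(\tfrac12+\epsilon)>1$), so only the capacity levels $1$ and $1-\epsilon$ can still host a subsequent medium or large query; once any medium query is served, the resulting capacity level ($\tfrac12-\epsilon$ or $\tfrac12-2\epsilon$) is absorbing through time $T$. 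Let $f_t$ and $g_t$ denote the probability mass sitting at capacity $1$ and $1-\epsilon$ at the start of period $t$.

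Constraint~\eqref{Dualconstrainta} for query~$1$ forces $x:=\alpha_1(\epsilon,1)\ge\theta$, producing $f_2=1-x$ and $g_2=x$. For each medium query $t$, writing $y_t=\alpha_t(\tfrac12+\epsilon,1)$ and $z_t=\alpha_t(\tfrac12+\epsilon,1-\epsilon)$, the three dual constraints reduce to
\begin{equation*}
y_t+z_t\ge\theta\,p_t,\qquad y_t\le p_t f_t,\qquad z_t\le p_t g_t,
\end{equation*}
together with $f_{t+1}=f_t-y_t$ and $g_{t+1}=g_t-z_t$. Constraint~\eqref{Dualconstraintc} for query~$T$ in turn forces $f_T\ge\theta$, since $\alpha_T(1,1)\ge\theta\epsilon$ can only be supplied from the capacity-$1$ state.

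Passing to $T\to\infty$, the recursion becomes an ODE on $[0,\tau]$ with $\tau=\tfrac{1-2\epsilon}{1/2+\epsilon}\to 2$. Writing $u(s),v(s)\in[0,1]$ for the conditional serve rates in the two live states, the LP constraints become $f'=-fu$, $g'=-gv$, $fu+gv\ge\theta$, with terminal requirement $f(\tau)\ge\theta$. An exchange argument---shifting a small amount of service from state~$1$ to state~$1-\epsilon$ while keeping $fu+gv$ fixed preserves all constraints but weakly decreases $|f'|$, because the mass depleted from $g$ is irrelevant to the terminal requirement on $f$---shows that the adversary-optimal (i.e., LP-maximizing) policy sets $v\equiv 1$ and $u(s)=\max\{0,(\theta-g(s))/f(s)\}$. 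This yields $g(s)=xe^{-s}$, and integrating $f'(s)=-(\theta-xe^{-s})$ on $[\ln(x/\theta),2]$ produces $f(2)=1-x+\theta(\ln(x/\theta)-1)-xe^{-2}$.

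Substituting $x=r\theta$ with $r\ge 1$ and imposing $f(2)\ge\theta$ rearranges to $\theta\le 1/[\,2+r(1+e^{-2})-\ln r\,]$. The derivative $(1+e^{-2})-1/r$ of the denominator is strictly positive for all $r\ge 1$, so the denominator is minimized at $r=1$, giving the claimed bound $\theta\le\tfrac{1}{3+e^{-2}}$. The main obstacle I anticipate is making the exchange argument fully rigorous for arbitrary (possibly non-smooth) LP-feasible $(u,v)$, and separately controlling the $O(1/T)$ error between the discrete LP and its continuous-time ODE limit uniformly near $\theta=\tfrac{1}{3+e^{-2}}$.
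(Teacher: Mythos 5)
Your instance is exactly the one used in the paper's proof, and the key structural observation (once a medium query is served, the resulting utilization is absorbing; only the capacity levels $1$ and $1-\epsilon$ can host a subsequent medium or large query) is also the same. The calculation you arrive at is correct. But the routes differ, and the paper's route is somewhat shorter and avoids the two issues you flag at the end.

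The paper works directly with the finite LP $\text{Dual}(\bm{p},\bm{D})$. Writing $\mu_t=\alpha^*_t(1-\epsilon)$ (which is $z_t$ in your notation), the covering constraint \eqref{Dualconstrainta} gives $\alpha^*_t(1)\ge\theta^*p_t-\mu_t$; plugging this into the capacity constraint \eqref{Dualconstraintc} at $t=T$ yields $\theta^*(1+\sum_{t=2}^{T-1}p_t)\le 1-\alpha^*_1(1)+\sum_{t=2}^{T-1}\mu_t$. The exponential decay then enters through a one-line induction: from $\mu_t\le p_t\bigl(\alpha^*_1(1)-\sum_{\tau<t}\mu_\tau\bigr)$ one gets $\sum_{t\le T-1}\mu_t\le\alpha^*_1(1)\bigl(1-\prod(1-p_t)\bigr)\le\alpha^*_1(1)(1-e^{-2})+O(\epsilon)$. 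Finally using $\alpha^*_1(1)\ge\theta^*$ closes the chain. This is exactly the discrete counterpart of your $g(\tau)\ge xe^{-\tau}$, but used as a one-sided bound on $\sum z_t$ rather than as an equation for an optimal trajectory. In particular the paper never needs to know the pointwise optimizer, never needs the exchange argument, and never passes to an ODE.

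Your version instead characterizes the LP-optimal schedule ($v\equiv 1$, $u(s)=\max\{0,(\theta-g(s))/f(s)\}$), integrates exactly, and then optimizes over the free parameter $r=x/\theta$. That is more work for the same bound, and the extra ingredients are exactly where the subtle points lie: the exchange argument as stated ("mass depleted from $g$ is irrelevant to the terminal requirement on $f$") is not immediately convincing because depleting $g$ earlier \emph{does} matter downstream — a smaller $g(s)$ later tightens the constraint $fu+gv\ge\theta$ and forces a larger $u$. The clean way to justify $v\equiv1$ is the integral observation $\int_0^\tau fu\ge \theta\tau-\int_0^\tau gv=\theta\tau-\bigl(g(0)-g(\tau)\bigr)$, together with $g(\tau)\ge g(0)e^{-\tau}$; but once you have written this down, you no longer need the exact expression for $f(2)$ or the optimization over $r$, because the resulting chain $\theta\le f(\tau)\le 1-2\theta-g(0)e^{-2}\le 1-2\theta-\theta e^{-2}$ already gives the bound (using only $g(0)=x\ge\theta$). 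This is essentially the paper's chain, recast continuously.

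So: your proposal is correct, the instance and the insight are the same, but the paper's proof is a more direct and elementary discrete inequality chain that sidesteps the exchange argument and the discretization error you (rightly) worry about. If you wanted to keep the ODE framing, you should replace the exchange argument by the integral inequality above, which makes the optimality of $v\equiv1$ unnecessary.
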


\section{Extensions for the Knapsack Setting}\label{sec:extensions}
In this section, we discuss the polynomial-time implementation of our Best-fit Magician policy (\Cref{sec:knapsackPolytimeImpl}), and present our improvement in the unit-density special case (\Cref{sec:unitDensity}).

\subsection{A Polynomial-time Implementation Scheme} \label{sec:knapsackPolytimeImpl}
In this section, we discuss how our Best-fit Magician policy can be implemented in polynomial time. Note that the key step in our algorithm is to iteratively compute the distribution of $\tilde{X}_{t, \gamma}$.
Our approach is to discretize the possible sizes to always be a multiple of $\frac{1}{KT}$, for some large integer $K$.  Then
the support set of $\tilde{X}_{t,\gamma}$ contains at most $KT$ elements for each $t$, from which it can then be seen that our algorithm can be implemented in $O(KT^2)$ time.

For any problem setup $\mathcal{H}$ restricted to the single-resource setting, we perform this discretization by rounding each potential size $d_t$ \textit{up} to the nearest multiple of $\frac{1}{KT}$.  We denote the rounded sizes using $d'_t$. We then define $F'_t(\cdot)$ as the distribution of $(r_t, d_t')$, which is a discretization of the distribution $F_t(\cdot)$, and we define $\mathcal{H}'=(F'_1, \dots, F'_T)$. The implementation of our algorithm on the problem setup $\mathcal{H}$ is equivalent to the implementation on $\mathcal{H}'$. To be more specific, we compute the distribution of $\tilde{X}_{t,\gamma}$ based on $F'_t(\cdot)$ for each $t$, and, when we face a query $(r_t, d_t)$, we treat this query as a query with reward $r_t$ and a size $d_t'$.

We now discuss the loss of the guarantee of our algorithm due to such discretization. Note that since $d_t\le d'_t$ for each $t$, if the algorithm attempts any solution that is feasible for $\mathcal{H}'$, then it is also feasible for $\mathcal{H}$. Moreover, since $r_t$ is never changed, it is easy to see that the total reward collected by our algorithm in the problem setup $\mathcal{H}$ satisfies
\[
\mathbb{E}_{\pi_\gamma, \bm{I}\sim\mathcal{H}}[V^{\pi_{\gamma}}(\bm{I})]=\mathbb{E}_{\pi_\gamma, \bm{I}\sim\mathcal{H}'}[V^{\pi_{\gamma}}(\bm{I})]
\]
and, when $\gamma\leq\frac{1}{3+e^{-2}}$, it holds that
\[
\mathbb{E}_{\pi_\gamma, \bm{I}\sim\mathcal{H}'}[V^{\pi_{\gamma}}(\bm{I})]\geq\gamma\cdot\text{UP}(\mathcal{H}').
\]

It only remains to compare $\text{UP}(\mathcal{H})$ and $\text{UP}(\mathcal{H}')$. {Note that the sizes in $\mathcal{H}'$ are bigger, but since at most $\frac{1}{KT}$ can be added to the size of each of the $T$ queries, the total size added is at most $1/K$.  We will use this to argue that $\text{UP}(\mathcal{H}')\ge\frac{1}{1+1/K}\cdot\text{UP}(\mathcal{H})=(1-\frac{1}{K+1})\cdot\text{UP}(\mathcal{H})$.}

Denote by $\{x^*_t(r_t, d_t)\}$ the optimal solution to $\text{UP}(\mathcal{H})$, and denote
\[
\hat{x}_t(r_t, d_t')=\frac{K}{K+1}\cdot \mathbb{E}_{(r_t, \tilde{d}_t)\sim F_t} \left[x^*_t(r_t, \tilde{d}_t)|\tilde{d}_t\text{~is~rounded~up~to~}d'_t \right]~~~\forall (r_t, d_t').
\]
Note that if $d_t$ is rounded up to $d_t'$, it must hold that $d_t'-d_t\leq\frac{1}{KT}$.
Then, we have
\[\begin{aligned}
\sum_{t=1}^{T}\mathbb{E}_{(\tilde{r}_t, \tilde{d}_t')\sim F_t'}[\tilde{d}_t'\cdot \hat{x}_t(\tilde{r}_t, \tilde{d}_t')]&\leq\frac{K}{K+1}\cdot\sum_{t=1}^{T}\mathbb{E}_{(\tilde{r}_t, \tilde{d}_t)\sim F_t}[\tilde{d}_t\cdot x^*_t(\tilde{r}_t, \tilde{d}_t)]+\sum_{t=1}^{T}\mathbb{E}_{(\tilde{r}_t, \tilde{d}_t')\sim F_t'}[\frac{1}{KT}\cdot \hat{x}_t(\tilde{r}_t, \tilde{d}_t')]\\
&\leq \frac{K}{K+1}+\sum_{t=1}^{T}\frac{1}{KT}\cdot\frac{K}{K+1}=1,
\end{aligned}\]
where the last inequality holds by the feasibility of $\{x^*_t(r_t, d_t)\}$ and $\hat{x}_t(r_t, d_t')\in[0, \frac{K}{K+1}]$ for each $(r_t, d_t')$. We conclude that $\{\hat{x}_t(r_t, d_t')\}$ is a feasible solution to $\text{UP}(\mathcal{H}')$. Thus, it holds that
\[
\text{UP}(\mathcal{H}')\geq \sum_{t=1}^{T}\mathbb{E}_{(\tilde{r}_t, \tilde{d}_t')\sim F_t'}[\tilde{r}_t\cdot \hat{x}_t(\tilde{r}_t, \tilde{d}_t')]=\frac{K}{K+1}\cdot \sum_{t=1}^{T}\mathbb{E}_{(\tilde{r}_t, \tilde{d}_t)\sim F_t}[\tilde{r}_t\cdot x^*_t(\tilde{r}_t, \tilde{d}_t)]=\frac{K}{K+1}\cdot \text{UP}(\mathcal{H}),
\]
which implies that
\[
\mathbb{E}_{\pi_\gamma, \bm{I}\sim\mathcal{H}}[V^{\pi_{\gamma}}(\bm{I})]\geq\frac{K}{K+1}\cdot\gamma\cdot \text{UP}(\mathcal{H})
\]
when $\gamma\leq\frac{1}{3+e^{-2}}$.
In this way, we show how our algorithm can be implemented in $O(KT^2)$ time to achieve a guarantee of $\frac{K}{K+1}\cdot \frac{1}{3+e^{-2}}$.

\subsection{Improvement in the Unit-density Special Case} \label{sec:unitDensity}

In this section, we consider the unit-density special case of our online knapsack problem where $\tilde{r}_t=\tilde{d}_t$ for each $t$. Then we can suppress the notation $(\tilde{r}_t, \tilde{d}_t)$ and simply use $\tilde{d}_t$ to denote the size and the reward of query $t$. We modify our previous Best-fit Magician policy to obtain an improved guarantee. 
Following the LP reduction described in \Cref{sec:multiToSingle}, we restrict to the single-resource problem and obtain a sequence of probabilities $\{p_t(d), \forall d, \forall t\}$ by solving the LP relaxation \eqref{LPuppermultidimension}.
To maximize the total collected reward, it is enough for us to maximize the expected capacity utilization.

We now motivate our policy. Note that for the general case where the reward can be arbitrarily different from the size, our Best-fit Magician policy guarantees that each query is served with a \textit{common} ex-ante probability $\gamma$ after the LP relaxation. In this way, each query is treated ``equally'' so that no ``extreme'' reward can be assigned to the query with the smallest ex-ante probability, which would worsen the guarantee of our algorithm. However, in the unit-density case where the reward of each query is restricted to equal its size, it is no longer essential for us to treat each query ``equally.'' Instead, we will serve the \textit{later-arriving queries} with a \textit{smaller} probability to maximize capacity utilization. Our idea can be illustrated through the following example.

\textbf{Example $1$.} We focus on the following example with $4$ queries to illustrate how to maximize capacity utilization, where the size of each query $t$ is $d_t$ and query $t$ becomes active with probability $p_t$:
\[
(\p_1,d_1)=(\frac{2}{3}, \frac{1}{2}),~ (\p_2,d_2)=(\frac{2}{3}, \frac{1}{2}),~ (\p_3,d_3)=(1-\epsilon, \frac{1}{3}) \text{~and~} (\p_4,d_4)=(\frac{\epsilon}{3}, 1).
\]
Note that if we apply the Best-fit Magician policy with a ratio $\gamma$, then the feasible condition of our policy is
\begin{equation}\label{newexamplefeasibility}
P(\tilde{X}_{3,\gamma}=0)=1-\frac{8\gamma}{9}-\frac{5\gamma(1-\epsilon)}{9}\geq\gamma,
\end{equation}
which implies that $\gamma\leq\frac{9}{22}$ as $\epsilon\rightarrow0$. Thus, we conclude that the Best-fit Magician policy can guarantee an expected capacity utilization of at most $\frac{9}{22}$ for this example. However, if we serve each query with a different probability, i.e., if we serve query $t$ with probability $\gamma_t$ whenever it arrives, then the feasibility condition \eqref{newexamplefeasibility} becomes $P(\tilde{X}_{3, \gamma_1, \gamma_2, \gamma_3}=0)\geq\gamma_4$. We can set $\gamma_4=0$, which enables us to set a larger value for $\gamma_1, \gamma_2$, and $\gamma_3$, which in turn leads to a larger capacity utilization.

To be more specific, if we denote by $\tilde{X}_t$ the distribution of capacity utilization at the end of period $t=1,\dots,4$ under the serving probabilities $\gamma_1, \gamma_2, \gamma_3, \gamma_4$, then the distribution of $\tilde{X}_1$ is
\[
P(\tilde{X}_1=\frac{1}{2})=\p_1\cdot\gamma_1=\frac{2\gamma_1}{3}\text{~~and~~}  P(\tilde{X}_1=0)=1-\frac{2\gamma_1}{3}.
\]
We still proceed to serve query $2$; however, we serve it with probability $\gamma_2$ whenever it arrives. Then, we obtain the following distribution of $\tilde{X}_2$:
\[\begin{aligned}
&P(\tilde{X}_2=1)=\gamma_1\cdot\p_1\cdot\p_2=\frac{4\gamma_1}{9},~~P(\tilde{X}_2=\frac{1}{2})=\gamma_2\cdot\p_2-\gamma_1\cdot\p_1\cdot\p_2+\p_1\cdot\gamma_1-\gamma_1\cdot\p_1\cdot\p_2=
\frac{2\gamma_2}{3}-\frac{2\gamma_1}{9}\\
&P(\tilde{X}_2=0)=1-P(\tilde{X}_2=1)-P(\tilde{X}_2=\frac{1}{2})=1-\frac{2\gamma_1}{9}-\frac{2\gamma_2}{3}.
\end{aligned}\]
Finally, we serve query $3$ with probability $\gamma_3$ whenever it arrives. Then, the distribution of $\tilde{X}_3$ can be obtained as follows:
\[\begin{aligned}
&P(\tilde{X}_3=1)=\frac{4\gamma_1}{9},~~P(\tilde{X}_3=\frac{5}{6})=(\frac{2\gamma_2}{3}-\frac{2\gamma_1}{9})\cdot(1-\epsilon),~~P(\tilde{X}_3=\frac{5}{6})=(\frac{2\gamma_2}{3}-\frac{2\gamma_1}{9})\cdot\epsilon,\\
&P(\tilde{X}_3=\frac{1}{3})=(\gamma_3-\frac{2\gamma_2}{3}+\frac{2\gamma_1}{9})\cdot(1-\epsilon),~~P(\tilde{X}_3=0)=1-\frac{4\gamma_1}{9}-\gamma_3+(\frac{2\gamma_1}{9}-\frac{2\gamma_2}{3}+\gamma_3)\cdot\epsilon.
\end{aligned}\]
Note that as long as $P(\tilde{X}_t=0)\geq\gamma_{t+1}$ for each $t=1,2,3$, the random variables $\{\tilde{X}_t\}_{t=1,2,3}$ are well defined and the above procedure is feasible. Obviously, the probabilities $P(\tilde{X}_t=0)$ are decreasing in $t$, which corresponds to the fact that the measure of the sample paths with no capacity consumed decreases over time. Thus, it is natural to set $\gamma_t$ to decrease in terms of $t$. Specifically, we can set
\[
\gamma_2=P(\tilde{X}_1=0)=1-\frac{2\gamma_1}{3}\text{~~and~~}\gamma_3=P(\tilde{X}_2=0)=1-\frac{2\gamma_1}{9}-\frac{2\gamma_2}{3}=\frac{1}{3}+\frac{2\gamma_1}{9}
\]
as a function of $\gamma_1$. We can also set $\gamma_4=0$ since the expected capacity utilization of the last query is $0$ as $\epsilon\rightarrow0$. Then, the only feasibility conditions we need to satisfy are
\[
\gamma_1\geq\gamma_2\geq\gamma_3\text{~~and~~}  P(\tilde{X}_3=0)\geq\gamma_4=0,
\]
which implies that $\gamma_1$ can be set as large as $3/4$ when $\epsilon\rightarrow0$. Then $\gamma_2=\frac{1}{2}$ and $\gamma_3=\frac{1}{2}$ as $\epsilon\rightarrow0$. Thus, we can guarantee a capacity utilization of $\frac{7}{12}$, which improves on the previous utilization of $\frac{9}{22}$ under the Best-fit Magician policy.
\Halmos

We now present our policy in \Cref{UDpolicy}, which is based on a sequence of probabilities with which we serve each query based on its LP relaxation value, denoted by $\boldsymbol{\gamma}=(\gamma_1, \dots, \gamma_T)$ and satisfying $1\geq\gamma_1\geq\gamma_2\geq\dots\geq\gamma_T\geq0$. We will specify later how to determine the vector $\boldsymbol{\gamma}$ such that our policy is feasible and achieves the improved guarantee. The final algorithm for the multi-resource setting is presented in \Cref{sec:FinalAlgorithm}.

\begin{algorithm}
\caption{Pre-processed algorithm for unit-density special case}
\begin{algorithmic}[1]
\State Input: a sequence $\boldsymbol{\gamma}$ satisfying $1\geq\gamma_1\geq\gamma_2\geq\dots\geq\gamma_T\geq0$ and a sequence of probabilities $\{p_t(d_t), \forall t,\forall d_t\}$ satisfying $\sum_t \sum_{d_t} p_t(d_t)\cdot d_t\leq 1$ and $\sum_{d_t} p_t(d_t)$ for each $t$
\State We initialize $\tilde{X}_{0,\boldsymbol{\gamma}}$ as a random variable taking value $0$ with probability $1$.
\For{$t=1,2,\dots,T$}
For each realization $d_t>0$, we denote a threshold $\eta_{t,\boldsymbol{\gamma}}(d_t)$ satisfying
        \begin{equation}\label{defineetaUD}
        P(\eta_{t,\boldsymbol{\gamma}}(d_t)<\tilde{X}_{t-1,\boldsymbol{\gamma}}\leq1-d_t)\leq\gamma_t\leq P(\eta_{t,\boldsymbol{\gamma}}(d_t)\leq\tilde{X}_{t-1,\boldsymbol{\gamma}}\leq1-d_t).
        \end{equation}
\State Initialize $\tilde{X}_{t, \boldsymbol{\gamma}}$ as $\tilde{X}_{t-1, \boldsymbol{\gamma}}$. We now update $\tilde{X}_{t, \boldsymbol{\gamma}}$ to reflect the distribution of the capacity utilization at the end of period $t-1$ via the following procedures. We first let the distribution of $\tilde{X}_{t, \boldsymbol{\gamma}}$ be identical to $\tilde{X}_{t-1,\boldsymbol{\gamma}}$. Then, for each size realization $d_t>0$ and each point $x\in (\eta_{t,\boldsymbol{\gamma}}(d_t) ,1-d_t]$, we decrease the probability of $P(\tilde{X}_{t, \boldsymbol{\gamma}}=x)$ by $p_t(d_t)\cdot P(\tilde{X}_{t-1,\boldsymbol{\gamma}}=x)$ and increase the probability of $P(\tilde{X}_{t, \boldsymbol{\gamma}}=x+d_t)$ by $p_t(d_t)\cdot P(\tilde{X}_{t-1,\boldsymbol{\gamma}}=x)$. For each size realization $d_t>0$ and $x=\eta_{t,\boldsymbol{\gamma}}(d_t)$, we decrease the probability of $P(\tilde{X}_{t, \boldsymbol{\gamma}}=x)$ by $p_t(d_t)\cdot(\gamma_t-P(\eta_{t,\boldsymbol{\gamma}}(d_t)<\tilde{X}_{t-1,\boldsymbol{\gamma}}\leq1-d_t))$ and increase the probability of $P(\tilde{X}_{t, \boldsymbol{\gamma}}=x+d_t)$ by $p_t(d_t)\cdot(\gamma_t-P(\eta_{t,\boldsymbol{\gamma}}(d_t)<\tilde{X}_{t-1,\boldsymbol{\gamma}}\leq1-d_t))$.
\EndFor
\State Output: the distributions $\{\tilde{X}_{t,\bm{\gamma}}, \forall t\}$ and the thresholds $\{ \eta_{t, \bm{\gamma}}(d_t), \forall t, \forall d_t \}$. 
\end{algorithmic}
\label{UDpolicy}
\end{algorithm}

\textbf{Remarks about \Cref{UDpolicy}.}
\begin{enumerate}
\item The policy described in \Cref{UDpolicy} is implemented as follows: when each query $t=1,\ldots,T$ arrives, conditional it taking size $\tilde{d}_t$ and the capacity utilization being $\tilde{X}_{t-1,\boldsymbol{\gamma}}$, serve query $t$ w.p.~1 if $\eta_{t,\boldsymbol{\gamma}}(\tilde{d}_t)< \tilde{X}_{t-1,\boldsymbol{\gamma}}\leq1-\tilde{d}_t$, w.p.~$\frac{\gamma_t-P(\eta_{t,\boldsymbol{\gamma}}(\tilde{d}_t)<\tilde{X}_{t-1,\boldsymbol{\gamma}}\leq1-\tilde{d}_t)}{P(\eta_{t,\boldsymbol{\gamma}}(\tilde{d}_t)=\tilde{X}_{t-1,\boldsymbol{\gamma}})}$ if $\tilde{X}_{t-1,\boldsymbol{\gamma}}=\eta_{t,\boldsymbol{\gamma}}(\tilde{d}_t)$, and w.p.~0 otherwise.
It will be preserved that $\tilde{X}_{t-1,\boldsymbol{\gamma}}$ is the true distribution of capacity utilization when query $t$ arrives, and hence conditional on any size $\tilde{d}_t$ taken by query $t$, it will be served with probability exactly $$P(\eta_{t,\boldsymbol{\gamma}}(\tilde{d}_t)<\tilde{X}_{t-1,\boldsymbol{\gamma}}\leq1-\tilde{d}_t)+\frac{\gamma_t-P(\eta_{t,\boldsymbol{\gamma}}(\tilde{d}_t)<\tilde{X}_{t-1,\boldsymbol{\gamma}}\leq1-\tilde{d}_t)}{P(\eta_{t,\boldsymbol{\gamma}}(\tilde{d}_t)=\tilde{X}_{t-1,\boldsymbol{\gamma}})}P(\eta_{t,\boldsymbol{\gamma}}(\tilde{d}_t)=\tilde{X}_{t-1,\boldsymbol{\gamma}})=\gamma_t.$$
\item \Cref{UDpolicy} will only describe a valid policy if the sequence of parameters $\boldsymbol{\gamma}$ leads to threshold values $\eta_{t,\boldsymbol{\gamma}}(d_t)$ that are well-defined (i.e., \eqref{defineetaUD} can be satisfied) for all $t$ and $d_t$. Note that if the sequence $\boldsymbol{\gamma}$ is uniform, i.e., $\gamma_1=\dots=\gamma_T$, then the modified Best-fit Magician policy in \Cref{UDpolicy} is identical to the Best-fit Magician policy in \Cref{definedistributionrandomsize}. The rest of this section is devoted to determining the sequence $\boldsymbol{\gamma}$ such that our policy in \Cref{UDpolicy} is feasible. Once the sequence $\boldsymbol{\gamma}$ is determined, the threshold $\eta_{t,\boldsymbol{\gamma}}(d_t)$ can be constructed on-the-fly and our algorithm and analysis hold even if an adaptive adversary chooses the arrival order of queries.
\end{enumerate}


Our approach to determine the sequence $\boldsymbol{\gamma}$ relies crucially on the distribution of capacity utilization at the end of each period $t$, denoted by $\tilde{X}_{t,\boldsymbol{\gamma}}$, assuming that the sequence $\boldsymbol{\gamma}$ is feasible.
For any $a$ and $b$, we denote $\mu_{t,\boldsymbol{\gamma}}(a,b]=P(a<\tilde{X}_{t,\boldsymbol{\gamma}}\leq b)$, where $\tilde{X}_{t,\boldsymbol{\gamma}}$ denotes the distribution of capacity utilization at the end of period $t$ in \Cref{UDpolicy}. Then, we can still establish the following invariant, which generalizes Lemma \ref{generalsamplepathlemma} from the uniform sequence to any sequence $\boldsymbol{\gamma}$ satisfying $1\geq\gamma_1\geq\dots\geq\gamma_T\geq0$.
\begin{lemma}\label{UDsamplepathlemma}
For any $0<b\leq\frac{1}{2}$ and any sequence $\boldsymbol{\gamma}$ satisfying $1\geq\gamma_1\geq\dots\geq\gamma_T\geq0$ such that $\tilde{X}_{t,\boldsymbol{\gamma}}$ is well defined, the following inequality
\begin{equation}\label{UDsamplepathinequality}
\frac{1}{\gamma_1}\cdot\mu_{t,\boldsymbol{\gamma}}(0,b]\leq\exp(-\frac{1}{\gamma_1}\cdot\mu_{t,\boldsymbol{\gamma}}(b,1-b])
\end{equation}
holds for any $t=1,\dots,T$.
\end{lemma}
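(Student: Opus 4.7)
I would prove \Cref{UDsamplepathlemma} by induction on $t$, closely following the proof of \Cref{generalsamplepathlemma} and using only $\gamma_t\leq\gamma_1$ from the monotonicity assumption. For the base case $t=0$, the random variable $\tilde{X}_{0,\boldsymbol{\gamma}}$ puts all mass at $0$, so $A_0 := \mu_{0,\boldsymbol{\gamma}}(0,b] = 0$ and $B_0 := \mu_{0,\boldsymbol{\gamma}}(b,1-b] = 0$, and the inequality reads $0\leq 1$.

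For the inductive step, assume the invariant holds at $t-1$. The key structural observation, inherited from \Cref{UDpolicy}'s Best-fit rule of serving the highest-utilization eligible paths first, is that mass can enter $(0,b]$ at step $t$ only via a previously empty sample path being served by a query of size $d_t\leq b$, and only when $\mu_{t-1,\boldsymbol{\gamma}}(0,1-d_t]<\gamma_t$. Since $d_t\leq b\leq 1/2$ implies $1-d_t\geq 1-b$, this last condition forces
\[
A_{t-1}+B_{t-1}\;\leq\;\mu_{t-1,\boldsymbol{\gamma}}(0,1-b]\;<\;\gamma_t\;\leq\;\gamma_1,
\]
so the aggregate new mass in $(0,b]$ contributed by step $t$ is at most $(\gamma_1-A_{t-1}-B_{t-1})^+$. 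All other per-realization motion is a rightward translation by $d_t$ of a contiguous interval at the top of $(0,1-d_t]$; in particular, mass leaves $(b,1-b]$ only upward into $(1-b,1]$ and never re-enters $(0,b]$.

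With these facts in hand, I would complete the induction by tracking the potential $\Phi_t := A_t\cdot\exp(B_t/\gamma_1)$ and showing that $\Phi_t\leq\gamma_1$ is preserved. Case-analyzing the realized $d_t$ against $b$ and $1-b$ (and, for $d_t\leq b$, further splitting on whether the threshold $\eta_{t,\boldsymbol{\gamma}}(d_t)$ lies above or below $b$) yields per-realization bounds on $\Delta A_t^{(d_t)}$ and $\Delta B_t^{(d_t)}$; aggregating with weights $p(d_t)$ then gives $\Phi_t-\Phi_{t-1}\leq 0$ whenever $\Phi_{t-1}\leq\gamma_1$, which is exactly what is needed to maintain the invariant.

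The main obstacle is that the invariant $A\leq\gamma_1\exp(-B/\gamma_1)$ is strictly convex, not linear, in $(A,B)$, so one cannot first establish a per-realization invariant conditional on $d_t$ and then combine via Jensen's inequality with weights $p(d_t)$. The proof must therefore operate on the mixed distribution $\tilde{X}_{t,\boldsymbol{\gamma}}$ directly, which is why the aggregate bound $(\gamma_1-A_{t-1}-B_{t-1})^+$ above is essential: it captures the net effect of step $t$ without conditioning on the realization. The monotonicity of $\boldsymbol{\gamma}$ enters only through $\gamma_t\leq\gamma_1$ in that bound; the remainder of the argument is identical to that of \Cref{generalsamplepathlemma}.
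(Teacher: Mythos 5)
Your structural observations are sound and line up with the paper's proof: the base case, the fact that mass enters $(0,b]$ only from the atom at $0$ via realizations of size $d_t\leq b$ for which $\eta_{t,\boldsymbol{\gamma}}(d_t)=0$, the resulting bound $(\gamma_1-A_{t-1}-B_{t-1})^+$ on the aggregate inflow (which is where $\gamma_t\leq\gamma_1$ enters), that all other motion is a rightward translation, and that a per-realization invariant cannot be combined by Jensen because convexity goes the wrong way. The step where you actually conclude, however, is wrong as stated: the potential $\Phi_t=A_t\exp(B_t/\gamma_1)$ is not monotone --- for instance $\Phi_0=0$ while $\Phi_1>0$ whenever the first query can be small with positive probability --- so ``$\Phi_t-\Phi_{t-1}\leq0$ whenever $\Phi_{t-1}\leq\gamma_1$'' is false. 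What must be shown is only the weaker $\Phi_t\leq\gamma_1$, and obtaining this from aggregate bounds on $A_t$ and $B_t$ is not a formality.

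More substantively, you have not produced the part of the argument where the work actually lies. The paper derives explicit upper bounds on $\mu_t(0,b]$ and $\mu_t(b,1-b]$ after partitioning realizations into (roughly) $\hat p_1$ for $\{\eta=0,\ d_t\leq b\}$, $\hat p_2$ for $\{\eta=0,\ b<d_t\leq 1-b\}$, and $\hat p_3$ for $\{\eta>0,\ d_t\leq 1-b\}$; it then splits on whether $\hat p_1>0$ (which forces $\gamma_t\geq\mu_{t-1,\boldsymbol{\gamma}}(0,1-b]$) or $\hat p_1=0$, and closes with a chain of elementary estimates ($e^{-x}\geq 1-x$, $e^{-x}\leq 1$) combined with the inductive hypothesis. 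None of this appears in your sketch, and appealing to ``the remainder is identical to that of \Cref{generalsamplepathlemma}'' does not discharge it, since the paper gives no separate proof of \Cref{generalsamplepathlemma} (that lemma is stated to follow from \Cref{UDsamplepathlemma}). Your aggregate inflow bound is an input to that chain, not a substitute for it: it controls only what enters $(0,b]$, while the invariant also requires controlling what leaves $(0,b]$ and how $\mu_t(b,1-b]$ evolves. Finally, the secondary split you propose, on whether $\eta_{t,\boldsymbol{\gamma}}(d_t)$ lies above or below $b$, is not the distinction the argument needs; the relevant split is $\eta=0$ versus $\eta>0$, since that is what determines whether the atom at $0$ can lose mass.
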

Note that the ``difficult'' case for proving the invariant in  \Cref{generalsamplepathlemma} corresponds to when there is a probability mass moved from point $0$ during the definition of $\tilde{X}_{t,\gamma}$. Then, both $\mu_t(0,b]$ and $\mu_t(b,1-b]$ can become larger, for some $b\in(0,1/2)$. However, when $\boldsymbol{\gamma}$ is a non-increasing sequence, the amount of probability mass that is moved from $0$ into either the interval $(0,b]$ or $(b,1-b]$ under $\gamma_t$ is smaller than the one under $\gamma_1$, which makes the invariant easier to hold.
The proof of \Cref{UDsamplepathlemma} is relegated to \Cref{proofUDlemma}. Using \Cref{UDsamplepathlemma}, we can modify the proof of Theorem \ref{knapsackratiotheorem} to obtain the following result, which will finally lead to our choice of the feasible sequence $\boldsymbol{\gamma}$ and the guarantee of our algorithm.
\begin{theorem}\label{UDemptyboundtheorem}
For any $t$, denote $\psi_t=\mathbb{E}_{\tilde{d}_t\sim F_t}[\tilde{d}_t]$. Then for any sequence $\boldsymbol{\gamma}$ satisfying $1\geq\gamma_1\geq\dots\geq\gamma_T\geq0$ such that $\tilde{X}_{t,\boldsymbol{\gamma}}$ is well defined, the following inequality
\begin{equation}\label{UDbound1}
P(\tilde{X}_{t,\boldsymbol{\gamma}}=0)\geq\min\{1-\gamma_1-\sum_{\tau=1}^{t}\gamma_{\tau}\cdot\psi_{\tau},~~1-2\cdot\sum_{\tau=1}^{t}\gamma_{\tau}\cdot \psi_\tau-\gamma_1\cdot \exp(-\frac{2}{\gamma_1}\cdot\sum_{\tau=1}^{t}\gamma_{\tau}\cdot \psi_\tau) \}
\end{equation}
holds for each $t=1,\dots,T$.
\end{theorem}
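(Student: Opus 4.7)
My plan is to extend the invariant-based analysis of \Cref{knapsackratiotheorem} to the non-uniform sequence $\boldsymbol{\gamma}$, by combining the generalized invariant from \Cref{UDsamplepathlemma} with an expectation identity. The starting observation is that, since \Cref{UDpolicy} serves each query $\tau$ with unconditional probability $\gamma_\tau$, the expected capacity it consumes equals $\gamma_\tau \mathbb{E}[\tilde{d}_\tau] = \gamma_\tau \psi_\tau$, so
\[
\mathbb{E}[\tilde{X}_{t,\boldsymbol{\gamma}}] \;=\; \sum_{\tau=1}^{t}\gamma_\tau\psi_\tau \;=:\; E.
\]

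Next, for any $b\in(0,1/2]$, I would partition $(0,1]$ into the ``low'' piece $(0,b]$, the ``middle'' piece $(b,1-b]$, and the ``high'' piece $(1-b,1]$. The invariant of \Cref{UDsamplepathlemma} yields
\[
\mu_{t,\boldsymbol{\gamma}}(0,b] \;\le\; \gamma_1\,\exp\!\Bigl(-\tfrac{1}{\gamma_1}\mu_{t,\boldsymbol{\gamma}}(b,1-b]\Bigr),
\]
while the expectation identity yields
\[
b\,\mu_{t,\boldsymbol{\gamma}}(b,1-b] + (1-b)\,\mu_{t,\boldsymbol{\gamma}}(1-b,1] \;\le\; E.
\]
Summing $\mu_{t,\boldsymbol{\gamma}}(0,1]$ over the three pieces and substituting the two displayed inequalities gives a parametric upper bound on $\mu_{t,\boldsymbol{\gamma}}(0,1]$ as a function of $b$ and of the single scalar $\alpha:=\mu_{t,\boldsymbol{\gamma}}(b,1-b]$. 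Bounding $P(\tilde{X}_{t,\boldsymbol{\gamma}}=0)=1-\mu_{t,\boldsymbol{\gamma}}(0,1]$ from below then reduces to choosing $b$ and analyzing $\alpha$.

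The proof then shows that for every feasible measure, either bound (i) $\mu_{t,\boldsymbol{\gamma}}(0,1]\le \gamma_1+E$ or bound (ii) $\mu_{t,\boldsymbol{\gamma}}(0,1]\le 2E+\gamma_1 e^{-2E/\gamma_1}$ holds, which together imply the stated $\min$ lower bound on $P(\tilde{X}_{t,\boldsymbol{\gamma}}=0)$. Bound (i) is tight when the mass of $\tilde{X}_{t,\boldsymbol{\gamma}}$ consists of $\gamma_1$ near $0$ together with $E$ near $1$ (saturating the ``high'' interval). Bound (ii) is tight when the mass consists of $2E$ concentrated just above $1/2$ (saturating the expectation constraint with $\mu_{t,\boldsymbol{\gamma}}(1-b,1]=0$ and $\alpha=2E$), together with $\gamma_1 e^{-2E/\gamma_1}$ near $0$ (saturating the invariant at a $b$ just below $1/2$). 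In each case one directly verifies the corresponding bound from the parametric estimate above.

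The main obstacle is justifying that for every feasible $\mu_{t,\boldsymbol{\gamma}}$ one of the two bounds is attained, i.e.\ that the pair $(\mu_{t,\boldsymbol{\gamma}}(0,1],E)$ always lies in the union $\{M\le\gamma_1+E\}\cup\{M\le 2E+\gamma_1 e^{-2E/\gamma_1}\}$. Concretely, one case-splits on whether the ``middle'' mass $\alpha$ is close to $2E$ or not: in the former case the invariant's exponential term decays like $e^{-2E/\gamma_1}$ and bound (ii) follows; in the latter case, $\alpha$ being small means the expectation budget $E$ must be consumed by mass close to $1$, and a cruder estimate delivers bound (i). This case analysis is the principal technical step, and it parallels—with the extra bookkeeping required by the non-uniform sequence $\boldsymbol{\gamma}$ and the fact that $\gamma_1$ plays the role of the uniform $\gamma$—the argument behind \Cref{knapsackratiotheorem}.
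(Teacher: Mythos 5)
Your high-level plan---use $\mathbb{E}[\tilde{X}_{t,\boldsymbol{\gamma}}]=\sum_{\tau\le t}\gamma_\tau\psi_\tau=:E$ together with the invariant of \Cref{UDsamplepathlemma}, and split into the two regimes corresponding to $\gamma_1+E$ and $2E+\gamma_1 e^{-2E/\gamma_1}$---is the right intuition, and your description of the two extremal configurations matches the paper's. However, the central technical step, which you acknowledge as ``the main obstacle,'' does not go through as sketched, and I believe it cannot: the single-$b$ parametric bound
\[
\mu_{t,\boldsymbol{\gamma}}(0,1]\;\le\;\gamma_1 e^{-\alpha/\gamma_1}+\alpha+\frac{E-b\alpha}{1-b},\qquad \alpha:=\mu_{t,\boldsymbol{\gamma}}(b,1-b],
\]
is strictly weaker than the stated bound, \emph{even after optimizing over $b$}. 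Consider, e.g., $\gamma_1=0.4$ with mass $\gamma_1 e^{-1}\approx0.147$ near $0$, mass $\alpha_0=0.4$ just above $1/2$, and mass $m=0.1$ near $1$ (this saturates the invariant at every $s\in[\varepsilon,1/2]$ so it is feasible). Here $E\approx0.305$ and $M:=\mu_{t,\boldsymbol{\gamma}}(0,1]\approx0.647$, so $\max\{\gamma_1+E,\,2E+\gamma_1 e^{-2E/\gamma_1}\}\approx0.705$, and indeed $M\le 0.705$. But for any $b$ your right-hand side is minimized at $b$ just below $1/2$, where it evaluates to roughly $0.147+0.4+2m\approx0.76>0.705$. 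The trouble is structural: for $b$ near $1/2$, the expectation inequality $b\alpha+(1-b)\beta\le E$ effectively charges the ``high'' mass $\beta$ only a factor $(1-b)\approx 1/2$, so the resulting $\beta$ upper bound is about $2m$, not $m$; and for smaller $b$, the exponential in the invariant weakens. No single $b$ repairs both simultaneously.

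The paper's proof avoids this by not choosing a single split point at all. Writing $U_t(s)=\mu_{t,\boldsymbol{\gamma}}(0,s]$, it uses the exact identity $E = U_t(1)-\int_0^1 U_t(s)\,ds$ from integration by parts, rather than the one-shot lower bound on $E$. It then bounds $\int_0^1 U_t(s)\,ds$ by splitting $[0,1]$ at a data-dependent threshold $s^*$ and a scalar $u^*$ defined through $\gamma_1 u^*-\gamma_1\ln u^*=U_t(1)$, and---crucially---applies the invariant inequality $U_t(s)+U_t(1-s)\le 2U_t(s)-\gamma_1\ln(U_t(s)/\gamma_1)$ at \emph{every} $s\in[s^*,1/2]$, together with quasi-convexity of $x\mapsto 2x-\gamma_1\ln(x/\gamma_1)$, to control the middle integral. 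It is this ``use the invariant at all scales'' feature of the integral argument that makes the two-case split (compare $2\gamma_1 u^*-\gamma_1\ln u^*$ with $2\gamma_1$) close out, which a single-$b$ argument misses. If you want to keep your decomposition, you would need to integrate the invariant over $s$---at which point you are essentially rederiving the paper's proof.
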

The proof is relegated to \Cref{proofUDemptytheorem}. Note that for each $t$, if the random variables $\tilde{X}_{\tau,\boldsymbol{\gamma}}$ are well defined for each $\tau\leq t$, and $\gamma_{t+1}$ satisfies
\begin{equation}\label{UDgammain}
0\leq\gamma_{t+1}\leq \min\{1-\gamma_1-\sum_{\tau=1}^{t}\gamma_{\tau}\cdot\psi_{\tau},~~1-2\cdot\sum_{\tau=1}^{t}\gamma_{\tau}\cdot \psi_\tau-\gamma_1\cdot \exp(-\frac{2}{\gamma_1}\cdot\sum_{\tau=1}^{t}\gamma_{\tau}\cdot \psi_\tau) \},
\end{equation}
then \eqref{UDbound1} implies that $P(\tilde{X}_{t,\boldsymbol{\gamma}}=0)\geq\gamma_{t+1}$. Thus, we know that there always exists a threshold $\eta_{t+1,\boldsymbol{\gamma}}(d_t)$ such that \eqref{defineetaUD} holds (since it can be set to $0$), and the random variable $\tilde{X}_{t+1,\boldsymbol{\gamma}}$ is well defined. We apply the above argument iteratively for each $t=1$ up to $t=T$. In this way, we conclude that a sufficient condition for the non-increasing sequence $\boldsymbol{\gamma}$ to be feasible for our policy in \Cref{UDpolicy} is that \eqref{UDgammain} holds for each $t$.

Note that the expected utilization of our policy in \Cref{UDpolicy} is  $\sum_{t=1}^{T}\gamma_t\cdot\psi_t$.
The above analysis implies we can focus on solving the following optimization problem to determine the sequence $\boldsymbol{\gamma}$:
\begin{align}
\text{OP}(\boldsymbol{\psi}):=&~\max~~\sum_{t=1}^{T}\gamma_t\cdot\psi_t \label{UDoptimization}\\
&~~\mbox{s.t.}~~~\gamma_{t+1}\leq 1-\gamma_1-\sum_{\tau=1}^{t}\gamma_{\tau}\cdot\psi_{\tau},~~~\forall t=1,\dots,T-1\nonumber\\
&~~~~~~~~~\gamma_{t+1}\leq 1-2\cdot\sum_{\tau=1}^{t}\gamma_{\tau}\cdot \psi_\tau-\gamma_1\cdot \exp(-\frac{2}{\gamma_1}\cdot\sum_{\tau=1}^{t}\gamma_{\tau}\cdot \psi_\tau),~~~\forall t=1,\dots,T-1\nonumber\\
&~~~~~~~~~1\geq\gamma_1\geq\dots\geq\gamma_T\geq0\nonumber,
\end{align}
where $\boldsymbol{\psi}=(\psi_1,\dots,\psi_T)$ and it holds that $\sum_{t=1}^{T}\psi_t\leq 1$.

Our solution to $\text{OP}(\boldsymbol{\psi})$ can be obtained from the following function over the interval $[0,1]$, the value of which is iteratively computed based on an initial value $\gamma_0\in(0,1)$:
\begin{align}
h_{\gamma_0}(0)=&\gamma_0\label{UDfunctiondefine}\\
h_{\gamma_0}(t)=&\min\left\{ \lim_{\tau\rightarrow t-}h_{\gamma_0}(\tau),~ 1-\gamma_0-\int_{\tau=0}^{t}h_{\gamma_0}(\tau)d\tau,\right.\nonumber\\
 &~~~~~~~~\left.1-2\cdot\int_{\tau=0}^{t}h_{\gamma_0}(\tau)d\tau-\gamma_0\cdot\exp(-\frac{2}{\gamma_0}\cdot\int_{\tau=0}^{t}h_{\gamma_0}(\tau)d\tau  )  \right\}\nonumber.
\end{align}
It is easy to see that the function $h_{\gamma_0}(\cdot)$ is non-increasing and non-negative over $[0,1]$ as long as $0<\gamma_0<1$. Thus, the function $h_{\gamma_0}(\cdot)$ specifies a feasible solution to $\text{OP}(\boldsymbol{\psi})$ when each component of $\boldsymbol{\psi}$ is infinitesimally small and $T\rightarrow\infty$, where $h_{\gamma_0}(t)$ corresponds to $\gamma_t$ for each $t\in[0,1]$.

We now show that for arbitrary $\boldsymbol{\psi}$ satisfying $\sum_{t=1}^{T}\psi_t\leq 1$, we can still construct a feasible solution to $\text{OP}(\boldsymbol{\psi})$ based on the function $h_{\gamma_0}(\cdot)$ for each fixed $0<\gamma_0<1$. Specifically, we define a set of indices $0=k_0\leq k_1\leq\dots\leq k_T\leq 1$ such that $\psi_t=k_t-k_{t-1}$ for each $t=1,\dots,T$. Then, we define
\begin{equation}\label{UDfeasiblegamma}
\hat{\gamma}_t=\frac{\int_{\tau=k_{t-1}}^{k_t}h_{\gamma_0}(\tau)d\tau}{k_t-k_{t-1}}=\frac{\int_{\tau=k_{t-1}}^{k_t}h_{\gamma_0}(\tau)d\tau}{\psi_t}
\end{equation}
for each $t=1,\dots,T$. We show in the following lemma that $\{\hat{\gamma}_t\}_{t=1}^T$ is a feasible solution to $\text{OP}(\boldsymbol{\psi})$.
\begin{lemma}\label{UDoconstgammalemma}
For each fixed $0<\gamma_0<1$, let $h_{\gamma_0}(\cdot)$ be the function defined in \eqref{UDfunctiondefine}. Then, for any $\boldsymbol{\psi}$, the solution $\{\hat{\gamma}_t\}_{t=1}^T$ is a feasible solution to $\text{OP}(\boldsymbol{\psi})$, where $\hat{\gamma}_t$ is as defined in \eqref{UDfeasiblegamma} for each $t=1,\dots,T$.
\end{lemma}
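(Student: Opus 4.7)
The plan is to verify each constraint of $\text{OP}(\boldsymbol{\psi})$ in turn, leveraging only two elementary properties of $h_{\gamma_0}(\cdot)$: it is non-increasing on $[0,1]$ (immediate from the \(\min\) with $\lim_{\tau\to t^-} h_{\gamma_0}(\tau)$ in the definition \eqref{UDfunctiondefine}), and its starting value satisfies $h_{\gamma_0}(0)=\gamma_0<1$. A useful algebraic identity is that, by a telescoping argument and a change of variables, $\sum_{\tau=1}^{t}\hat{\gamma}_{\tau}\psi_{\tau} = \int_{0}^{k_t} h_{\gamma_0}(\tau)\,d\tau$ for every $t$.

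First I would establish monotonicity and range of the candidate sequence. Since $h_{\gamma_0}$ is non-increasing, its average over $[k_{t-1},k_t]$ sits between its endpoint values: $h_{\gamma_0}(k_t)\le \hat{\gamma}_t \le h_{\gamma_0}(k_{t-1})$. Chaining these bounds across consecutive intervals gives $\hat{\gamma}_{t+1}\le h_{\gamma_0}(k_t)\le \hat{\gamma}_t$, so $1>\gamma_0 = h_{\gamma_0}(0)\ge \hat{\gamma}_1\ge\cdots\ge\hat{\gamma}_T\ge 0$, which verifies the ordering constraint in \eqref{UDoptimization} and simultaneously records the crucial inequality $\hat{\gamma}_1\le\gamma_0$.

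Next I would handle the two substantive constraints. For any $t\ge 1$, applying $\hat{\gamma}_{t+1}\le h_{\gamma_0}(k_t)$ together with the two bounds embedded in the definition of $h_{\gamma_0}(k_t)$ yields, using the telescoping identity,
\begin{align*}
\hat{\gamma}_{t+1} &\le 1-\gamma_0-\sum_{\tau=1}^{t}\hat{\gamma}_{\tau}\psi_{\tau}, \\
\hat{\gamma}_{t+1} &\le 1-2\sum_{\tau=1}^{t}\hat{\gamma}_{\tau}\psi_{\tau} -\gamma_0\exp\!\Bigl(-\tfrac{2}{\gamma_0}\sum_{\tau=1}^{t}\hat{\gamma}_{\tau}\psi_{\tau}\Bigr).
\end{align*}
To convert the $\gamma_0$ on the right-hand sides into the $\hat{\gamma}_1$ that appears in $\text{OP}(\boldsymbol{\psi})$, I would use $\hat{\gamma}_1\le\gamma_0$. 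The linear case is immediate since $-\gamma_0\le -\hat{\gamma}_1$. For the exponential bound I would invoke the short calculus fact that $f(\gamma)=\gamma\,e^{-2x/\gamma}$ is strictly increasing on $(0,\infty)$ for every $x\ge 0$ (its derivative is $e^{-2x/\gamma}(1+2x/\gamma)>0$), which gives $\gamma_0\exp(-2x/\gamma_0)\ge \hat{\gamma}_1\exp(-2x/\hat{\gamma}_1)$ with $x=\sum_{\tau=1}^t\hat{\gamma}_\tau\psi_\tau$. Combining the two inequalities above with these pointwise substitutions produces exactly the two required constraints.

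The main conceptual obstacle is precisely this mismatch: the function $h_{\gamma_0}(\cdot)$ is built using the fixed parameter $\gamma_0$, but the feasibility constraints of $\text{OP}(\boldsymbol{\psi})$ are phrased in terms of the first element $\gamma_1=\hat{\gamma}_1$ of the actual candidate sequence, which can be strictly smaller. Once monotonicity of $f(\gamma)$ is noted to absorb the exponential case, the obstacle dissolves, and the remainder of the argument reduces to the single idea of bounding the average of a non-increasing function over an interval by its left endpoint.
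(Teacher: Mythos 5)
Your proof is correct and follows essentially the same path as the paper's: establish that $h_{\gamma_0}$ is non-increasing so the averages $\hat{\gamma}_t$ form a monotone sequence with $\hat\gamma_1\le\gamma_0$, use the telescoping identity $\sum_{\tau\le t}\hat\gamma_\tau\psi_\tau=\int_0^{k_t}h_{\gamma_0}$, and then bound $\hat\gamma_{t+1}$ via the two upper bounds built into the definition of $h_{\gamma_0}$. The only cosmetic difference is that you bound $\hat{\gamma}_{t+1}\le h_{\gamma_0}(k_t)$ via the left endpoint, whereas the paper bounds each $h_{\gamma_0}(\tau)$ on $[k_t,k_{t+1}]$ and averages; these are equivalent.

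In fact, your writeup is more careful than the paper's at one point. The constraint in $\text{OP}(\boldsymbol{\psi})$ involves $\gamma_1=\hat{\gamma}_1$, but the definition of $h_{\gamma_0}$ produces bounds phrased in terms of $\gamma_0$. For the linear constraint the paper performs the substitution $-\gamma_0\le-\hat\gamma_1$ explicitly, but for the exponential constraint it stops at
\[
\hat{\gamma}_{t+1}\leq1-2\sum_{\tau'\le t}\hat{\gamma}_{\tau'}\psi_{\tau'}-\gamma_0\exp\Bigl(-\tfrac{2}{\gamma_0}\sum_{\tau'\le t}\hat{\gamma}_{\tau'}\psi_{\tau'}\Bigr)
\]
and declares feasibility without converting $\gamma_0$ into $\hat\gamma_1$. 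Your observation that $\gamma\mapsto\gamma e^{-2x/\gamma}$ is increasing on $(0,\infty)$ for every $x\ge0$ (since its derivative $e^{-2x/\gamma}(1+2x/\gamma)$ is positive) is precisely the missing step: it gives $\gamma_0 e^{-2x/\gamma_0}\ge\hat\gamma_1 e^{-2x/\hat\gamma_1}$, so the inequality indeed implies the required constraint with $\hat\gamma_1$ in place of $\gamma_0$. Good catch.
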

The formal proof is relegated to \Cref{UDproofconstlemma}. Note that for each $\boldsymbol{\psi}$ satisfying $\sum_{t=1}^{T}\psi_t\leq 1$, if $\{\hat{\gamma}_t\}_{t=1}^T$ is constructed according to \eqref{UDfeasiblegamma}, then it is easy to see that
\[
\sum_{t=1}^{T}\hat{\gamma}_t\cdot\psi_t=\int_{t=0}^{k_T}h_{\gamma_0}(t)dt
\]
and thus the guarantee of our policy in \Cref{UDpolicy} based on the sequence $\{\hat{\gamma}_t\}_{t=1}^T$ is $\frac{\int_{t=0}^{k_T}h_{\gamma_0}(t)dt}{k_T}$ for some $k_T\in(0,1]$, where $k_T$ depends on the setup $\boldsymbol{\psi}$. Since the function $h_{\gamma_0}(\cdot)$ is non-increasing and non-negative over $[0,1]$ as long as $0<\gamma_0<1$, we know that the worst-case setup corresponds to $k_T=1$, i.e., $\sum_{t=1}^{T}\psi_t=1$. Then, it is enough to focus on solving the following problem:
\[
\max_{0<\gamma_0<1} \int_{t=0}^{1}h_{\gamma_0}(t)dt
\]
to obtain the guarantee of our policy. Numerically, we can show that when $\gamma_0\approx0.3977$, the above optimization problem reaches its maximum, which is $0.3557$. We conclude that the guarantee of our policy is $0.3557$. Note that the guarantee of our policy is developed with respect to the LP upper bound $\text{UP}(\mathcal{H})$, and so it is straightforward to generalize our results to a multi-knapsack setting \citep{stein2020advance} where the size of each query can be knapsack-dependent, as explained in \Cref{sec:multiToSingle}.

Also, we can show that no online algorithm can achieve a better guarantee than $\frac{1-e^{-2}}{2}\approx0.432$ relative to $\text{UP}(\mathcal{H})$, even in the single-knapsack setting. The counterexample can be constructed from a problem setup with $T$ queries, where each query has a deterministic size $\frac{1}{2}+\frac{1}{T}$ and is active with probability $\frac{2}{T}$. The proof is relegated to \Cref{proofUpperunitdensity}.
\begin{proposition}\label{Upperunitdensity}
In the single-knapsack unit-density case, no online algorithm can achieve a better guarantee than $\frac{1-e^{-2}}{2}\approx0.432$ relative to the LP upper bound $\text{UP}(\mathcal{H})$.
\end{proposition}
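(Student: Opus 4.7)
The plan is to analyze the specific adversarial instance suggested in the excerpt (namely, $T$ queries with deterministic size $d := \tfrac{1}{2} + \tfrac{1}{T}$ and active probability $p := \tfrac{2}{T}$) and show that its ratio of any online algorithm's value to the LP bound tends to $(1-e^{-2})/2$ as $T \to \infty$. The proof proceeds in three calculations: compute $\mathrm{UP}(\mathcal{H})$, upper-bound the online expected reward, then take a limit.

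\textbf{Step 1: Compute $\mathrm{UP}(\mathcal{H})$.} In the unit-density single-resource LP, each query contributes expected size $p \cdot d = \tfrac{2}{T}(\tfrac{1}{2}+\tfrac{1}{T})$, so the total expected size when all $x_t=1$ is $T \cdot p \cdot d = 1 + \tfrac{2}{T} > 1$. Hence the capacity constraint binds, and the optimal fractional solution scales each $x_t$ uniformly to $\tfrac{T}{T+2}$, yielding $\mathrm{UP}(\mathcal{H}) = 1$. (This is also the per-query upper bound $x_t \le 1$ scaled; a one-line verification.)

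\textbf{Step 2: Bound any online algorithm's reward.} The key observation is that $2d = 1 + \tfrac{2}{T} > 1$, so \emph{any two queries together exceed the knapsack capacity}. Therefore any feasible online policy $\pi$ serves at most one query on every sample path. Since all queries have identical reward $d$, the expected reward equals $d \cdot \Pr[\pi \text{ serves some query}]$. The event ``$\pi$ serves some query'' is contained in the event ``at least one of the $T$ independent queries is active,'' which has probability $1-(1-p)^T = 1-(1-\tfrac{2}{T})^T$. Thus
\begin{equation*}
\mathbb{E}_{\pi,\bm{I}\sim \mathcal{H}}[V^\pi(\bm{I})] \;\le\; d \cdot \Bigl(1-(1-\tfrac{2}{T})^T\Bigr) \;=\; \Bigl(\tfrac{1}{2}+\tfrac{1}{T}\Bigr)\Bigl(1-(1-\tfrac{2}{T})^T\Bigr).
\end{equation*}

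\textbf{Step 3: Take $T \to \infty$.} Dividing by $\mathrm{UP}(\mathcal{H}) = 1$ and letting $T \to \infty$, the right-hand side tends to $\tfrac{1}{2}(1-e^{-2}) = \tfrac{1-e^{-2}}{2}$. Hence $\inf_{\mathcal{H}} \mathbb{E}[V^\pi(\bm{I})]/\mathrm{UP}(\mathcal{H}) \le \tfrac{1-e^{-2}}{2}$ for every online policy $\pi$, proving the proposition.

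\textbf{Main obstacle.} There is no serious technical obstacle---the argument is a direct calculation. The only subtlety worth flagging is justifying that the LP value is exactly $1$ (not $d$ per query and not something larger from fractional assignments); this follows because the capacity constraint, not the $x_t \le 1$ constraint, is binding, owing to $2d > 1$. The same inequality $2d > 1$ simultaneously forces any online algorithm to serve at most one query, which is the reason the online side loses the factor $1-e^{-2}$ compared to the LP.
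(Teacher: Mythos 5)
Your proof is correct and follows essentially the same route as the paper's: same instance ($T$ queries of deterministic size $\tfrac12+\tfrac1T$, each active w.p.\ $\tfrac2T$), same observations that $\mathrm{UP}(\mathcal{H})=1$ and that any two sizes overflow the knapsack so at most one query can be served, and the same limit $(\tfrac12+\tfrac1T)(1-(1-\tfrac2T)^T)\to\tfrac{1-e^{-2}}{2}$. Your write-up just spells out the LP computation and the containment of events slightly more explicitly than the paper does.
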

Our guarantee of 0.3557 relative to $\text{UP}(\mathcal{H})$ in the unit-density case demonstrates the power of using our invariant-based analysis instead of a large/small analysis; we leave the possibility of tightening the guarantee relative to the upper bound of 0.432 as future work.

\section{Concluding Remarks} \label{sec:conc}
In this paper, we derive guarantees for prophet inequalities. There are two settings considered in our paper. One is the $k$-unit setting where the decision maker can accept up to $k$ queries. The other is the knapsack setting where each query consumes a random fraction of the capacity of resource. For both settings, we use OCRS problems to derive the optimal algorithms with tight guarantees with respect to the LP relaxation. Specifically, for the $k$-unit OCRS, we show that the ``$\gamma$-Conservative Magician'' procedure of \cite{alaei2011bayesian} is in fact optimal with the optimal ratio $\sgk$. We prove the optimality with a LP duality approach and derive an ODE formulation to compute the optimal ratio $\sgk$. As a consequence, we improve the best-known guarantee for $k$-unit prophet inequalities for all $k>1$. On the other hand, for the knapsack OCRS, we introduce a new ``best-fit'' procedure
with a tight performance guarantee of $\frac{1}{3+e^{-2}}\approx0.319$, which improves the previously best-known guarantee of 0.2 for online knapsack. We then modify our algorithm and derive a further improved ratio of 0.3557 for the unit-density special case.

In a nutshell, we derive algorithms for the multi-unit and the knapsack settings of prophet inequalities with the optimal guarantees with respect to the LP relaxation, which also enables us to generalize directly extend our results to the multi-resource online assignment problem that enjoys a wider range of applications. We develop new techniques for obtaining tight ratios and we provide new theoretical understandings of prophet inequalities. There are multiple directions to further extend our results. For example, one may consider deriving the tight ratios with respect to the prophet itself instead of the LP relaxation. One may also consider a data-driven setting where instead of assuming the decision makers know the distributions, only a finite number of samples of the distribution of each query are given. We leave these interesting directions for future research.

\ACKNOWLEDGMENT{The authors thank two anonymous reviewers, an associate editor, and Daniel Kuhn, the area editor, for their constructive comments that greatly improved the exposition of the paper. The authors also thank one of the reviewers for the suggestion that led to a new proof of Theorem 1 and Theorem 4. A preliminary version of the paper appeared in SODA 2022.
}

\ \

\bibliographystyle{ormsv080} 
\bibliography{myreferences} 

\clearpage
\begin{APPENDIX}{Proofs of Lemmas, Propositions and Theorems}

\OneAndAHalfSpacedXI

\section{Final Algorithms}\label{sec:FinalAlgorithm}

In this section, we present our final algorithms, which combine the random routing approach described in \Cref{sec:multiToSingle} and the pre-processing algorithms for the corresponding OCRS problem under the multi-unit and the knapsack settings, as well as the unit density special case of the knapsack setting.

The final algorithm for the multi-unit setting is presented below in \Cref{alg:multiunit}. We adopt the multi-resource formulation, where we have $m$ resources and each resource $j$ can serve up to $k_j$ queries. The final algorithm for the knapsack setting is presented below in \Cref{alg:knapsack}. We again adopt the multi-resource formulation and the size of each query over each resource can be an arbitrary fraction of the initial capacity of that resource. Finally, we present our final algorithm for the unit-density special case of the knapsack setting below in \Cref{UDpolicy_final}. The algorithm is presented under the multi-resource formulation where for each query over each resource, the corresponding reward and size are assumed to be equivalent to each other which can take an arbitrary fraction of the initial capacity of the resource. 
\begin{algorithm}
 \caption{Algorithm for the multi-unit setting}
 \begin{algorithmic}[1]
     \State Solve the LP relaxation \eqref{LPuppermultidimension} and and obtain an optimal solution $\{ x^*_{tj}(\mathbf{r}_t, \mathbf{d}_t), \forall t, \forall j, \forall (\mathbf{r}_t, \mathbf{d}_t) \}$.
     \State Define $p_{tj}=\mathbb{E}_{(\tilde{\mathbf{r}}_t, \tilde{\mathbf{d}}_t)\sim F_t}[x^*_{tj}(\tilde{\mathbf{r}}_t, \tilde{\mathbf{d}}_t)]$ for each $j$ and each $t$.
     \State For each $j$, compute the value of $\gamma^*_{k_j}$ following the procedure described in \Cref{sec:worstkunit}.
     \State For each $j$, construct a solution $\{x_{l,t, j}(\gamma^*_{k_j})\}_{\forall l, \forall t}$ as in \Cref{constructprophet} with the input $\theta=\gamma^*_{k_j}$ and $\bm{p}_j=(p_{1j}, \dots, p_{Tj})$.
     \For{$t=1,\dots, T$}
     \State Observe the reward and size realization $(\tilde{\mathbf{r}}_t, \tilde{\mathbf{d}}_t)$ of query $t$.
\State Randomly route query $t$ to a resource $j$ with probability $x^*_{tj}(\tilde{\mathbf{r}}_t, \tilde{\mathbf{d}}_t)$.
\State Denote by $l_j$ the quantity such that $l_j-1$ queries have already been served by resource $j$.
\State \textbf{If} $l_j=1$, \textbf{then}, 
\State ~~~Serve query $t$ with probability $\frac{x_{1,t, j}(\gamma^*_{k_j})}{p_{tj}\cdot (1-\sum_{\tau<t}x_{1,\tau, j}(\gamma^*_{k_j}))}$.
\State \textbf{Else}, 
\State ~~~Serve query $t$ with probability $\frac{x_{l_j,t,j}(\gamma^*_{k_j})}{p_{tj}\cdot \sum_{\tau<t}(x_{l_j-1,\tau,j}(\gamma^*_{k_j})-x_{l_j,\tau}(\gamma^*_{k_j}))}$.
     \EndFor
 \end{algorithmic}
 \label{alg:multiunit}
 \end{algorithm}
 
\begin{algorithm}
\caption{Algorithm for the knapsack setting}
\begin{algorithmic}[1]
\State Solve the LP relaxation \eqref{LPuppermultidimension} and obtain an optimal solution $\{ x^*_{tj}(\mathbf{r}_t, \mathbf{d}_t), \forall t, \forall j, \forall (\mathbf{r}_t, \mathbf{d}_t) \}$.
\State Define $p_{tj}(d_t)=\mathbb{E}_{(\tilde{\mathbf{r}}_t, \tilde{\mathbf{d}}_t)\sim F_t}[\bI(\tilde{d}_{tj}=d_t)\cdot x^*_{tj}(\tilde{\mathbf{r}}_t, \tilde{\mathbf{d}}_t)]$ for each $j$, each $t$ and each $d_t$.
\State For each $j$, obtain the distributions $\{\tilde{X}_{ t, j, \gamma}, \forall t\}$ and the thresholds $\{ \eta_{t, j, \gamma}(d_t), \forall t, \forall d_t \}$ from \Cref{definedistributionrandomsize} with input $\gamma=1/(3+e^{-2})$ and $\{p_{tj}(d_t), \forall t, \forall d_t\}$.
\For{$t=1,\dots,T,$}
\State Observe the reward and size realization $(\tilde{\mathbf{r}}_t, \tilde{\mathbf{d}}_t)$ of query $t$.
\State Randomly route query $t$ to a resource $j$ with probability $x^*_{tj}(\tilde{\mathbf{r}}_t, \tilde{\mathbf{d}}_t)$.
\State Observe the capacity utilization $X_{t-1, j}$ of the routed resource $j$.
\State \textbf{If} $\eta_{t, j, \gamma}(\tilde{d}_{tj})< X_{t-1,j}\leq1-\tilde{d}_{tj}$, \textbf{then},
\State ~~~Serve query $t$ with probability $1$ using resource $j$.
\State \textbf{Else if} $X_{t-1,j}=\eta_{t, j, \gamma}(\tilde{d}_{tj})$, \textbf{then}
\State ~~~serve query $t$ with probability $\frac{\gamma-P(\eta_{t, j, \gamma}(\tilde{d}_{tj})<X_{t-1, j}\leq1-\tilde{d}_{tj})}{P(\eta_{t, j, \gamma}(\tilde{d}_{tj})=X_{t-1, j})}$ using resource $j$.
\State \textbf{Else},
\State ~~~reject query $t$.
\EndFor
\end{algorithmic}
\label{alg:knapsack}
\end{algorithm}

\begin{algorithm}
\caption{Algorithm for the unit-density special case of the knapsack setting}
\begin{algorithmic}[1]
\State Solve the LP relaxation \eqref{LPuppermultidimension} and obtain an optimal solution $\{ x^*_{tj}(\mathbf{r}_t, \mathbf{d}_t), \forall t, \forall j, \forall (\mathbf{r}_t, \mathbf{d}_t) \}$.
\State Define $p_{tj}(d_t)=\mathbb{E}_{(\tilde{\mathbf{r}}_t, \tilde{\mathbf{d}}_t)\sim F_t}[\bI(\tilde{d}_{tj}=d_t)\cdot x^*_{tj}(\tilde{\mathbf{r}}_t, \tilde{\mathbf{d}}_t)]$ for each $j$, each $t$ and each $d_t$.
\State  For each $j$, obtain the sequence $\bm{\gamma}_j=(\gamma_{1j},\dots, \gamma_{Tj})$ from \eqref{UDfeasiblegamma}.
\State For each $j$, obtain the distributions $\{\tilde{X}_{t, j, \boldsymbol{\gamma}_j}, \forall t\}$ and the thresholds $\{ \eta_{t, j,  \boldsymbol{\gamma}_j}(d_t), \forall t, \forall d_t \}$ from \Cref{UDpolicy} with input $\boldsymbol{\gamma}_j$ and $\{p_{tj}(d_t), \forall t, \forall d_t\}$.
\For{$t=1,2,\dots,T$}
\State Observe the reward and size realization $(\tilde{\mathbf{r}}_t, \tilde{\mathbf{d}}_t)$ of query $t$.
\State Randomly route query $t$ to a resource $j$ with probability $x^*_{tj}(\tilde{\mathbf{r}}_t, \tilde{\mathbf{d}}_t)$.
\State Observe the capacity utilization $X_{t-1, j}$ of the routed resource $j$.
\State \textbf{If} $\eta_{t, j, \boldsymbol{\gamma}_j}(\tilde{d}_{tj})< X_{t-1,j}\leq1-\tilde{d}_{tj}$, \textbf{then},
\State ~~~Serve query $t$ with probability $1$ using resource $j$.
\State \textbf{Else if} $X_{t-1,j}=\eta_{t, j, \boldsymbol{\gamma}_j}(\tilde{d}_{tj})$, \textbf{then}
\State ~~~serve query $t$ with probability $\frac{\gamma_{tj}-P(\eta_{t, j, \boldsymbol{\gamma}_j}(\tilde{d}_{tj})<X_{t-1, j}\leq1-\tilde{d}_{tj})}{P(\eta_{t, j, \boldsymbol{\gamma}_j}(\tilde{d}_{tj})=X_{t-1, j})}$ using resource $j$.
\State \textbf{Else},
\State ~~~reject query $t$.
\EndFor
\end{algorithmic}
\label{UDpolicy_final}
\end{algorithm}

\section{Proofs in \Cref{1overkspecialcase}}

\subsection{Proof of \Cref{feasiproposition}}\label{prooffeasiproposition}
We first present the following lemma, which shows that instead of checking whether all the constraints of $\LCkunit$ are satisfied, it is enough to consider only one constraint.
\begin{lemma}\label{feasilemma2}
For any $\theta\in[0,1]$, $\{\theta, x_{l,t}(\theta)\}$ is a feasible solution to $\LCkunit(\bm{p})$ if and only if $\sum_{\tau=1}^{T-1}x_{k,\tau}(\theta)\leq 1-\theta$.
\end{lemma}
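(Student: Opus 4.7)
The plan is to show that the construction $\{\theta, x_{l,t}(\theta)\}$ in \Cref{constructprophet} automatically satisfies every constraint of $\text{Dual}(\bm{p},k)$ except possibly the level-$k$ upper bound \eqref{ddconstraint7} at the last time step, and that this remaining constraint is equivalent to $\sum_{\tau=1}^{T-1} x_{k,\tau}(\theta) \leq 1 - \theta$. For convenience I let $y_{l,t} = \sum_{\tau \leq t} x_{l,\tau}(\theta)$ with the convention $y_{0,t} = 1$.

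I will first prove, by induction on $t$, a structural invariant: at every time $t$ there is a unique ``active'' level $l^{*}(t)$ in stage (b), with all lower levels in stage (c) and all higher levels in stage (a); in particular the breakpoints satisfy $t_2 \leq t_3 \leq \cdots \leq t_k$, and $y_{l,t} \leq y_{l-1,t}$ for every $l$. This invariant immediately gives $\sum_{v=1}^{k} x_{v,t}(\theta) = \theta p_t$, so constraint \eqref{ddconstraint5} is tight. It also shows that for $l \leq k-1$ constraint \eqref{ddconstraint6} or \eqref{ddconstraint7} holds: trivially in stage (a), by the very definition of $t_{l+1}$ in stage (b), and with equality in stage (c). Non-negativity of $x_{l,t}(\theta)$ at $l \leq k-1$ is built into the construction by the definitions of $t_l$ together with the monotonicity of $y_{l-1,\cdot}$.

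The key step is the algebraic collapse of the level-$k$ constraint at any $t > t_k$. Since $t_k \geq t_{v+1}$ for each $v \leq k-1$, every lower level $v$ is in stage (c) at time $t$, so $x_{v,t}(\theta) = p_t(y_{v-1,t-1} - y_{v,t-1})$; a telescoping sum then gives $\sum_{v=1}^{k-1} x_{v,t}(\theta) = p_t(1 - y_{k-1,t-1})$, and combining with the stage-(b) rule from Step 3 of \Cref{constructprophet} yields $x_{k,t}(\theta) = p_t(\theta - 1 + y_{k-1,t-1})$. Plugging this into \eqref{ddconstraint7} and cancelling the common $p_t y_{k-1,t-1}$ term simplifies the constraint to $y_{k,t-1} \leq 1 - \theta$, while the same computation gives $x_{k,t}(\theta) \geq 0$ iff $y_{k-1,t-1} \geq 1 - \theta$, which is guaranteed by the definition of $t_k$ and the monotonicity of $y_{k-1,\cdot}$. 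Since $y_{k,\cdot}$ is then non-decreasing, the binding instance of $y_{k,t-1} \leq 1 - \theta$ over $t \in (t_k, T]$ is $t = T$, yielding the stated condition; conversely the condition implies all earlier instances (the case $t_k \geq T$ is vacuous because $x_{k,\tau}(\theta) = 0$ for every $\tau$). The main obstacle is cleanly threading the inductive invariant (the stage pattern, the $t_l$-ordering, and the inequalities $y_{l,\cdot} \leq y_{l-1,\cdot}$) through all levels and all three stages simultaneously; once this is in place, the final collapse to $\sum_{\tau=1}^{T-1} x_{k,\tau}(\theta) \leq 1 - \theta$ is a short calculation.
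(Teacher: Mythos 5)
Your proposal follows the same route as the paper: establish by induction the well-ordering of the construction together with the invariant $y_{l,t}\le y_{l-1,t}$ (the paper packages the equivalent nonnegativity and monotonicity facts as Lemma~\ref{feasilemma1} and its proof), observe that constraints \eqref{ddconstraint6}--\eqref{ddconstraint7} for $l<k$ hold by construction, and then telescope the level-$k$ constraint at $t>t_k$ down to $\theta\le 1-y_{k,t-1}$, with $t=T$ binding by monotonicity of $y_{k,\cdot}$. The only cosmetic difference is that you phrase the preparatory lemma as a stage/breakpoint-ordering invariant, whereas the paper phrases it as $x_{l,t}(\theta)\ge0$ together with $\sum_{v\le l}x_{v,t}(\theta)\le\theta p_t$; the reduction itself is identical.
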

The proof is relegated to \Cref{ProofLemma6}. We now prove the condition on $\theta$ such that $\sum_{\tau=1}^{T-1}x_{k,\tau}(\theta)\leq 1-\theta$. Due to \Cref{feasilemma2}, this condition implies the feasibility condition of $\{\theta, x_{l,t}(\theta)\}$. Specifically, we will first show that the term $\sum_{t=1}^{T-1}x_{k,t}(\theta)$ is continuously monotone increasing with $\theta$ in the next lemma, where the formal proof is in \Cref{ProofLemma7}.
\begin{lemma}\label{feasilemma3}
For any $1\leq l\leq k$ and any $1\leq t\leq T$, define $y_{l,t}(\theta)=\sum_{\tau=1}^{t}x_{l,t}(\theta)$. Then $y_{l,t}(\theta)$ is monotone increasing with $\theta$ and $y_{l,t}(\theta)$ is also Lipschitz continuous with $\theta$.
\end{lemma}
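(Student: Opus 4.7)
The plan is to first derive, from Definition~\ref{constructprophet}, a \emph{unified} recursion for $y_{l,t}(\theta)$ that hides the explicit breakpoints $t_l(\theta)$. Setting $y_{0,\cdot}(\theta) \equiv 1$, I will verify that for every $\theta \in [0,1]$, $l \in \{1,\ldots,k\}$ and $t \in \{1,\ldots,T\}$,
\begin{align*}
y_{l,t}(\theta) &= y_{l,t-1}(\theta) + p_t \cdot \max\bigl\{0,\,\min\{\theta - 1 + y_{l-1,t-1}(\theta),\ y_{l-1,t-1}(\theta) - y_{l,t-1}(\theta)\}\bigr\} \quad (l < k), \\
y_{k,t}(\theta) &= y_{k,t-1}(\theta) + p_t \cdot \max\bigl\{0,\,\theta - 1 + y_{k-1,t-1}(\theta)\bigr\}.
\end{align*}
The derivation exploits that, within the aggressive phase at level $l$, all lower levels $v < l$ are already in their conservative phase, so $\sum_{v<l} x_{v,t}(\theta)$ telescopes to $p_t(1 - y_{l-1,t-1}(\theta))$; the aggressive formula $\theta p_t - \sum_{v<l} x_{v,t}$ thus collapses to $p_t(\theta-1+y_{l-1,t-1}(\theta))$. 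The outer $\max\{0,\cdot\}$ captures the pre-activation regime $t \leq t_l(\theta)$ (equivalently $y_{l-1,t-1}(\theta) \leq 1-\theta$), and the inner $\min$ (present only for $l<k$) captures the switch to the conservative formula $p_t(y_{l-1,t-1}(\theta)-y_{l,t-1}(\theta))$ after $t_{l+1}(\theta)$.

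Given this recursion, I will prove the conclusion by double induction, outer on $t$ and inner on $l$. The base case $t=0$ is immediate since $y_{l,0}\equiv 0$. For the inductive step, I assume $y_{l',t-1}(\theta)$ is monotone increasing and Lipschitz in $\theta$ on $[0,1]$ for all $l' \in \{0,\ldots,k\}$, and rewrite the $l<k$ recursion equivalently as
\[
y_{l,t}(\theta) = \max\Bigl\{y_{l,t-1}(\theta),\ \min\bigl\{(1-p_t)\,y_{l,t-1}(\theta) + p_t\,y_{l-1,t-1}(\theta),\ y_{l,t-1}(\theta) + p_t(\theta-1+y_{l-1,t-1}(\theta))\bigr\}\Bigr\}
\]
(and analogously without the inner $\min$ for $l=k$). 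Each of the three inner expressions is a nonnegative combination, with coefficients in $[0,1]$, of the functions $y_{l,t-1}(\theta)$, $y_{l-1,t-1}(\theta)$, and the identity $\theta \mapsto \theta$. Each such inner expression is therefore monotone increasing and Lipschitz: the $y$-terms by the induction hypothesis, the identity trivially, and these properties are preserved under summation with nonnegative coefficients. Since pointwise $\max$ and $\min$ also preserve monotonicity and Lipschitz continuity (with no increase in the Lipschitz constant), $y_{l,t}(\theta)$ inherits both properties. Iterating the inner induction from $l=1$ to $l=k$ and then the outer induction from $t$ to $t+1$ completes the proof.

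The main obstacle is the first step---matching the unified recursion to the breakpoint-based Definition~\ref{constructprophet}. Three points require care: first, establishing the telescoping identity $\sum_{v<l} x_{v,t}(\theta) = p_t(1-y_{l-1,t-1}(\theta))$ throughout the aggressive phase at level $l$ (which itself uses an induction on $l$); second, reinterpreting the definition of $t_l(\theta)$ as ``the first $t$ such that $y_{l-1,t}(\theta) > 1-\theta$,'' which is what makes the outer $\max\{0,\cdot\}$ clause agree with the rule $x_{l,t}=0$ for $t\leq t_l$; and third, continuity at the $\theta$-dependent breakpoints $t_l(\theta)$, which holds because these breakpoints are designed so that the aggressive and conservative formulas coincide at the crossover. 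Once this bookkeeping is done, the monotonicity/Lipschitz step is essentially mechanical.
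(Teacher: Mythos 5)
Your proposal is correct, but it takes a genuinely different route from the paper. The paper's proof fixes $\theta$ and $\Delta \geq 0$ with $\theta+\Delta \in [0,1]$ and directly compares the two constructed solutions $\{x_{l,t}(\theta)\}$ and $\{x_{l,t}(\theta+\Delta)\}$ via a nested induction on $l$ then $t$, maintaining simultaneously the two inequalities $y_{l,t}(\theta) \leq y_{l,t}(\theta+\Delta)$ and $\sum_{v=1}^{l} y_{v,t}(\theta+\Delta) \leq \sum_{v=1}^{l} y_{v,t}(\theta) + \Delta\sum_{\tau=1}^{t} p_\tau$; the second inequality yields the explicit Lipschitz constant $k$ (since $\sum_\tau p_\tau \leq k$), and the case analysis tracks the relative positions of the breakpoints $t_l(\theta)$ versus $t_l(\theta+\Delta)$ by hand. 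You instead front-load the work by compiling Definition~\ref{constructprophet} into the single max-min recursion and then observe that max, min, and nonnegative affine combinations of $\theta$-monotone Lipschitz functions are again $\theta$-monotone Lipschitz, making the induction step mechanical. This is arguably cleaner conceptually, since it decouples the combinatorics of the breakpoint structure from the analytic conclusion; the tradeoff is that it yields a cruder (though still finite) Lipschitz constant, and it shifts the real work into establishing that the max-min form is equivalent to the breakpoint form---you correctly flag the telescoping identity, the reinterpretation of $t_l$ as the first time $y_{l-1,t}(\theta)$ exceeds $1-\theta$, and the ordering $t_1 \le t_2 \le \cdots \le t_k$ (and the non-negativity of $y_{l-1,t-1}-y_{l,t-1}$, which is exactly \Cref{feasilemma1}) as the nontrivial verifications. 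One small imprecision: the aggressive and conservative formulas do not generically coincide at the discrete crossover $t_{l+1}$ (that happens only in the ODE limit of Definition~\ref{ODEdefinition}); what actually matters is that at $t > t_{l+1}$ one has $y_{l,t-1} > 1-\theta$ strictly, so the $\min$ selects the conservative branch, and at $t \leq t_{l+1}$ one has $y_{l,t-1} \leq 1-\theta$, so the $\min$ selects the aggressive branch. As written, the recursion is a well-defined function of $\theta$ at every $t$ regardless, so this does not affect the validity of the monotonicity/Lipschitz argument. As an aside, the unified recursion you derive is essentially the discrete analogue of the ODE system in Definition~\ref{ODEdefinition} (and the function $\bm{f}_\theta$ used in the proof of \Cref{worstcasetheorem}), so your recompilation is not an alien object to the paper---it just is not how the paper chooses to prove this particular lemma.
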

We are now ready to prove \Cref{feasiproposition}.
\begin{proof}{Proof of \Cref{feasiproposition}:}
Note that when $\theta=0$, $y_{k,T-1}(0)=\sum_{\tau=1}^{T-1}x_{k,\tau}(0)=0<1-\theta=1$, and when $\theta=1$, $y_{k,T-1}(1)=\sum_{\tau=1}^{T-1}x_{k,\tau}(1)>0=1-\theta$. Further note that $1-\theta$ is continuously strictly decreasing with $\theta$ while \Cref{feasilemma3} shows that $y_{k,T-1}(\theta)=\sum_{\tau=1}^{T-1}x_{k,\tau}(\theta)$ is continuously increasing with $\theta$, there must exist a unique $\theta^*\in[0,1]$ such that $\sum_{\tau=1}^{T-1}x_{k,\tau}(\theta^*)=1-\theta^*$ and for any $\theta\in[0,\theta^*]$, it holds that $\sum_{\tau=1}^{T-1}x_{k,\tau}(\theta)\leq 1-\theta$. Combining the above arguments with \Cref{feasilemma2}, we complete our proof.
\Halmos
\end{proof}

\subsection{Proof of \Cref{feasilemma2}}\label{ProofLemma6}
We first prove that for any $\theta\in[0,1]$, $\{x_{l,t}(\theta)\}$ are non-negative.
\begin{lemma}\label{feasilemma1}
For any $\theta\in[0,1]$, we have $x_{l,t}(\theta)\geq0$ for any $l=1,\dots,k$ and $t=1,\dots,T$.
\end{lemma}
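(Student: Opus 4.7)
The plan is to proceed by induction on $l$, jointly establishing two invariants: (A) $x_{l,t}(\theta)\geq 0$ for all $t$, and (B) $S_{l,t}(\theta)\leq S_{l-1,t}(\theta)$ for all $t$, where $S_{v,t}(\theta):=\sum_{\tau\leq t}x_{v,\tau}(\theta)$ and we adopt the convention $S_{0,t}(\theta):=1$. Invariant (B) is the key auxiliary fact: it is precisely what makes the phase-(c) formula $x_{l,t}=p_t(S_{l-1,t-1}-S_{l,t-1})$ non-negative, so without carrying it along the induction cannot close.

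The base case $l=1$ is handled by direct calculation. In phase~(b), $x_{1,t}=\theta p_t\geq 0$, and $S_{1,t}=\theta\sum_{\tau\leq t}p_\tau\leq 1$ follows from the very definition of $t_2$ as the first time that $\sum_{\tau\leq t_2}\theta p_\tau>1-\theta$ (so for $t<t_2$ the sum is at most $1-\theta$, and at $t_2$ it is at most $1-\theta+\theta p_{t_2}\leq 1$). In phase~(c), the recursion rewrites as $1-S_{1,t}=(1-p_t)(1-S_{1,t-1})$, which preserves (B), and $x_{1,t}=p_t(1-S_{1,t-1})\geq 0$ by (B) at $t-1$.

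For the inductive step, assume (A) and (B) hold through level $l-1$. A key structural observation is that during phase~(b) of level $l$, i.e.\ for $t_l<t\leq t_{l+1}$, every lower level $v=1,\dots,l-1$ is in its own phase~(c), because $t_{v+1}\leq t_l<t$. Telescoping the phase-(c) formulas then yields the identity $\sum_{v<l}x_{v,t}=p_t(1-S_{l-1,t-1})$, so the phase-(b) formula for level $l$ simplifies to $x_{l,t}=p_t(\theta-1+S_{l-1,t-1})$. Unpacking the defining condition of $t_l$ (from level $l-1$'s construction) via the same telescoping gives $S_{l-1,t_l}>1-\theta$; combined with $x_{l-1,\tau}\geq 0$ from the induction, one gets $S_{l-1,t-1}\geq S_{l-1,t_l}>1-\theta$ for all $t>t_l$, establishing (A) in phase~(b). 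In phase~(c) of level $l$, (A) follows directly from (B) at time $t-1$.

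The part I expect to be the main obstacle is propagating (B) at level $l$ through phase~(b), where both $x_{l-1,t}$ and $x_{l,t}$ change and $x_{l,t}$ could a priori grow faster than $x_{l-1,t}$. To handle it, I re-examine the definition of $t_{l+1}$: it forces $S_{l,t'}\leq 1-\theta$ for all $t'\leq t_{l+1}-1$, which rearranges (using the phase-(b) formula) to the bound $x_{l,t}\leq p_t(S_{l-1,t-1}-S_{l,t-1})$ throughout phase~(b). Combined with $x_{l-1,t}=p_t(S_{l-2,t-1}-S_{l-1,t-1})$ from phase~(c) of level $l-1$, I obtain the monotone recurrence $A_{l-1,t}\geq (1-p_t)A_{l-1,t-1}+p_t A_{l-2,t-1}$, where $A_{v,t}:=S_{v,t}-S_{v+1,t}$. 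Since $A_{l-2,t-1}\geq 0$ by the inductive hypothesis and $A_{l-1,t_l}=S_{l-1,t_l}\geq 0$, non-negativity of $A_{l-1,t}$ propagates through phase~(b); an analogous equality recurrence handles phase~(c), closing the induction.
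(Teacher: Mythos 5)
Your proof is correct and sound, but the auxiliary invariant you carry along the induction is genuinely different from the paper's, so the two arguments are organized quite differently. The paper's proof (Appendix B.2) also inducts on $l$, but its joint invariant is $x_{l,t}\geq 0$ together with the \emph{per-time-step} rate bound $\sum_{v\leq l}x_{v,t}\leq\theta p_t$ for all $t$. Your invariant (B) is instead the \emph{cumulative} monotonicity $S_{l,t}\leq S_{l-1,t}$, i.e.\ $A_{l-1,t}\geq 0$. The two choices are dual in a pleasing sense: the paper's per-step bound trivially gives nonnegativity in phase~(b), where $x_{l,t}=\theta p_t-\sum_{v<l}x_{v,t}$, and the paper then works through an inner induction on $t$ to push the invariant through phase~(c); your cumulative bound trivially gives nonnegativity in phase~(c), where $x_{l,t}=p_t\,A_{l-1,t-1}$, and you instead do the work in phase~(b), via the telescoping identity $\sum_{v<l}x_{v,t}=p_t(1-S_{l-1,t-1})$, the rewritten formula $x_{l,t}=p_t(\theta-1+S_{l-1,t-1})$, and the boundary condition $S_{l-1,t_l}>1-\theta$ obtained by unpacking the definition of $t_l$. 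What your route buys: the rewritten phase-(b) formula makes the threshold role of $1-\theta$ explicit, and the invariant $S_{l,t}\leq S_{l-1,t}$ has a direct probabilistic reading (``at least $l$ served by time $t$'' is no more likely than ``at least $l-1$''); the paper's invariant has the dual appeal of being a single per-step inequality. Two caveats that apply equally to both proofs: the monotonicity $t_2\leq t_3\leq\cdots\leq t_k$ (needed for your claim that all levels $v<l$ are in phase~(c) during phase~(b) of level $l$, and needed implicitly in the paper's phase-(a)/(b) bookkeeping) is assumed rather than verified; and the edge cases $t_l=T$ (empty phase~(b)) and $l=k$ (no phase~(c), so only (A) is needed at the top level) should be noted explicitly in a full write-up, though they cause no difficulty.
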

\begin{proof}{Proof:}
We now use induction on $l$ to show that for any $l$, we have that $x_{l,t}(\theta)\geq0$ and $\sum_{v=1}^{l}x_{v,t}(\theta)\leq\theta\cdot \p_t$ for any $t$. Since we focus on a fixed $\theta$, we abbreviate $\theta$ in the expression $x_{l,t}(\theta)$ and substitute $x_{l,t}$ for $x_{l,t}(\theta)$ in the proof.

For $l=1$, from definition, we have that for $1\leq t\leq t_2$, it holds that $x_{1,t}\geq0$ and $\sum_{v=1}^{1}x_{v,t}\leq\theta\cdot \p_t$. We now use induction on $t$ to show that for $t_2+1\leq t\leq T$, we have that $0\leq x_{1,t}\leq \theta\cdot \p_t$. Note that from definition, we have that
\[
x_{1,t_2+1}=\p_{t_2+1}\cdot(1-\sum_{\tau=1}^{t_2}\theta\cdot \p_\tau)<\p_{t_2+1}\cdot\theta
\]
Also, note that $1-\sum_{\tau=1}^{t_2-1}\theta\cdot \p_\tau\geq\theta$ and $\p_{t_2}\leq1$, we have that
\[
1-\sum_{\tau=1}^{t_2}\theta\cdot \p_\tau\geq 1-\sum_{\tau=1}^{t_2-1}\theta\cdot \p_\tau-\theta\geq0
\]
Thus, it holds that
\[
0\leq x_{1,t_2+1}=\p_{t_2+1}\cdot(1-\sum_{\tau=1}^{t_2}\theta\cdot \p_\tau)<\p_{t_2+1}\cdot\theta
\]
Now, suppose for a $t$ such that $t_2+1\leq t\leq T$, we have that $0\leq x_{1,\tau}\leq \theta\cdot \p_\tau$ for any $t_2+1\leq \tau\leq t$. Then we have that
\[
x_{1,t+1}\leq \p_{t+1}\cdot (1-\sum_{\tau=1}^{t}x_{1,\tau})\leq \p_{t+1}\cdot (1-\sum_{\tau=1}^{t_2}x_{1,\tau})=\p_{t+1}\cdot(1-\sum_{\tau=1}^{t_2}\theta\cdot \p_\tau)<\p_{t+1}\cdot\theta
\]
Also, note that $x_{1,t}\geq0$ implies that $1-\sum_{\tau=1}^{t-1}x_{1,\tau}\geq0$, we have that
\[
x_{1,t+1}/\p_{t+1}=1-\sum_{\tau=1}^{t-1}x_{1,\tau}-x_{1,t}=(1-\p_t)\cdot(1-\sum_{\tau=1}^{t-1}x_{1,\tau})\geq0
\]
It holds that $0\leq x_{1,t+1}\leq \p_{t+1}\cdot\theta$. Thus, from induction, for any $t$, we have proved that $0\leq x_{1,t}\leq \p_{t}\cdot\theta$.

Suppose that for a $l$ such that $1\leq l\leq k$, we have that $x_{l,t}\geq0$ and $\sum_{v=1}^{l}x_{v,t}\leq\theta\cdot \p_t$ for any $t$. We now consider the case for $l+1$. From definition, $x_{l+1,t}=0$ when $1\leq t\leq t_{l+1}$ and when $t_{l+1}+1\leq t\leq t_{l+2}$, $x_{l+1,t}=\theta\cdot \p_t-\sum_{v=1}^{l}x_{v,t}$. Thus, for $1\leq t\leq t_{l+2}$, we have proved that $x_{l+1,t}\geq0$ and $\sum_{v=1}^{l+1}x_{v,t}\leq\theta\cdot \p_t$. We now use induction on $t$ for $t>t_{l+2}$. When $t=t_{l+2}+1$, from definition, we have that
\[
x_{l+1,t_{l+2}+1}=\p_{t_{l+2}+1}\cdot \sum_{\tau=1}^{t_{l+2}}(x_{l,\tau}-x_{l+1,\tau})\leq \theta\cdot \p_{t_{l+2}+1}-\sum_{v=1}^{l}x_{v,t_{l+2}+1}\Rightarrow\sum_{v=1}^{l+1}x_{v,t_{l+2}+1}\leq\theta\cdot \p_{t_{l+2}+1}
\]
Also, note that
\[
0\leq x_{l+1,t_{l+2}}=\theta\cdot \p_{t_{l+2}}-\sum_{v=1}^{l}x_{v,t_{l+2}}\leq \p_{t_{l+2}}\cdot\sum_{\tau=1}^{t_{l+2}-1}(x_{l,\tau}-x_{l+1,\tau})
\]
we get that
\[
x_{l+1,t_{l+2}+1}/\p_{t_{l+2}+1}=\sum_{\tau=1}^{t_{l+2}}(x_{l,\tau}-x_{l+1,\tau})\geq\sum_{\tau=1}^{t_{l+2}-1}(x_{l,\tau}-x_{l+1,\tau})-x_{l+1,t_{l+2}}\geq(1-\p_{t_{l+2}})\cdot \sum_{\tau=1}^{t_{l+2}-1}(x_{l,\tau}-x_{l+1,\tau})
\]
Thus, we proved that $0\leq x_{l+1,t_{l+2}+1}$ and $\sum_{v=1}^{l+1}x_{v,t_{l+2}+1}\leq\theta\cdot \p_{t_{l+2}+1}$. Now suppose that for a $t$ such that $t_{l+2}+1\leq t\leq T$, it holds that $0\leq x_{l+1,t}$ and $\sum_{v=1}^{l+1}x_{v,t}\leq\theta\cdot \p_t$. Then we have that
\[
\sum_{v=1}^{l+1}x_{v,t+1}/\p_{t+1}=1-\sum_{\tau=1}^{t}x_{l+1,\tau}\leq 1-\sum_{\tau=1}^{t-1}x_{l+1,\tau}=\sum_{v=1}^{l+1}x_{v,t}/\p_{t}\leq\theta
\]
Also, note that $0\leq x_{l+1,t}=\p_t\cdot\sum_{\tau=1}^{t-1}(x_{l,\tau}-x_{l+1,\tau})$, we have that
\[
x_{l+1,t+1}/\p_{t+1}=\sum_{\tau=1}^{t}(x_{l,\tau}-x_{l+1,\tau})\geq\sum_{\tau=1}^{t-1}(x_{l,\tau}-x_{l+1,\tau})-x_{l+1,\tau}=(1-\p_t)\cdot\sum_{\tau=1}^{t-1}(x_{l,\tau}-x_{l+1,\tau})\geq0
\]
Thus, we have proved that $0\leq x_{l+1,t+1}$ and $\sum_{v=1}^{l+1}x_{v,t+1}\leq\theta\cdot p_{t+1}$. From the induction on $t$, we can conclude that for any $1\leq t\leq T$, it holds that $0\leq x_{l+1,t}$ and $\sum_{v=1}^{l+1}x_{v,t}\leq\theta\cdot \p_t$. Again, from the induction on $l$, we can conclude that for any $1\leq l\leq k$ and any $1\leq t\leq T$, it holds that $0\leq x_{l,t}$ and $\sum_{v=1}^{l}x_{v,t}\leq\theta\cdot \p_t$, which completes our proof.
\Halmos
\end{proof}
~\\
Now we are ready to prove Lemma \ref{feasilemma2}.
\begin{proof}{Proof of \Cref{feasilemma2}:}
When $\{x_{l,t}(\theta)\}$ is feasible to $\LCkunit(\bm{p})$ in \eqref{dualdual}, we get from constraint \eqref{ddconstraint6} and \eqref{ddconstraint7} that
\[
x_{1,T}(\theta)\leq \p_T\cdot(1-\sum_{t=1}^{T-1}x_{1,t}(\theta))\text{~~and~~}x_{l,T}(\theta)\leq\p_T\cdot\sum_{t=1}^{T-1}(x_{l-1,t}(\theta)-x_{l,t}(\theta))~~\forall l=2,\dots,k
\]
Summing up the above inequalities, we get
\[
\sum_{l=1}^{k}x_{l,T}(\theta)\leq\p_T\cdot(1-\sum_{t=1}^{T-1}x_{k,t}(\theta))
\]
Further note that by definition, we have $\sum_{l=1}^{k}x_{l,T}(\theta)=\theta\cdot\p_T$. Thus, we show that $\{x_{l,t}(\theta)\}$ is feasible implies that $\sum_{t=1}^{T-1}x_{k,t}(\theta)\leq1-\theta$.

Now we prove the reverse direction. Note that from the definition of $\{x_{l,t}(\theta)\}$, we have that $x_{l,t}(\theta)\leq \p_t\cdot\sum_{\tau=1}^{t-1}(x_{l-1,\tau}(\theta)-x_{l,\tau}(\theta))$ holds for any $1\leq l\leq k-1$ and any $1\leq t\leq T$, where we set $\sum_{\tau=1}^{t-1}x_{0,\tau}(\theta)=1$ for any $t$ for simplicity. Also, $\{x_{l,t}(\theta)\}$ are nonnegative as shown by Lemma \ref{feasilemma1}. Thus, we have that
\[
\{x_{l,t}(\theta)\} \text{~is~feasible~}\Leftrightarrow x_{k,t}(\theta)\leq \p_t\cdot\sum_{\tau=1}^{t-1}(x_{k-1,\tau}(\theta)-x_{k,\tau}(\theta))\text{~holds~for~any~}t_k+1\leq t\leq T
\]
Moreover, note that from definition, for $t_k+1\leq t\leq T$, we have that $x_{l,t}(\theta)=\p_t\cdot\sum_{\tau=1}^{t-1}(x_{l-1,\tau}(\theta)-x_{l,\tau}(\theta))$ when $1\leq l\leq k-1$. Thus, for $t_k+1\leq t\leq T$, we have that
\[
x_{k,t}(\theta)\leq \p_t\cdot\sum_{\tau=1}^{t-1}(x_{k-1,\tau}(\theta)-x_{k,\tau}(\theta))\Leftrightarrow \theta=\sum_{v=1}^{k}x_{v,t}/\p_t\leq 1-\sum_{\tau=1}^{t-1}x_{k,\tau}(\theta)
\]
From the nonnegativity of $\{x_{l,t}(\theta)\}$, we know that $\sum_{\tau=1}^{t-1}x_{k,\tau}(\theta)$ is monotone increasing with $t$. Thus, it holds that
\[
\{x_{l,t}(\theta)\} \text{~is~feasible~}\Leftrightarrow \theta\leq 1-\sum_{t=1}^{T-1}x_{k,t}(\theta)
\]
which completes our proof.
\Halmos
\end{proof}
\subsection{Proof of Lemma \ref{feasilemma3}}\label{ProofLemma7}
\begin{proof}{Proof:}
For any fixed $\theta\in[0,1]$ and any fixed $\Delta\geq0$ such that $\theta+\Delta\in[0,1]$, we compare between $\{x_{l,t}(\theta)\}$ and $\{x_{l,t}(\theta+\Delta)\}$. Since we consider for a fixed $\theta$ and $\Delta$, for notation brevity, we will omit $\theta$ and $\Delta$ by substituting $\{x_{l,t}\}$ for $\{x_{l,t}(\theta)\}$ and substituting $\{x'_{l,t}\}$ for $\{x_{l,t}(\theta+\Delta)\}$. Respectively, we denote $y_{l,t}=\sum_{\tau=1}^{t-1}x_{l,\tau}$ and $y'_{l,t}=\sum_{\tau=1}^{t-1}x'_{l,\tau}$. Also, we denote $\{t_l\}$ to be the time indexes associated with $\{x_{l,t}\}$ in the definition of $\{x_{l,t}\}$ and $\{t'_l\}$ to be the time indexes associated with $\{x'_{l,t}\}$. We will use induction to show that for each $l$, we have that $y_{l,t}\leq y'_{l,t}$ and $\sum_{v=1}^{l}y'_{v,t}\leq\sum_{v=1}^{l}y_{v,t}+\Delta\cdot\sum_{\tau=1}^{t}\p_\tau$ hold for each $t$.

For the case $l=1$, obviously we have that $t_2'\leq t_2$. When $1\leq t\leq t_2'$, from definition, it holds that $y_{1,t}\leq y'_{1,t}\leq y_{1,t}+\Delta\cdot\sum_{\tau=1}^{t}\p_\tau$. We now use induction on $t$ for $t'_2+1\leq t\leq t_2$. When $t=t'_2+1$, note that
\[
x'_{1,t'_2+1}=\p_{t'_2+1}\cdot(1-y'_{1,t'_2})\leq(\theta+\Delta)\cdot \p_{t'_2+1}\text{~~and~~}x_{1,t'_2+1}=\theta\cdot \p_{t'_2+1}\leq \p_{t'_2+1}\cdot(1-y_{1,t'_2})
\]
we have that
\[
y_{1,t'_2+1}=y_{1,t'_2}+x_{1,t'_2+1}\leq \p_{t'_2+1}+(1-p_{t'_2+1})\cdot y_{1,t'_2}\leq  \p_{t'_2+1}+(1-\p_{t'_2+1})\cdot y'_{1,t'_2}=y'_{1,t'_2+1}
\]
and
\[
y'_{1,t'_2+1}=y'_{1,t'_2}+x'_{1,t'_2+1}\leq y_{1,t'_2}+\Delta\cdot\sum_{t=1}^{t'_2}\p_t+(\theta+\Delta)\cdot \p_{t'_2+1}=y_{1,t'_2+1}+\Delta\cdot\sum_{t=1}^{t'_2+1}\p_t
\]
Now suppose for a fixed $t$ satisfying $t'_2+1\leq t\leq t_2-1$, it holds $y_{1,t}\leq y'_{1,t}\leq y_{1,t}+\Delta\cdot\sum_{\tau=1}^{t}p_\tau$. From definition, note that
\[
x'_{1,t+1}=\p_{t+1}\cdot(1-y'_{1,t})\leq(\theta+\Delta)\cdot \p_{t+1}\text{~~and~~}x_{1,t+1}=\theta\cdot \p_{t+1}\leq \p_{t+1}\cdot(1-y_{1,t})
\]
we have
\[
y_{1,t+1}=y_{1,t}+x_{1,t+1}\leq \p_{t+1}+(1-\p_{t+1})\cdot y_{1,t}\leq \p_{t+1}+(1-\p_{t+1})\cdot y'_{1,t}=y'_{1,t+1}
\]
and
\[
y'_{1,t+1}=y'_{1,t}+x'_{1,t+1}\leq y_{1,t}+\Delta\cdot\sum_{\tau=1}^{t}\p_\tau+(\theta+\Delta)\cdot \p_{t+1}=y_{1,t+1}+\Delta\cdot\sum_{\tau=1}^{t+1}\p_\tau
\]
Thus, from induction on $t$, we conclude that $y_{1,t}\leq y'_{1,t}\leq y_{1,t}+\Delta\cdot\sum_{\tau=1}^{t}\p_\tau$ holds for any $t'_2+1\leq t\leq t_2$. Finally, when $t\geq t_2+1$, note that
\[
y_{1,t}=\p_{t}+(1-\p_t)\cdot y_{1,t-1}\text{~~and~~}y'_{1,t}=\p_{t}+(1-\p_t)\cdot y'_{1,t-1}
\]
which implies that
\[
y'_{1,t}-y_{1,t}=(1-\p_t)\cdot(y'_{1,t-1}-y_{1,t-1})=\dots=(y'_{1,t_2}-y_{1,t_2})\cdot \prod_{\tau=t_2+1}^{t}(1-\p_\tau)
\]
Thus, we prove that for any $1\leq t\leq T$, it holds that $y_{1,t}\leq y'_{1,t}\leq y_{1,t}+\Delta\cdot\sum_{\tau=1}^{t}\p_\tau$.

Suppose that for a fixed $1\leq l\leq k$, $y_{l,t}\leq y'_{l,t}$ and $\sum_{v=1}^{l}y'_{v,t}\leq\sum_{v=1}^{l}y_{v,t}+\Delta\cdot\sum_{\tau=1}^{t}\p_\tau$ hold for each $t$. We now consider the case for $l+1$. When $t\leq\min\{t_{l+2},t'_{l+2}\}$, from definition, we have that
\[
\sum_{v=1}^{l+1}y'_{v,t}=(\theta+\Delta)\cdot\sum_{\tau=1}^{t}\p_\tau ~~\text{and}~~\sum_{v=1}^{l+1}y_{v,t}=\theta\cdot\sum_{\tau=1}^{t}\p_\tau
\]
which implies that $\sum_{v=1}^{l+1}y'_{v,t}\leq \sum_{v=1}^{l+1}y_{v,t}+\Delta\cdot\sum_{\tau=1}^{t}\p_\tau$. Also, we have
\[
y'_{l+1,t}-y_{l+1,t}=\Delta\cdot\sum_{\tau=1}^{t}\p_\tau-\left(\sum_{v=1}^{l}y'_{v,t}-\sum_{v=1}^{l}y_{v,t} \right)\geq0
\]
where the last inequality holds from induction condition. Thus, we prove that $y_{l+1,t}\leq y'_{l+1,t}$ and $\sum_{v=1}^{l+1}y'_{v,t}\leq\sum_{v=1}^{l+1}y_{v,t}+\Delta\cdot\sum_{\tau=1}^{t}\p_\tau$ hold for each $1\leq t\leq\min\{t_{l+2},t'_{l+2}\}$. Moreover, note that $t_{l+2}$ is defined as the first time that $\theta>1-y_{l+1,t_{l+2}}$ while $t'_{l+2}$ is defined as the first time that $\theta+\Delta>1-y'_{l+1,t'_{l+2}}$. Since $y'_{l+1,t}\geq y_{l+1,t}$ when $t\leq\min\{t_{l+2}, t'_{l+2}\}$, we must have $t'_{l+2}\leq t_{l+2}$. Then we use induction on $t$ for $t'_{l+2}+1\leq t\leq t_{l+2}$. When $t=t'_{l+2}+1\leq t_{l+2}$, from definition, we have
\[
x'_{l+1,t'_{l+2}+1}=\p_{t'_{l+2}+1}\cdot( y'_{l,t'_{l+2}}-y'_{l+1,t'_{l+2}} )\Rightarrow y'_{l+1,t'_{l+2}+1}=\p_{t'_{l+2}+1}\cdot y'_{l,t'_{l+2}}+(1-\p_{t'_{l+2}+1})\cdot y'_{l+1,t'_{l+2}}
\]
and
\[
x_{l+1,t'_{l+2}+1}\leq \p_{t'_{l+2}+1}\cdot( y_{l,t'_{l+2}}-y_{l+1,t'_{l+2}} ) \Rightarrow y_{l+1,t'_{l+2}+1}\leq\p_{t'_{l+2}+1}\cdot y_{l,t'_{l+2}}+(1-\p_{t'_{l+2}+1})\cdot y_{l+1,t'_{l+2}}
\]
Note that $y'_{l,t'_{l+2}}\geq y_{l,t'_{l+2}}$ and $y'_{l+1,t'_{l+2}}\geq y_{l+1,t'_{l+2}}$, we get $y'_{l+1,t'_{l+2}+1}\geq y_{l+1,t'_{l+2}+1}$. Moreover, note that from the definition of $t'_{l+2}$, we have
\[
\sum_{v=1}^{l+1}x'_{v,t'_{l+2}+1}\leq p_{t'_{l+2}+1}\cdot(\theta+\Delta)=\sum_{v=1}^{l+1}x_{v,t'_{l+2}+1}+\Delta\cdot \p_{t'_{l+2}+1}
\]
which implies that
\[\begin{aligned}
\sum_{v=1}^{l+1}y'_{v,t'_{l+2}+1}&=\sum_{v=1}^{l+1}y'_{v,t'_{l+2}}+\sum_{v=1}^{l+1}x'_{v,t'_{l+2}+1}\leq \sum_{v=1}^{l+1}y_{v,t'_{l+2}}+\Delta\cdot\sum_{j=1}^{t'_{l+2}}\p_j+\sum_{v=1}^{l+1}x_{v,t'_{l+2}+1}+\Delta\cdot \p_{t'_{l+2}+1}\\
&=\sum_{v=1}^{l+1}y_{v,t'_{l+2}+1}+\Delta\cdot\sum_{j=1}^{t'_{l+2}+1}\p_j
\end{aligned}\]
Then suppose for a fixed $t$ satisfying $t'_{l+2}+1\leq t\leq t_{l+2}-1$, it holds that $y_{l+1,t}\leq y'_{l+1,t}$ and $\sum_{v=1}^{l+1}y'_{v,t}\leq\sum_{v=1}^{l+1}y_{v,t}+\Delta\cdot\sum_{\tau=1}^{t}\p_\tau$. From definition, we have
\[
x'_{l+1,t+1}=\p_{t+1}\cdot( y'_{l,t}-y'_{l+1,t} )\Rightarrow y'_{l+1,t+1}=\p_{t+1}\cdot y'_{l,t}+(1-\p_{t+1})\cdot y'_{l+1,t}
\]
and
\[
x_{l+1,t+1}\leq \p_{t+1}\cdot( y_{l,t}-y_{l+1,t} )\Rightarrow y_{l+1,t+1}\leq \p_{t+1}\cdot y_{l,t}+(1-\p_{t+1})\cdot y_{l+1,t}
\]
Note that $y'_{l,t}\geq y_{l,t}$ and $y'_{l+1,t}\geq y_{l+1,t}$, we have $y'_{l+1,t+1}\geq y_{l+1,t+1}$. Also, from the definition of $t'_{l+2}$, we have
\[
\sum_{v=1}^{l+1}x'_{v,t+1}\leq \p_{t+1}\cdot(\theta+\Delta)=\sum_{v=1}^{l+1}x_{v,t+1}+\Delta\cdot \p_{t+1}
\]
which implies that
\[\begin{aligned}
\sum_{v=1}^{l+1}y'_{v,t+1}&=\sum_{v=1}^{l+1}y'_{v,t}+\sum_{v=1}^{l+1}x'_{v,t+1}\leq \sum_{v=1}^{l+1}y_{v,t}+\Delta\cdot\sum_{\tau=1}^{t}\p_\tau+\sum_{v=1}^{l+1}x_{v,t+1}+\Delta\cdot \p_{t+1}\\
&=\sum_{v=1}^{l+1}y_{v,t+1}+\Delta\cdot\sum_{\tau=1}^{t+1}\p_\tau
\end{aligned}\]
Thus, from induction on $t$, we prove that $y_{l+1,t}\leq y'_{l+1,t}$ and $\sum_{v=1}^{l+1}y'_{v,t}\leq\sum_{v=1}^{l+1}y_{v,t}+\Delta\cdot\sum_{\tau=1}^{t}\p_\tau$ hold for any $t'_{l+2}+1\leq t\leq t_{l+2}$. Finally, when $t\geq t_{l+2}+1$, note that
\[
y_{l+1,t}=\p_t\cdot y_{l,t-1}+(1-\p_t)\cdot y_{l+1,t-1}\text{~~and~~}y'_{l+1,t}=\p_t\cdot y'_{l,t-1}+(1-\p_t)\cdot y'_{l+1,t-1}
\]
which implies that
\[
y'_{l+1,t}-y_{l+1,t}=\p_t\cdot (y'_{l,t-1}-y_{l,t-1})+(1-\p_t)\cdot (y'_{l+1,t-1}-y_{l+1,t-1})
\]
It is direct to show inductively on $t$ such that $y_{l+1,t}\leq y'_{l+1,t}$ and $\sum_{v=1}^{l+1}y'_{v,t}\leq\sum_{v=1}^{l+1}y_{v,t}+\Delta\cdot\sum_{\tau=1}^{t}\p_\tau$ hold for any $t\geq t_{l+2}+1$.

Thus, we have proved that for any $1\leq t\leq T$, we have $y_{l+1,t}\leq y'_{l+1,t}$ and $\sum_{v=1}^{l+1}y'_{v,t}\leq\sum_{v=1}^{l}y_{v,t}+\Delta\cdot\sum_{\tau=1}^{t}p_\tau$. By the induction on $l$, we finally prove that for any $1\leq l\leq k$, $y_{l,t}\leq y'_{l,t}$ and $\sum_{v=1}^{l}y'_{v,t}\leq\sum_{v=1}^{l}y_{v,t}+\Delta\cdot\sum_{\tau=1}^{t}\p_\tau$ hold for any $1\leq t\leq T$. In this way, we prove that $y_{l,t}(\theta)$ is monotone increasing with $\theta$ for any $l, t$. Moreover, note that since $\sum_{\tau=1}^{T}\p_\tau\leq k$, we have that $y_{l,t}(\theta+\Delta)\leq y_{l,t}(\theta)+k\cdot\Delta$ hold for any $\theta, \Delta$ and any $l, t$. Thus, $y_{l,t}(\theta)$ is a continuous function on $\theta$, which completes our proof.
\Halmos
\end{proof}

\subsection{Proof of \Cref{Prophetoptimaltheorem}}\label{newProoftheorem3}
\begin{proof}{Proof:}
Given \Cref{feasiproposition}, in order to prove Theorem \ref{Prophetoptimaltheorem}, it is enough for us to construct a feasible solution $\{\beta^*_{l,t}, \xi^*_t\}$ to $\DCkunit(\bm{p})$ in \eqref{dual} such that the primal-dual pair $\{\theta^*, x_{l,t}(\theta^*)\}$ and $\{\beta^*_{l,t}, \xi^*_t\}$ satisfies the complementary slackness conditions. Specifically, we will construct a feasible solution $\{\beta^*_{l,t}, \xi^*_t\}$ to $\DCkunit(\bm{p})$ satisfying the following conditions:
\begin{align}
&\beta^*_{1,t}\cdot\left(x_{1,t}(\theta^*)- \p_t\cdot (1-\sum_{\tau<t}x_{1,\tau}(\theta^*))\right)=0, ~~\forall t=1,\dots,T \label{new_complementaryslackness}\\
&\beta^*_{l,t}\cdot \left(x_{l,t}(\theta^*)- \p_t\cdot \sum_{\tau<t}(x_{l-1,\tau}(\theta^*)-x_{l,\tau}(\theta^*))\right)=0,~~\forall t=1,\dots, T, \forall l=2,\dots,k \nonumber\\
&x_{l,t}(\theta^*)\cdot \left(\beta^*_{l,t}+\sum_{\tau>t}\p_\tau\cdot (\beta^*_{l,\tau}-\beta^*_{l+1,\tau})-\xi^*_t\right)=0, ~~\forall t=1,\dots, T, \forall l=2,\dots,k \nonumber\\
&x_{k,t}(\theta^*)\cdot\left( \beta^*_{k,t}+\sum_{\tau>t}\p_\tau\cdot\beta^*_{k,\tau}-\xi^*_t\right)=0, ~~\forall t=1,\dots,T\nonumber
\end{align}
Note that from definitions, $\{x_{l,t}(\theta^*)\}$ satisfies the following conditions:
\[\begin{aligned}
&x_{l,t}(\theta^*)=0\leq \p_t\cdot \sum_{\tau=1}^{t-1}(x_{l-1,\tau}(\theta^*)-x_{l,\tau}(\theta^*)),~~\forall t\leq t_l\\
&x_{l,t}(\theta^*)=\theta^*\cdot \p_t-\sum_{v=1}^{l-1}x_{v,t}(\theta^*)\leq \p_t\cdot \sum_{\tau=1}^{t-1}(x_{l-1,\tau}(\theta^*)-x_{l,\tau}(\theta^*))~~~\text{for~}t_l+1\leq t\leq t_{l+1}
\end{aligned}\]
where $\{t_l\}$ are the time indexes associated with the definition of $\{x_{l,t}(\theta^*)\}$ and we define $t_1=0$, $t_{k+1}=T-1$. Note that we can set $t_{k+1}=T-1$ because we focus on the solution $\{x_{l,t}(\theta^*)\}$. If we consider other solution $\{x_{l,t}(\theta)\}$ with $\theta\neq\theta^*$, then we cannot have $t_{k+1}=T-1$. Having $\theta=\theta^*$ in the solution $\{x_{l,t}(\theta)\}$ is the only way to make $t_{k+1}=T-1$ consistent with \Cref{constructprophet}.

For simplicity, we also denote $\sum_{\tau=1}^{t-1}x_{0,\tau}(\theta^*)=1$ for any $t$. Thus, in order for $\{\beta^*_{l,t}, \xi^*_t\}$ to satisfy the conditions in \eqref{new_complementaryslackness}, it is enough for $\{\beta^*_{l,t}, \xi^*_t\}$ to be feasible to $\DCkunit(\bm{p})$ and satisfy the following conditions:
\begin{align}
&\beta^*_{l,t}=0~~~\text{for~}t\leq t_{l+1}\label{new_condition1}\\
&\beta^*_{l,t}+\sum_{\tau=t+1}^T p_\tau\cdot(\beta^*_{l,\tau}-\beta^*_{l+1,\tau} )=\xi^*_t~~~\text{for~}t\geq t_l+1 \label{new_condition2}
\end{align}
where we denote $\beta^*_{k+1,t}=0$ for notation simplicity. We now show the construction of the solution $\{\beta^*_{l,t}, \xi^*_t\}$ to $\DCkunit(\bm{p})$.

We first define $\xi^*_T=R$, for a constant $R>0$ that will be specified later. We also define $\beta^*_{l,T}=R$ for any $l=1,\dots,k$. Then, inductively for $t=T-1, T-2, \dots, 1$, we follow the two steps below to specify the value of $\xi^*_t$ and $\beta^*_{j,t}$ for any $l=1,\dots,k$.
\begin{enumerate}
    \item We fix $l'$ such that $t_{l'}+1\leq t\leq t_{l'+1}$, and we define
    \begin{equation}\label{eqn:23062501}
    \xi^*_t=\sum_{\tau=t+1}^T p_{\tau}\cdot (\beta^*_{l',\tau}-\beta^*_{l'+1, \tau}).
    \end{equation}
    \item For each $l=1,\dots,k$, we define
    \begin{equation}\label{eqn:23062502}
    \beta^*_{l,t}=\max\left\{ 0, \xi^*_t-\sum_{\tau=t+1}^T p_{\tau}\cdot (\beta^*_{l,\tau}-\beta^*_{l+1, \tau}) \right\}.
    \end{equation}
\end{enumerate}
Finally, the constant $R$ is selected such that $\sum_{t=1}^T p_t\cdot\xi^*_t=1$. In what follows, we show the construction of $\{\beta^*_{l,t}, \xi^*_t\}$ above is feasible to $\DCkunit(\bm{p})$ and satisfy the requirements \eqref{new_condition1} and \eqref{new_condition2}. Our proof would rely on the following property of $\{\beta^*_{l,t}, \xi^*_t\}$, which we prove at the end of this proof.
\begin{claim}\label{claim:23062501}
Let $\{\beta^*_{l,t}, \xi^*_t\}$ be constructed in \eqref{eqn:23062501} and \eqref{eqn:23062502}. Then, it holds that 
\begin{equation}\label{eqn:23062503}
\sum_{\tau=t}^T p_{\tau}\cdot(\beta^*_{l,\tau}-\beta^*_{l+1,\tau})\leq \sum_{\tau=t}^T p_{\tau}\cdot(\beta^*_{l+1,\tau}-\beta^*_{l+2,\tau})
\end{equation}
for any $t=1,\dots,T$ and any $l=1,\dots, k-1$. Moreover, $\beta^*_{l,t}\geq \beta^*_{l+1,t}$ for any $t=1,\dots,T$ and any $l=1,\dots, k$.
\end{claim}

From \Cref{claim:23062501} and the construction \eqref{eqn:23062501}, we know that $\xi^*_t$ is non-negative for any $t=1,\dots,T$. Therefore, from the construction \eqref{eqn:23062502}, we know that $\{\beta^*_{l,t}, \xi^*_t\}$ above is feasible to $\DCkunit(\bm{p})$. It only remains to show that $\{\beta^*_{l,t}, \xi^*_t\}$ satisfy the requirements \eqref{new_condition1} and \eqref{new_condition2}.

For any $t=1,\dots, T$, we fix the index $l'$ such that $t_{l'}+1\leq t\leq t_{l'+1}$. Then for any index $l\geq l'$, which is equivalent to $t\leq t_{l+1}$, we have
\[
\xi^*_t=\sum_{\tau=t+1}^T p_{\tau}\cdot (\beta^*_{l',\tau}-\beta^*_{l'+1, \tau})\leq \sum_{\tau=t+1}^T p_{\tau}\cdot (\beta^*_{l,\tau}-\beta^*_{l+1, \tau})
\]
which follows from \eqref{eqn:23062503} in \Cref{claim:23062501}. Thus, we know that $\beta^*_{l,t}=0$ for any $l\geq l'$ such that $t\leq t_{l+1}$, which shows $\{\beta^*_{l,t}, \xi^*_t\}$ satisfy the requirements \eqref{new_condition1}.

For any $t=1,\dots, T$, we fix the index $l'$ such that $t_{l'}+1\leq t\leq t_{l'+1}$. Then for any index $l\leq l'$, which is equivalent to $t\geq t_{l}+1$, we have 
\[
\xi^*_t=\sum_{\tau=t+1}^T p_{\tau}\cdot (\beta^*_{l',\tau}-\beta^*_{l'+1, \tau})\geq \sum_{\tau=t+1}^T p_{\tau}\cdot (\beta^*_{l,\tau}-\beta^*_{l+1, \tau})
\]
which follows from \eqref{eqn:23062503} in \Cref{claim:23062501}. Thus, we know that 
\[
\beta^*_{l,t}=\xi^*_t-\sum_{\tau=t+1}^T p_{\tau}\cdot (\beta^*_{l,\tau}-\beta^*_{l+1, \tau})
\]
for $t\geq t_l+1$, which shows that $\{\beta^*_{l,t}, \xi^*_t\}$ satisfy the requirements \eqref{new_condition2}. Our proof of the theorem is thus completed.
\Halmos
\end{proof}

\begin{proof}{Proof of \Cref{claim:23062501}:}
We prove \eqref{eqn:23062503} by induction. Clearly, for $t=T$, since $\beta^*_{l,T}=R>0$ for any $l=1,\dots,k$ and $\beta^*_{k+1,T}=0$, \eqref{eqn:23062503} holds. We now suppose \eqref{eqn:23062503} holds for $t+1$ and we consider the situation for $t$. 

We fix the index $l'$ such that $t_{l'}+1\leq t\leq t_{l'+1}$. We then consider the following two scenarios.

Scenario (i) when $l\geq l'$. Then, from the induction hypothesis, we know that
\[
\xi^*_t=\sum_{\tau=t+1}^T p_{\tau}\cdot (\beta^*_{l',\tau}-\beta^*_{l'+1, \tau}) \leq \sum_{\tau=t+1}^T p_{\tau}\cdot (\beta^*_{l'',\tau}-\beta^*_{l''+1, \tau})
\]
for $l''=l, l+1, l+2$. We thus have $\beta^*_{l,t}=\beta^*_{l+1,t}=\beta^*_{l+2,t}=0$ and we directly prove \eqref{eqn:23062503} from the induction hypothesis. 

Scenario (ii) when $l< l'$. From the induction hypothesis, it is clear to see that
\[
\xi^*_t=\sum_{\tau=t+1}^T p_{\tau}\cdot (\beta^*_{l',\tau}-\beta^*_{l'+1, \tau}) \geq \sum_{\tau=t+1}^T p_{\tau}\cdot (\beta^*_{l,\tau}-\beta^*_{l+1, \tau}).
\]
Then, we have
\[
\beta^*_{l,t}=\xi^*_t-\sum_{\tau=t+1}^T p_{\tau}\cdot (\beta^*_{l,\tau}-\beta^*_{l+1, \tau}).
\]
On the other hand, from the construction \eqref{eqn:23062502}, we know
\[
-\beta^*_{l+1,t}\leq -\xi^*_t+\sum_{\tau=t+1}^T p_{\tau}\cdot (\beta^*_{l+1,\tau}-\beta^*_{l+2, \tau}).
\]
Therefore, it holds that
\[\begin{aligned}
\sum_{\tau=t}^T p_{\tau}\cdot(\beta^*_{l,\tau}-\beta^*_{l+1,\tau})&=\sum_{\tau=t+1}^T p_{\tau}\cdot(\beta^*_{l,\tau}-\beta^*_{l+1,\tau})+ p_{t}\cdot(\beta^*_{l,t}-\beta^*_{l+1,t})\\
&\leq (1-p_t)\cdot \sum_{\tau=t+1}^T p_{\tau}\cdot(\beta^*_{l,\tau}-\beta^*_{l+1,\tau})+p_t\cdot \sum_{\tau=t+1}^T p_{\tau}\cdot (\beta^*_{l+1,\tau}-\beta^*_{l+2, \tau}).
\end{aligned}\]
From the above inequality, we have
\[\begin{aligned}
&\sum_{\tau=t}^T p_{\tau}\cdot(\beta^*_{l+1,\tau}-\beta^*_{l+2,\tau})-\sum_{\tau=t}^T p_{\tau}\cdot(\beta^*_{l,\tau}-\beta^*_{l+1,\tau})\\
\geq& \sum_{\tau=t}^T p_{\tau}\cdot(\beta^*_{l+1,\tau}-\beta^*_{l+2,\tau})-(1-p_t)\cdot \sum_{\tau=t+1}^T p_{\tau}\cdot(\beta^*_{l,\tau}-\beta^*_{l+1,\tau})-p_t\cdot \sum_{\tau=t+1}^T p_{\tau}\cdot (\beta^*_{l+1,\tau}-\beta^*_{l+2, \tau})\\
=& ~p_t\cdot (\beta^*_{l+1,t}-\beta^*_{l+2,t})+(1-p_t)\cdot \left( \sum_{\tau=t+1}^T p_{\tau}\cdot (\beta^*_{l+1,\tau}-\beta^*_{l+2, \tau})-\sum_{\tau=t+1}^T p_{\tau}\cdot (\beta^*_{l,\tau}-\beta^*_{l+1, \tau})
 \right)\\
\geq&~p_t\cdot (\beta^*_{l+1,t}-\beta^*_{l+2,t})
\end{aligned}\]
where the last inequality follows from $p_t\leq1$ and the induction hypothesis. Therefore, it only remains to show that $\beta^*_{l+1,t}\geq\beta^*_{l+2,t}$ under the induction hypothesis, which would prove our whole claim. From the induction hypothesis, we clearly have
\[
\xi^*_t-\sum_{\tau=t+1}^T p_{\tau}\cdot (\beta^*_{l+1,\tau}-\beta^*_{l+2, \tau}) \geq \xi^*_t-\sum_{\tau=t+1}^T p_{\tau}\cdot (\beta^*_{l+2,\tau}-\beta^*_{l+3, \tau})
\]
which implies that
\begin{equation}\label{eqn:23062504}
\beta^*_{l+1,t}=\max\left\{ 0, \xi^*_t-\sum_{\tau=t+1}^T p_{\tau}\cdot (\beta^*_{l+1,\tau}-\beta^*_{l+2, \tau}) \right\} \geq \beta^*_{l+2,t}=\max\left\{ 0, \xi^*_t-\sum_{\tau=t+1}^T p_{\tau}\cdot (\beta^*_{l+2,\tau}-\beta^*_{l+3, \tau}) \right\}.
\end{equation}
Therefore, we know that 
\[
\sum_{\tau=t}^T p_{\tau}\cdot(\beta^*_{l+1,\tau}-\beta^*_{l+2,\tau})-\sum_{\tau=t}^T p_{\tau}\cdot(\beta^*_{l,\tau}-\beta^*_{l+1,\tau})
\geq p_t\cdot (\beta^*_{l+1,t}-\beta^*_{l+2,t})\geq0
\]
which completes our induction. Thus, we prove \eqref{eqn:23062503} for any $t=1,\dots, T$ and any $l=1,\dots,k-1$. Note that following the step in \eqref{eqn:23062504}, we can directly verify that $\beta^*_{l,t}\geq\beta^*_{l+1,t}$ given \eqref{eqn:23062503} has been proved, for any $t=1,\dots, T$ and any $l=1,\dots,k-1$. Our proof of the claim is thus completed. 
\Halmos
\end{proof}

\subsection{Construction of $\{\beta^*_{l,t}, \xi^*_t\}$ and a Constructive Proof of Theorem \ref{Prophetoptimaltheorem}}\label{Prooftheorem3}
Given \Cref{feasiproposition}, in order to prove Theorem \ref{Prophetoptimaltheorem}, it is enough for us to construct a feasible solution $\{\beta^*_{l,t}, \xi^*_t\}$ to $\DCkunit(\bm{p})$ in \eqref{dual} such that the primal-dual pair $\{\theta^*, x_{l,t}(\theta^*)\}$ and $\{\beta^*_{l,t}, \xi^*_t\}$ satisfies the complementary slackness conditions. Specifically, we will construct a feasible solution $\{\beta^*_{l,t}, \xi^*_t\}$ to $\DCkunit(\bm{p})$ satisfying the following conditions:
\begin{align}
&\beta^*_{1,t}\cdot\left(x_{1,t}(\theta^*)- \p_t\cdot (1-\sum_{\tau<t}x_{1,\tau}(\theta^*))\right)=0, ~~\forall t=1,\dots,T \label{complementaryslackness}\\
&\beta^*_{l,t}\cdot \left(x_{l,t}(\theta^*)- \p_t\cdot \sum_{\tau<t}(x_{l-1,\tau}(\theta^*)-x_{l,\tau}(\theta^*))\right)=0,~~\forall t=1,\dots, T, \forall l=2,\dots,k \nonumber\\
&x_{l,t}(\theta^*)\cdot \left(\beta^*_{l,t}+\sum_{\tau>t}\p_\tau\cdot (\beta^*_{l,\tau}-\beta^*_{l+1,\tau})-\xi^*_t\right)=0, ~~\forall t=1,\dots, T, \forall l=2,\dots,k \nonumber\\
&x_{k,t}(\theta^*)\cdot\left( \beta^*_{k,t}+\sum_{\tau>t}\p_\tau\cdot\beta^*_{k,\tau}-\xi^*_t\right)=0, ~~\forall t=1,\dots,T\nonumber
\end{align}
Note that from definitions, $\{x_{l,t}(\theta^*)\}$ satisfies the following conditions:
\[\begin{aligned}
&x_{l,t}(\theta^*)=0\leq \p_t\cdot \sum_{\tau=1}^{t-1}(x_{l-1,\tau}(\theta^*)-x_{l,\tau}(\theta^*)),~~\forall t\leq t_l\\
&x_{l,t}(\theta^*)=\theta^*\cdot \p_t-\sum_{v=1}^{l-1}x_{v,t}(\theta^*)\leq \p_t\cdot \sum_{\tau=1}^{t-1}(x_{l-1,\tau}(\theta^*)-x_{l,\tau}(\theta^*))~~~\text{for~}t_l+1\leq t\leq t_{l+1}
\end{aligned}\]
where $\{t_l\}$ are the time indexes associated with the definition of $\{x_{l,t}(\theta^*)\}$ and we define $t_1=0$, $t_{k+1}=T-1$. For simplicity, we also denote $\sum_{\tau=1}^{t-1}x_{0,\tau}(\theta^*)=1$ for any $t$. Thus, in order for $\{\beta^*_{l,t}, \xi^*_t\}$ to satisfy the conditions in \eqref{complementaryslackness}, it is enough for $\{\beta^*_{l,t}, \xi^*_t\}$ to be feasible to $\DCkunit(\bm{p})$ and satisfy the following conditions:
\begin{align}
&\beta^*_{l,t}=0~~~\text{for~}t\leq t_{l+1}\label{condition1}\\
&\beta^*_{l,t}+\sum_{\tau=t+1}^T p_\tau\cdot(\beta^*_{l,\tau}-\beta^*_{l+1,\tau} )=\xi^*_t~~~\text{for~}t\geq t_l+1 \label{condition2}
\end{align}
where we denote $\beta^*_{k+1,t}=0$ for notation simplicity. We now show the construction of the solution $\{\beta^*_{l,t}, \xi^*_t\}$ to $\DCkunit(\bm{p})$. Define the following constants for each $l, q\in\{1,2,\dots,k\}$:
\[
B_{l, q}=\sum_{t_l+1\leq j_1< j_2<\dots< j_q\leq t_{l+1}}\frac{\p_{j_1}\p_{j_2}\dots \p_{j_q}}{(1-\p_{j_1})(1-\p_{j_2})\dots(1-\p_{j_q})}\cdot\prod_{w=t_l+1}^{t_{l+1}}(1-\p_w)
\]
and we set $B_{l,0}=\prod_{w=t_l+1}^{t_{l+1}}(1-\p_w)$. We also define the following terms for each $l, q\in\{1,2,\dots,k\}$ and each $t\in\{t_{l}+1,\dots,t_{l+1}\}$, where $\{t_l\}$ are the time indexes defined in the construction of $\{\theta^*, x_{l,t}(\theta^*)\}$ and we define $t_1=0$, $t_{k+1}=T-1$:
\[
A_{l,q}(t)=\sum_{t+1\leq j_1< j_2<\dots< j_q\leq t_{l+1}}\frac{\p_{j_1}\p_{j_2}\dots \p_{j_q}}{(1-\p_{j_1})(1-\p_{j_2})\dots(1-\p_{j_q})}\cdot\prod_{w=t+1}^{t_{l+1}}(1-\p_w)
\]
and we set $A_{l,0}(t)=\prod_{w=t+1}^{t_{l+1}}(1-\p_w)$. Then our construction of the solution $\{\beta^*_{l,t}, \xi^*_t\}$ can be fully described as follows:
\begin{equation}\label{constructiondual}
\begin{aligned}
&\xi^*_T=\beta^*_{l_1,T}=R,~~~\forall l_1=1,2,\dots,k\\
&\beta^*_{l_1,t}=0,~~~\forall l_1=1,2,\dots,k, \forall t\leq t_{l_1+1}\\
&\xi^*_t=\phi_{l}\cdot \p_TR,~~~\forall l=1,2,\dots,k, \forall t_{l}+1\leq t\leq t_{l+1}\\
&\beta^*_{l_1,t}=\p_TR\cdot \sum_{w=l_1}^{l_2-1}\delta_{w,l_2}\cdot A_{l_2,w-l_1}(t),~~~\forall l_1=1,2,\dots,k, \forall l_2=l_1+1,\dots,k, \forall t_{l_2}+1\leq t\leq t_{l_2+1}\\
\end{aligned}
\end{equation}
where the parameters $\{\phi_{l}, \delta_{l_1,l_2}, R\}$ are defined as:
\[\begin{aligned}\label{parameter}
 &\delta_{l,k}=1~~~~~~~~\forall l=1,2,\dots,k-1 \\
 & \delta_{l,l}=0 ~~~~~~~~\forall l=1,2,\dots,k \\
 & \delta_{l_1,l_2}=\sum_{w_0=l_1+1}^{l_2}\sum_{w_1=w_0}^{l_2+1}\sum_{w_2=w_1}^{l_2+2}\dots\sum_{w_{k-1-l_2}=w_{k-2-l_2}}^{k-1}B_{l_2+1,w_1-w_0}\cdot B_{l_2+2,w_2-w_1}\dots B_{k-1,w_{k-1-l_2}-w_{k-2-l_2}}\cdot B_{k,k-w_{k-1-l_2}},\nonumber\\
 &~~~~~~~~~~~~~~~~~~~~~~~~~~~~~~~~~~~~~~~~~~~~~~~~~~~~~~~~~\forall l_2=1,2,\dots,k-1\text{~and~}l_1=1,2,\dots,l_2-1 \\
 &\phi_k=1\\
 &\phi_{l}=\sum_{q=l+1}^{k}\sum_{w=l+1}^{q}(\delta_{w-1,q}-\delta_{w,q})\cdot(1-\sum_{v=0}^{w-l-1}B_{q,v})~~~\forall l=1,2,\dots,k-1
\end{aligned}\]
and $R$ is a positive constant such that $\sum_{t=1}^{T}\p_t\cdot \xi^*_t=1$.  We then prove the feasibility of $\{\beta^*_{l,t}, \xi^*_t\}$ and the conditions \eqref{condition1}, \eqref{condition2} are satisfied. Obviously, from definition, $\beta^*_{l,t}$ is nonnegative for each $l$ and each $t$. We first prove that $\xi^*_t$ is also nonnegative for each $t$.
\begin{lemma}\label{duallemma1}
For each $l_2=1,2,\dots,k$ and each $l_1=1,2,\dots,l_2-1$, we have that $\delta_{l_1,l_2}\geq\delta_{l_1+1,l_2}$.
\end{lemma}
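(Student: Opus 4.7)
The plan is to observe that the claimed inequality follows almost immediately from the defining formula of $\delta_{l_1,l_2}$ together with the nonnegativity of each constant $B_{l,q}$. First I would dispose of the trivial case $l_2 = k$: by definition $\delta_{l_1,k} = 1$ for all $l_1 = 1,\ldots,k-1$, so the inequality holds with equality. In the same spirit, the case $l_1 + 1 = l_2$ falls out of the convention $\delta_{l,l} = 0$, so it reduces to showing $\delta_{l_1,l_2} \geq 0$, which will be handled by the same argument below.

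For the remaining case $l_2 \in \{1,\ldots,k-1\}$ with $l_1 \in \{1,\ldots,l_2-1\}$, I would compare $\delta_{l_1,l_2}$ and $\delta_{l_1+1,l_2}$ term by term. Both are written as the same $(k - l_2)$-fold nested sum over indices $w_0 < w_1 < \cdots < w_{k-1-l_2}$ of the identical product $B_{l_2+1,w_1-w_0}\cdots B_{k,k-w_{k-1-l_2}}$; they differ only in the range of the outermost index $w_0$, which runs from $l_1+1$ to $l_2$ in $\delta_{l_1,l_2}$ and from $l_1+2$ to $l_2$ in $\delta_{l_1+1,l_2}$. Consequently the difference $\delta_{l_1,l_2} - \delta_{l_1+1,l_2}$ equals precisely the contribution of the single extra outermost index $w_0 = l_1+1$ to the $\delta_{l_1,l_2}$ sum.

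It then suffices to argue that this leftover contribution is nonnegative. This reduces to showing each $B_{l,q}\geq 0$, which is immediate from its definition: the overall factor $\prod_{w=t_l+1}^{t_{l+1}}(1-p_w)$ is a product of numbers in $[0,1]$, and each coefficient $\tfrac{p_{j_1}\cdots p_{j_q}}{(1-p_{j_1})\cdots(1-p_{j_q})}$ appearing in the inner sum is nonnegative whenever $p_j \in [0,1)$; the degenerate case $p_j = 1$ can be handled by the natural convention that pairs each $(1-p_j)^{-1}$ with a matching factor $(1-p_j)$ in the prefactor product, yielding a well-defined nonnegative quantity (or one can argue by continuity, noting that by \Cref{splititemlemma} one may assume $p_t < 1$ in the worst-case instance).

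The main obstacle, if one can call it that, is purely bookkeeping: making sure the multi-index sum conventions line up correctly between $\delta_{l_1,l_2}$ and $\delta_{l_1+1,l_2}$ so that the two sums really do have identical summands on the overlapping range $w_0 \in \{l_1+2,\ldots,l_2\}$, and that the boundary cases ($l_2=k$, $l_1+1=l_2$, and $p_j=1$) are handled consistently. No substantive analytic content is required beyond the nonnegativity of the $B_{l,q}$'s.
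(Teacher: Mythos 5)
Your proposal is correct and matches the paper's proof essentially line for line: dispose of $l_2=k$ via $\delta_{l,k}=1$, then observe that $\delta_{l_1,l_2}-\delta_{l_1+1,l_2}$ equals the single $w_0=l_1+1$ slice of the nested sum, which is a sum of products of the nonnegative $B_{l,q}$'s. One tiny slip: the nested indices satisfy weak inequalities $w_0\le w_1\le\cdots\le w_{k-1-l_2}$ (each $w_j$ ranges from $w_{j-1}$ up to $l_2+j$), not strict ones, but this does not affect the argument.
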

\begin{proof}{Proof:}
Note that when $l_2=k$, we have that $\delta_{l,k}=1$ for each $l=1,2,\dots,k-1$, thus it holds that $\delta_{l,k}\geq\delta_{l+1,k}$. When $l_2\leq k-1$, from definitions, we have that for each $l_1=1,2,\dots,l_2-1$
\[
\delta_{l_1,l_2}-\delta_{l_1+1,l_2}=\sum_{w_1=l_1+1}^{l_2+1}\sum_{w_2=w_1}^{l_2+2}\dots\sum_{w_{k-1-l_2}=w_{k-2-l_2}}^{k-1}B_{l_2+1,w_1-l_1-1}\cdot B_{l_2+2,w_2-w_1}\dots B_{k-1,w_{k-1-l_2}-w_{k-2-l_2}}\cdot B_{k,k-w_{k-1-l_2}}
\]
which completes our proof.
\Halmos
\end{proof}
We then show that the term $1-\sum_{w=0}^{q}B_{l,w}$ is nonnegative for each $l$ and each $q$. Note that the following lemma essentially implies that $\sum_{t=t_l+1}^{t_{l+1}}\p_t\cdot A_{l,q}(t)=1-\sum_{w=0}^{q}B_{l,w}$, by replacing $i_1$ with $t_l$ and $i_2$ with $t_{l+1}$ in \eqref{ABrelation}, which establishes the nonnegativity of the term $1-\sum_{w=0}^{q}B_{l,w}$.
\begin{lemma}\label{duallemma2}
For each $q\in\{1,2,\dots,k\}$ and any $1\leq i_1+1\leq i_2\leq T$, it holds that
\begin{equation}\label{ABrelation}
\begin{aligned}
&\sum_{t=i_1+1}^{i_2}\p_t\cdot\sum_{t+1\leq j_1< j_2<\dots< j_q\leq i_2}\frac{\p_{j_1}\p_{j_2}\dots \p_{j_q}}{(1-\p_{j_1})(1-\p_{j_2})\dots(1-\p_{j_q})}\cdot\prod_{v=t+1}^{i_2}(1-\p_v)\\
&=1-\sum_{w=0}^{q}\sum_{i_1+1\leq j_1< j_2<\dots< j_w\leq i_2}\frac{\p_{j_1}\p_{j_2}\dots \p_{j_w}}{(1-\p_{j_1})(1-\p_{j_2})\dots(1-\p_{j_w})}\cdot\prod_{v=i_1+1}^{i_2}(1-\p_v),
\end{aligned}
\end{equation}
\end{lemma}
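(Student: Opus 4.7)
I will give a probabilistic interpretation of both sides. Introduce independent Bernoulli random variables $X_{i_1+1},\ldots,X_{i_2}$ with $P(X_v=1)=p_v$, and let $N=\sum_{v=i_1+1}^{i_2}X_v$ be the total number of successes. The plan is to show that both sides equal $P(N\ge q+1)$.

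First I would handle the right-hand side. For a subset $J=\{j_1<\cdots<j_w\}\subseteq\{i_1+1,\ldots,i_2\}$, the elementary manipulation
$$\prod_{v=i_1+1}^{i_2}(1-p_v)\cdot\prod_{s=1}^{w}\frac{p_{j_s}}{1-p_{j_s}}=\prod_{s=1}^{w}p_{j_s}\cdot\prod_{v\in[i_1+1,i_2]\setminus J}(1-p_v)$$
exhibits the summand in the RHS as $P(X_v=1 \text{ for } v\in J,\ X_v=0 \text{ otherwise})$. Summing first over $J$ with $|J|=w$ yields $P(N=w)$, and then summing $w=0,\ldots,q$ gives $P(N\le q)$. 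Hence the RHS equals $1-P(N\le q)=P(N\ge q+1)$.

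Next I would handle the left-hand side by the same manipulation applied to the inner sum (now with ground set $\{t+1,\ldots,i_2\}$): the inner sum is $P(\text{exactly } q \text{ successes among } X_{t+1},\ldots,X_{i_2})$, and multiplying by $p_t$ gives $P(A_t)$, where $A_t$ is the event ``$X_t=1$ and exactly $q$ successes occur in $\{t+1,\ldots,i_2\}$.'' I would then argue that the events $\{A_t\}_{t=i_1+1}^{i_2}$ are pairwise disjoint and that $\bigcup_t A_t=\{N\ge q+1\}$. Disjointness: if $A_t$ holds and $t'<t$, then $\{t'+1,\ldots,i_2\}$ contains $X_t=1$ plus the $q$ other successes, so already has $\ge q+1$ successes, ruling out $A_{t'}$. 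Union: $A_t\subseteq\{N\ge q+1\}$ is immediate; conversely, given an outcome with $N\ge q+1$, letting $t$ be the index of the $(q+1)$-th success from the right (well-defined since $N\ge q+1$) gives exactly $q$ successes in $\{t+1,\ldots,i_2\}$ and $X_t=1$, so $A_t$ holds. Therefore the LHS equals $\sum_t P(A_t)=P(N\ge q+1)$, matching the RHS.

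I do not expect a real obstacle here: the only thing requiring care is the bijective/measure-theoretic argument that the $A_t$ partition $\{N\ge q+1\}$, which is the ``position of the $(q+1)$-th success from the right'' decomposition. Once both sides are rewritten as probabilities of the same event, the identity follows, and as a corollary the quantity $1-\sum_{w=0}^{q}B_{l,w}$ used in the construction of $\{\beta^*_{l,t},\xi^*_t\}$ is nonnegative (since it equals the probability $P(N\ge q+1)$ on $\{t_l+1,\ldots,t_{l+1}\}$).
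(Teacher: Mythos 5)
Your proof is correct and takes a genuinely different route from the paper's. The paper establishes \eqref{ABrelation} by induction on $q$, with a telescoping base case and an inductive step built on exchanging the order of summation and reindexing. You instead introduce independent Bernoulli variables $X_{i_1+1},\ldots,X_{i_2}$ with $P(X_v=1)=p_v$ and $N=\sum_v X_v$, and observe that after the cancellation $\prod_{v}(1-p_v)\cdot\prod_{s}\frac{p_{j_s}}{1-p_{j_s}}=\prod_s p_{j_s}\prod_{v\notin J}(1-p_v)$, the right-hand side of \eqref{ABrelation} is exactly $1-P(N\le q)=P(N\ge q+1)$, while the left-hand side decomposes over the index $t$ of the $(q+1)$-th success from the right, recovering the same probability. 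Your argument is shorter and more conceptual, and it buys something the paper's induction does not: the nonnegativity of $1-\sum_{w=0}^q B_{l,w}$, which the paper derives in \Cref{duallemma3} by combining \Cref{duallemma2} with the observation that the left-hand side is a sum of nonnegative terms, is immediate in your reading since that quantity equals $P(N\ge q+1)$ on the window $\{t_l+1,\ldots,t_{l+1}\}$. One small caveat worth a line in a write-up: the factors $p_{j_s}/(1-p_{j_s})$ presuppose $p_{j_s}<1$; since both sides of \eqref{ABrelation} are, after cancellation, polynomials in $(p_v)_v$, the identity extends to $p_v=1$ by continuity, but it is cleaner to phrase your argument with the cancelled products from the outset.
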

\begin{proof}{Proof:}
We will do induction on $q$ from $q=0$ to $q=k$ to prove \eqref{ABrelation}.
When $q=0$, we have that
\[\begin{aligned}
\sum_{t=i_1+1}^{i_2}\p_t\cdot\prod_{v=t+1}^{i_2}(1-\p_v)&=\sum_{t=i_1+1}^{i_2}(1-(1-\p_t))\cdot\prod_{v=t+1}^{i_2}(1-\p_v)=\sum_{t=i_1+1}^{i_2}\left( \prod_{v=t+1}^{i_2}(1-\p_v)-\prod_{v=t}^{i_2}(1-\p_v) \right)\\
&=1-\prod_{v=i_1+1}^{i_2}(1-\p_v)
\end{aligned}\]
Thus, we have \eqref{ABrelation} holds for $q=0$. Suppose \eqref{ABrelation} holds for $1,2,\dots,q-1$, we consider the case for $q$. For any $1\leq i_1+1\leq i_2\leq T$, we have that
\[\begin{aligned}
&\sum_{t=i_1+1}^{i_2}\p_t\cdot\sum_{t+1\leq j_1< j_2<\dots< j_q\leq i_2}\frac{\p_{j_1}\p_{j_2}\dots \p_{j_q}}{(1-\p_{j_1})(1-\p_{j_2})\dots(1-\p_{j_q})}\cdot\prod_{v=t+1}^{i_2}(1-\p_v)\\
&=\sum_{t=i_1+1}^{i_2}\p_t\cdot\sum_{j_1= t+1}^{i_2}\sum_{j_1< j_2<\dots< j_q\leq i_2}\frac{\p_{j_1}\p_{j_2}\dots \p_{j_q}}{(1-\p_{j_1})(1-\p_{j_2})\dots(1-\p_{j_q})}\cdot\prod_{v=t+1}^{i_2}(1-\p_v)\\
&=\sum_{j_1=i_1+2}^{i_2}\sum_{t=i_1+1}^{j_1-1}\p_t\cdot \frac{ \p_{j_1}}{1-\p_{j_1}}\cdot\sum_{j_1< j_2<\dots< j_q\leq i_2}\frac{\p_{j_2}\dots \p_{j_q}}{(1-\p_{j_2})\dots(1-\p_{j_q})}\cdot\prod_{v=t+1}^{i_2}(1-\p_v)\\
&=\sum_{j_1=i_1+2}^{i_2}\frac{ \p_{j_1}}{1-\p_{j_1}}\cdot\sum_{j_1< j_2<\dots< j_q\leq i_2}\frac{\p_{j_2}\dots \p_{j_q}}{(1-\p_{j_2})\dots(1-\p_{j_q})}
\cdot\prod_{v=j_1}^{i_2}(1-\p_v)\cdot \sum_{t=i_1+1}^{j_1-1}\p_t\cdot\prod_{v=t+1}^{j_1-1}(1-\p_v)
\end{aligned}\]
where the second equality holds by exchanging the order of summation. Note that for induction purpose, we assume \eqref{ABrelation} holds for $q=0$, which implies that $\sum_{t=i_1+1}^{j_1-1}\p_t\cdot\prod_{v=t+1}^{j_1-1}(1-\p_v)=1-\prod_{v=i_1+1}^{j_1-1}(1-\p_v)$. Then we have
\[\begin{aligned}
&\sum_{j_1=i_1+2}^{i_2}\frac{ \p_{j_1}}{1-\p_{j_1}}\cdot\sum_{j_1< j_2<\dots< j_q\leq i_2}\frac{\p_{j_2}\dots \p_{j_q}}{(1-\p_{j_2})\dots(1-\p_{j_q})}
\cdot\prod_{v=j_1}^{i_2}(1-\p_v)\cdot \sum_{t=i_1+1}^{j_1-1}\p_t\cdot\prod_{v=t+1}^{j_1-1}(1-\p_v)\\
&=\sum_{j_1=i_1+2}^{i_2}\p_{j_1}\cdot\sum_{j_1< j_2<\dots< j_q\leq i_2}\frac{\p_{j_2}\dots \p_{j_q}}{(1-\p_{j_2})\dots(1-\p_{j_q})}\cdot\prod_{v=j_1+1}^{i_2}(1-\p_v)
\cdot\left(1-\prod_{v=i_1+1}^{j_1-1}(1-\p_v) \right)\\
&=\sum_{j_1=i_1+1}^{i_2}\p_{j_1}\cdot\sum_{j_1< j_2<\dots< j_q\leq i_2}\frac{\p_{j_2}\dots \p_{j_q}}{(1-\p_{j_2})\dots(1-\p_{j_q})}\cdot\prod_{v=j_1+1}^{i_2}(1-\p_v)
\cdot\left(1-\prod_{v=i_1+1}^{j_1-1}(1-\p_v) \right)
\end{aligned}\]
where the second equality holds by noting that when $j_1=i_1+1$, we have $1-\prod_{v=i_1+1}^{j_1-1}(1-\p_v)=0$. Thus, it holds that
\[\begin{aligned}
&\sum_{t=i_1+1}^{i_2}\p_t\cdot\sum_{t+1\leq j_1< j_2<\dots< j_q\leq i_2}\frac{\p_{j_1}\p_{j_2}\dots \p_{j_q}}{(1-\p_{j_1})(1-\p_{j_2})\dots(1-\p_{j_q})}\cdot\prod_{v=t+1}^{i_2}(1-\p_v)\\
&=\sum_{j_1=i_1+1}^{i_2}\p_{j_1}\cdot\sum_{j_1< j_2<\dots< j_q\leq i_2}\frac{\p_{j_2}\dots \p_{j_q}}{(1-\p_{j_2})\dots(1-\p_{j_q})}\cdot\prod_{v=j_1+1}^{i_2}(1-\p_v)
\cdot\left(1-\prod_{v=i_1+1}^{j_1-1}(1-\p_v) \right)
\end{aligned}\]
Note that for the induction purpose, we assume that \eqref{ABrelation} holds for $q-1$. Then, we have that
\[\begin{aligned}
&\sum_{j_1=i_1+1}^{i_2}\p_{j_1}\cdot\sum_{j_1< j_2<\dots< j_q\leq i_2}\frac{\p_{j_2}\dots \p_{j_q}}{(1-\p_{j_2})\dots(1-\p_{j_q})}\cdot\prod_{v=j_1+1}^{i_2}(1-\p_v)\\
&=\sum_{t=i_1+1}^{i_2}\p_t\cdot\sum_{t+1\leq j_1< j_2<\dots< j_{q-1}\leq i_2}\frac{\p_{j_1}\p_{j_2}\dots \p_{j_{q-1}}}{(1-\p_{j_1})(1-\p_{j_2})\dots(1-\p_{j_{q-1}})}\cdot\prod_{v=t+1}^{i_2}(1-\p_v)\\
&=1-\sum_{w=0}^{q-1}\sum_{i_1+1\leq j_1< j_2<\dots< j_w\leq i_2}\frac{\p_{j_1}\p_{j_2}\dots \p_{j_w}}{(1-\p_{j_1})(1-\p_{j_2})\dots(1-\p_{j_w})}\cdot\prod_{v=i_1+1}^{i_2}(1-\p_v)
\end{aligned}\]
where the second equality holds from replacing the index $j_{l+1}$ with $j_l$ for $l=2,\dots,q$ and replace the index $j_1$ with $t$. Also, note that
\[\begin{aligned}
&\sum_{j_1=i_1+1}^{i_2}\p_{j_1}\cdot\sum_{j_1< j_2<\dots< j_q\leq i_2}\frac{\p_{j_2}\dots \p_{j_q}}{(1-\p_{j_2})\dots(1-\p_{j_q})}\cdot\prod_{v=j_1+1}^{i_2}(1-\p_v)
\cdot\prod_{v=i_1+1}^{j_1-1}(1-\p_v)\\
&=\sum_{i_1+1\leq j_1< j_2<\dots< j_q\leq i_2}\frac{\p_{j_1}\p_{j_2}\dots \p_{j_q}}{(1-\p_{j_1})(1-\p_{j_2})\dots(1-\p_{j_q})}\cdot\prod_{v=i_1+1}^{i_2}(1-\p_v)
\end{aligned}\]
Thus, we have that
\[\begin{aligned}
&\sum_{t=i_1+1}^{i_2}\p_t\cdot\sum_{t+1\leq j_1< j_2<\dots< j_q\leq i_2}\frac{\p_{j_1}\p_{j_2}\dots \p_{j_q}}{(1-\p_{j_1})(1-\p_{j_2})\dots(1-\p_{j_q})}\cdot\prod_{v=t+1}^{i_2}(1-\p_v)\\
&=\sum_{j_1=i_1+1}^{i_2}\p_{j_1}\cdot\sum_{j_1< j_2<\dots< j_q\leq i_2}\frac{\p_{j_2}\dots \p_{j_q}}{(1-\p_{j_2})\dots(1-\p_{j_q})}\cdot\prod_{v=j_1+1}^{i_2}(1-\p_v)
\cdot\left(1-\prod_{v=i_1+1}^{j_1-1}(1-\p_v) \right)\\
&=1-\sum_{w=0}^{q}\sum_{i_1+1\leq j_1< j_2<\dots< j_w\leq i_2}\frac{\p_{j_1}\p_{j_2}\dots \p_{j_w}}{(1-\p_{j_1})(1-\p_{j_2})\dots(1-\p_{j_w})}\cdot\prod_{v=i_1+1}^{i_2}(1-\p_v)
\end{aligned}\]
which completes our proof by induction on $q$.
\Halmos
\end{proof}
~\\
Combining Lemma \ref{duallemma1} and Lemma \ref{duallemma2}, we draw the following conclusion.
\begin{lemma}\label{duallemma3}
For each $l=1,2,\dots,k$ and each $t=1,2,\dots,T$, we have that $\beta^*_{l,t}\geq0$ and $\xi^*_t\geq0$.
\end{lemma}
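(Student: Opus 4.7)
The plan is to unpack the construction in \eqref{constructiondual} and verify that every building block appearing there is nonnegative, then combine. First, I would observe that $A_{l,q}(t)$ and $B_{l,q}$ are manifestly nonnegative, since they are sums of products of $p_j$ and $(1-p_j)$ factors, all of which lie in $[0,1]$. Consequently $\delta_{l_1,l_2}$ is nonnegative: for $l_2<k$ it is a sum of products of $B$'s, while the boundary cases $\delta_{l_2,l_2}=0$ and $\delta_{l_1,k}=1$ are nonnegative by definition.

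Next, for $\beta^*_{l_1,t}$, the case $t\le t_{l_1+1}$ gives $\beta^*_{l_1,t}=0$ directly, and for $t_{l_2}+1\le t\le t_{l_2+1}$ with $l_2>l_1$ the formula $\beta^*_{l_1,t}=p_T R\sum_{w=l_1}^{l_2-1}\delta_{w,l_2}\,A_{l_2,w-l_1}(t)$ is a nonnegative combination of nonnegatives, provided $R\ge 0$. The terminal value $\beta^*_{l_1,T}=R$ is also nonnegative once $R\ge 0$ is established.

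The central step is to show $\phi_l\ge 0$, which will imply $\xi^*_t=\phi_l\,p_T R\ge 0$ on the intervals $t_l+1\le t\le t_{l+1}$ (and $\xi^*_T=R$ is handled by the same positivity of $R$). For $l=k$ we have $\phi_k=1$. For $l<k$, writing
\[
\phi_l=\sum_{q=l+1}^k\sum_{w=l+1}^q(\delta_{w-1,q}-\delta_{w,q})\Bigl(1-\sum_{v=0}^{w-l-1}B_{q,v}\Bigr),
\]
the first factor is nonnegative by \Cref{duallemma1}. For the second factor, when $w-l-1\ge 1$ I would invoke \Cref{duallemma2} with $(i_1,i_2)=(t_q,t_{q+1})$ and with the lemma's free index set to $w-l-1$, yielding the identity
\[
1-\sum_{v=0}^{w-l-1}B_{q,v}=\sum_{t=t_q+1}^{t_{q+1}}p_t\,A_{q,w-l-1}(t)\ge 0,
\]
and the boundary case $w=l+1$ reduces to $1-B_{q,0}=1-\prod_{j=t_q+1}^{t_{q+1}}(1-p_j)\ge 0$, which is immediate. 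Hence $\phi_l\ge 0$.

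Finally, $R>0$ follows from the normalization $\sum_t p_t\,\xi^*_t=1$: every $\xi^*_t$ is a nonnegative multiple of $R$, and on the interval $t_k+1\le t\le t_{k+1}$ the coefficient is $\phi_k\,p_T=p_T>0$, so the sum is a strictly positive multiple of $R$ and forces $R>0$. The main obstacle I anticipate is purely bookkeeping---correctly aligning the indices of \Cref{duallemma2} (where the free parameter plays the role of $q$) with the partial-sum length $w-l-1$ inside the definition of $\phi_l$, and verifying that this length lies in the range for which the lemma applies (with the $v=0$ case handled separately). Once this alignment is made, every required inequality falls out of \Cref{duallemma1} and \Cref{duallemma2}.
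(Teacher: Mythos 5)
Your proof is correct and follows the same route as the paper's: nonnegativity of $\beta^*_{l,t}$ is read off the construction, and nonnegativity of $\xi^*_t$ (equivalently $\phi_l\ge 0$) is reduced to $\delta_{w-1,q}-\delta_{w,q}\ge 0$ via Lemma~\ref{duallemma1} and $1-\sum_{v=0}^{w-l-1}B_{q,v}\ge 0$ via the identity from Lemma~\ref{duallemma2} with $i_1=t_q$, $i_2=t_{q+1}$. Your extra care in separating the $w=l+1$ boundary case and explicitly deriving $R>0$ from the normalization is sound but not a departure; the paper simply invokes the same two lemmas more tersely.
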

\begin{proof}{Proof:}
Note that from definition, $\beta^*_{l,t}\geq0$ for each $l$ and $t$. We then show the non-negativity of $\xi^*_t$ for each $t$. Note that Lemma \ref{duallemma1} shows that $\delta_{l_1,l_2}\geq\delta_{l_1+1,l_2}$ for each $l_2=1,2,\dots,k$ and each $l_1=1,2,\dots,l_2-1$. It only remains to show the non-negativity of the term $1-\sum_{w=0}^{q}B_{l,w}$, which can be directly established by Lemma \ref{duallemma2}. Specifically, by replacing $i_1$ with $t_l$ and $i_2$ with $t_{l+1}$ in \eqref{ABrelation}, we have $1-\sum_{w=0}^{q}B_{l,w}=\sum_{t=t_l+1}^{t_{l+1}}\p_t\cdot A_{l,q}(t)\geq0$.
\Halmos
\end{proof}
~\\
From the definition of $\{\beta^*_{l,t}, \xi^*_t\}$, condition \eqref{condition1} holds obviously. We then prove that condition \eqref{condition2} is satisfied.
\begin{lemma}\label{duallemma4}
For each $l=1,2,\dots,k$ and each $t\geq t_{l}+1$, it holds that
\[
\beta^*_{l,t}+\sum_{j=t+1}^T \p_j\cdot(\beta^*_{l,j}-\beta^*_{l+1,j} )=\xi^*_t
\]
where we denote $\beta^*_{k+1,t}=0$ for notation simplicity.
\end{lemma}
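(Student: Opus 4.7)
The plan is to prove the identity by backward induction on $t$, descending from $t=T$ down to $t=t_l+1$. At the base case $t=T$, the sum on the left-hand side is empty and by construction $\beta^*_{l,T}=R=\xi^*_T$, so the identity holds. For the inductive step, subtracting the identity at $t+1$ from the desired identity at $t$ reduces the problem to verifying the local three-term recursion
\[
\beta^*_{l,t} \;=\; (1-p_{t+1})\,\beta^*_{l,t+1} \;+\; p_{t+1}\,\beta^*_{l+1,t+1} \;+\; \bigl(\xi^*_t - \xi^*_{t+1}\bigr),
\]
where we set $\beta^*_{k+1,\cdot}\equiv 0$ as in the statement.

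\textbf{Main tool.} The algebraic engine driving the argument is the Pascal-type recursion
\[
A_{l_2,q}(t) \;=\; (1-p_{t+1})\,A_{l_2,q}(t+1) \;+\; p_{t+1}\,A_{l_2,q-1}(t+1), \qquad q\ge 1,
\]
with the convention $A_{l_2,-1}(t+1):=0$, which follows by splitting the sum defining $A_{l_2,q}(t)$ according to whether the smallest index $j_1$ equals $t+1$ or is at least $t+2$. At interval boundaries we also use Lemma \ref{duallemma2} to bridge the $A_{l_2,q}$ values on one side of the boundary with the $B_{l_2,q}$ constants on the other.

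\textbf{Case analysis for the local recursion.} We split according to whether $t$ lies in the interior or at the right endpoint of some block $[t_{l_0}+1,t_{l_0+1}]$. In the interior case, $\xi^*_t=\xi^*_{t+1}=\phi_{l_0}p_TR$ cancels, so we need to show $\beta^*_{l,t}=(1-p_{t+1})\beta^*_{l,t+1}+p_{t+1}\beta^*_{l+1,t+1}$. If $l_0=l$, all three $\beta^*$ values are zero (for $\beta^*_{l+1,t+1}$ this uses $t+1\le t_{l+1}\le t_{l+2}$). If $l_0>l$, substitute the closed-form $\beta^*_{l,t}/(p_TR)=\sum_{w=l}^{l_0-1}\delta_{w,l_0}A_{l_0,w-l}(t)$, apply the Pascal recursion term-by-term, and re-index the leftover piece to recognize it as $p_{t+1}\sum_{w=l+1}^{l_0-1}\delta_{w,l_0}A_{l_0,w-l-1}(t+1)=p_{t+1}\beta^*_{l+1,t+1}/(p_TR)$.

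\textbf{Boundary case.} The more delicate situation is $t=t_{l_0+1}$, where $t+1$ opens a new block and the $l_2$-index used to describe $\beta^*_{\cdot,t+1}$ jumps from $l_0$ to $l_0+1$. Here $\beta^*_{l,t}$ collapses to $p_TR\,\delta_{l,l_0}$ because $A_{l_0,0}(t_{l_0+1})=1$ and $A_{l_0,q}(t_{l_0+1})=0$ for $q\ge 1$, while $\xi^*_t-\xi^*_{t+1}=(\phi_{l_0}-\phi_{l_0+1})p_TR$, and the two $\beta^*_{\cdot,t+1}$ values are expressed in terms of $\delta_{w,l_0+1}$ and $A_{l_0+1,w-l}(t_{l_0+1}+1)$. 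Using the Pascal recursion once more to expand $A_{l_0+1,q}(t_{l_0+1})=B_{l_0+1,q}$ in terms of $A_{l_0+1,q}(t_{l_0+1}+1)$ and $A_{l_0+1,q-1}(t_{l_0+1}+1)$, the verification reduces to a purely combinatorial identity among the $\delta_{w,l_0},\delta_{w,l_0+1},\phi_{l_0},\phi_{l_0+1}$ and the $B_{l_0+1,q}$. This is exactly the identity that the nested-sum definitions in \eqref{parameter} were designed to satisfy: $\delta_{l_1,l_2}$ is built as a sum over ``transition paths'' $l_2<l_2+1<\cdots<k$ through interval boundaries, and $\phi_{l_0}$ telescopes the induced boundary contributions via the $B_{l_2,q}$ blocks. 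The main obstacle, and the only genuinely non-mechanical step, is carrying out this boundary-case bookkeeping cleanly---a direct unfolding of the definitions followed by re-indexing of the nested sums $\sum_{w_0}\cdots\sum_{w_{k-1-l_2}}$ matches the terms on both sides.
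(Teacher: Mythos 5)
Your backward-induction strategy is sound, and it is a genuinely different organization from the paper's: the paper evaluates the tail sum $\sum_{j>t}\p_j(\beta^*_{l,j}-\beta^*_{l+1,j})$ block-by-block using Lemma~\ref{duallemma2} and reduces everything to a single aggregate identity \eqref{005}, whereas you reduce to a local three-term recursion $\beta^*_{l,t}=(1-\p_{t+1})\beta^*_{l,t+1}+\p_{t+1}\beta^*_{l+1,t+1}+(\xi^*_t-\xi^*_{t+1})$ and verify it pointwise. The reduction to the local recursion is correct (difference the claimed identity at $t$ and at $t+1$), the Pascal recursion $A_{l_2,q}(t)=(1-\p_{t+1})A_{l_2,q}(t+1)+\p_{t+1}A_{l_2,q-1}(t+1)$ is correct (split on whether $j_1=t+1$), and your interior case goes through exactly as stated. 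The base case $t=T$ and the transition $t=T-1\to T$ (where $\beta^*_{\cdot,T}=R$) also check out, once one is careful that $\beta^*_{k+1,\cdot}\equiv 0$.

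The gap is in the boundary case, which is where essentially all of the lemma's technical content lives. After applying Pascal to $A_{l_0+1,q}(t_{l_0+1})=B_{l_0+1,q}$, the local recursion at $t=t_{l_0+1}$ reduces to the combinatorial identity
\[
\delta_{l,l_0}+\phi_{l_0+1}-\phi_{l_0}=\sum_{w=l}^{l_0}\delta_{w,l_0+1}\,B_{l_0+1,w-l},\qquad l\le l_0\le k-1,
\]
and you assert this follows by ``direct unfolding of the definitions followed by re-indexing.'' That understates what is required. The natural way to prove it is to show the left-minus-right difference is independent of $l$, which uses the recursion $\delta_{l,l_0}-\delta_{l+1,l_0}=\sum_{q=0}^{l_0-l}B_{l_0+1,q}\bigl(\delta_{q+l,l_0+1}-\delta_{q+l+1,l_0+1}\bigr)$ (the paper's \eqref{008}), and then to evaluate the base case $l=l_0$, which requires the telescope $\phi_{l_0}-\phi_{l_0+1}=-B_{l_0+1,0}\,\delta_{l_0,l_0+1}$ proved inside the paper's Lemma~\ref{duallemma6}. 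Neither is a mere re-indexing: the paper devotes Lemma~\ref{duallemma5} (a double induction) and a separate multi-line manipulation to them, and your proposal does not establish either. You should also address what happens when a block is empty, i.e.\ $t_{l_0+1}=t_{l_0+2}$, so that $t+1$ lands in a later block than $l_0+1$; the single-boundary-crossing picture implicitly assumed in your argument does not directly apply there. So the route is valid and arguably cleaner in spirit, but the displayed identity (equivalently, the two sub-identities above) must actually be proved, not appealed to.
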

\begin{proof}{Proof:}
When $l=k$, from definition, we have $\beta^*_{l,j}=0$ for each $j\leq t_{l+1}=T-1$ and $\beta^*_{l,T}=R$, thus the lemma holds directly. When $t=T$, it is also direct to show from definition that the lemma holds. We then focus on the case where $l\leq k-1$ and $t\leq T-1$.

For a fixed $l\leq k-1$ and a fixed $t_{l}+1\leq t\leq T-1$, we denote an index $l_1\geq l$ such that $t_{l_1}+1\leq t\leq t_{l_1+1}$. We then consider the following cases separately based on the value of $l_1$.

(i). When $l_1\leq k-1$, we have that
\begin{equation}\label{001}
\beta^*_{l,t}=\p_TR\cdot \sum_{w=l}^{l_1-1}\delta_{w,l_1}\cdot A_{l_1,w-l}(t)
\end{equation}
also, for any $t+1\leq j\leq t_{l_1+1}$, we have that
\[
\beta^*_{l,j}-\beta^*_{l+1,j}=\p_TR\cdot \sum_{w=l}^{l_1-1}(\delta_{w,l_1}-\delta_{w+1,l_1})\cdot A_{l_1,w-l}(j)
\]
which implies that
\[
\sum_{j=t+1}^{t_{l_1+1}}\p_j\cdot(\beta^*_{l,j}-\beta^*_{l+1,j})=\p_TR\cdot \sum_{w=l}^{l_1-1}(\delta_{w,l_1}-\delta_{w+1,l_1})\cdot\sum_{j=t+1}^{t_{l_1+1}}\p_j\cdot A_{l_1,w-l}(j)
\]
Note that from \eqref{ABrelation}, it holds that $\sum_{j=t+1}^{t_{l_1+1}}\p_j\cdot A_{l_1,w-l}(j)=1-\sum_{q=0}^{w-l}A_{l_1,q}(t)$. Thus, we have that
\begin{equation}\label{002}
\begin{aligned}
\sum_{j=t+1}^{t_{l_1+1}}\p_j\cdot(\beta^*_{l,j}-\beta^*_{l+1,j})&=\p_TR\cdot\sum_{w=l}^{l_1-1}(\delta_{w,l_1}-\delta_{w+1,l_1})\cdot\left(1-\sum_{q=0}^{w-l}A_{l_1,q}(t)\right)\\
&=\p_TR\cdot \delta_{l,l_1}-\p_TR\cdot \sum_{w=l}^{l_1-1}\delta_{w,l_1}\cdot A_{l_1,w-l}(t)
\end{aligned}
\end{equation}
where the last equality holds from $\delta_{l_1,l_1}=0$. Similarly, for any $l_2\geq l_1+1$ and any $t_{l_2}+1\leq j\leq t_{l_2+1}$, we have that
\[
\beta^*_{l,j}-\beta^*_{l+1,j}=\p_TR\cdot \sum_{w=l}^{l_2-1}(\delta_{w,l_2}-\delta_{w+1,l_2})\cdot A_{l_2,w-l}(j)
\]
which implies that
\[
\sum_{j=t_{l_2}+1}^{t_{l_2+1}}\p_j\cdot(\beta^*_{l,j}-\beta^*_{l+1,j})=\p_TR\cdot \sum_{w=l}^{l_2-1}(\delta_{w,l_2}-\delta_{w+1,l_2})\cdot\sum_{j=t_{l_2}+1}^{t_{l_2+1}}\p_j\cdot A_{l_2,w-l}(j)
\]
Note that from Lemma \ref{duallemma2}, we have that $\sum_{j=t_{l_2}+1}^{t_{l_2+1}}\p_j\cdot A_{l_2,w-l}(j)=1-\sum_{q=0}^{w-l} B_{l_2,q}$. Thus, we have that
\begin{equation}\label{003}
\begin{aligned}
\sum_{j=t_{l_1+1}+1}^{t_{k+1}}\p_j\cdot(\beta^*_{l,j}-\beta^*_{l+1,j})&=\sum_{l_2=l_1+1}^{k}\sum_{j=t_{l_2}+1}^{t_{l_2+1}}\p_j\cdot(\beta^*_{l,j}-\beta^*_{l+1,j})\\
&=\sum_{l_2=l_1+1}^{k}\p_TR\cdot \sum_{w=l}^{l_2-1}(\delta_{w,l_2}-\delta_{w+1,l_2})\cdot(1-\sum_{q=0}^{w-l} B_{l_2,q})\\
&=\p_TR\cdot\sum_{l_2=l_1+1}^{k}\sum_{w=l+1}^{l_2}(\delta_{w-1,l_2}-\delta_{w,l_2})\cdot(1-\sum_{q=0}^{w-l-1} B_{l_2,q})\\
\end{aligned}
\end{equation}
Combining \eqref{001}, \eqref{002} and \eqref{003}, we have that
\begin{equation}\label{006}
\beta^*_{l,t}+\sum_{j=t+1}^{T}\p_j\cdot(\beta^*_{l,j}-\beta^*_{l+1,j})=\p_TR\cdot \delta_{l,l_1}+\p_TR\cdot\sum_{l_2=l_1+1}^{k}\sum_{w=l+1}^{l_2}(\delta_{w-1,l_2}-\delta_{w,l_2})\cdot(1-\sum_{q=0}^{w-l-1} B_{l_2,q})
\end{equation}
Note that
\[
\xi^*_t=\p_TR\cdot\sum_{l_2=l_1+1}^{k}\sum_{w=l_1+1}^{l_2}(\delta_{w-1,l_2}-\delta_{w,l_2})\cdot(1-\sum_{q=0}^{w-l_1-1} B_{l_2,q})
\]
in order to show $\beta^*_{l,t}+\sum_{j=t+1}^{T}\p_j\cdot(\beta^*_{l,j}-\beta^*_{l+1,j})=\xi^*_t$, it is enough to prove that
\begin{equation}\label{004}
\delta_{l,l_1}+\sum_{l_2=l_1+1}^{k}\sum_{w=l+1}^{l_2}(\delta_{w-1,l_2}-\delta_{w,l_2})\cdot(1-\sum_{q=0}^{w-l-1} B_{l_2,q})=\sum_{l_2=l_1+1}^{k}\sum_{w=l_1+1}^{l_2}(\delta_{w-1,l_2}-\delta_{w,l_2})\cdot(1-\sum_{q=0}^{w-l_1-1} B_{l_2,q})
\end{equation}
Further note that
\[\begin{aligned}
&\sum_{l_2=l_1+1}^{k}\sum_{w=l+1}^{l_2}(\delta_{w-1,l_2}-\delta_{w,l_2})\cdot(1-\sum_{q=0}^{w-l-1} B_{l_2,q})\\
=&\sum_{l_2=l_1+1}^{k}\sum_{w=l+1}^{l_2}(\delta_{w-1,l_2}-\delta_{w,l_2})
-\sum_{l_2=l_1+1}^{k}\sum_{w=l+1}^{l_2}\sum_{q=0}^{w-l-1}B_{l_2,q}\cdot(\delta_{w-1,l_2}-\delta_{w,l_2})\\
=&\sum_{l_2=l_1+1}^{k}\delta_{l,l_2}-\sum_{l_2=l_1+1}^{k}\sum_{q=0}^{l_2-l-1}B_{l_2,q}\cdot\delta_{q+l,l_2}
\end{aligned}\]
and similarly, note that
\[
\sum_{l_2=l_1+1}^{k}\sum_{w=l_1+1}^{l_2}(\delta_{w-1,l_2}-\delta_{w,l_2})\cdot(1-\sum_{q=0}^{w-l_1-1} B_{l_2,q})=\sum_{l_2=l_1+1}^{k}\delta_{l_1,l_2}-\sum_{l_2=l_1+1}^{k}\sum_{q=0}^{l_2-l_1-1}B_{l_2,q}\cdot\delta_{q+l_1,l_2}
\]
in order to prove \eqref{004}, it is enough to show that
\begin{equation}\label{005}
\sum_{l_2=l_1}^{k}\delta_{l,l_2}-\sum_{l_2=l_1+1}^{k}\sum_{q=0}^{l_2-l-1}B_{l_2,q}\cdot\delta_{q+l,l_2}=
\sum_{l_2=l_1+1}^{k}\delta_{l_1,l_2}-\sum_{l_2=l_1+1}^{k}\sum_{q=0}^{l_2-l_1-1}B_{l_2,q}\cdot\delta_{q+l_1,l_2}
\end{equation}
When $l_1=l$, it is direct to check that \eqref{005} holds. The proof of \eqref{005} when $l_1\geq l+1$ is relegated to Lemma \ref{duallemma5}. Thus, we prove that when $l_1\leq k-1$, it holds that $\beta^*_{l,t}+\sum_{j=t+1}^{T}\p_j\cdot(\beta^*_{l,j}-\beta^*_{l+1,j})=\xi^*_t$.

(ii). When $l_1=k$, we have that
\[
\beta^*_{l,t}=\p_TR\cdot\sum_{w=l}^{k-1}A_{k,w-l}(t)
\]
and for each $t+1\leq j\leq T-1$, it holds that
\[
\beta^*_{l,j}-\beta^*_{l+1,j}=\p_TR\cdot\left(\sum_{w=l}^{k-1}A_{k,w-l}(j)-\sum_{w=l+1}^{k-1}A_{k,w-l-1}(j)\right)=\p_TR\cdot A_{k,k-1-l}(j)
\]
Note that $\beta^*_{l,T}=\beta^*_{l+1,T}=R$, we have
\[
\beta^*_{l,t}+\sum_{j=t+1}^{T-1}\p_j\cdot(\beta^*_{l,j}-\beta^*_{l+1,j})=\p_TR\cdot\left(\sum_{w=l}^{k-1}A_{k,w-l}(t)+\sum_{j=t+1}^{T-1}\p_j\cdot A_{k,k-1-l}(j) \right)
\]
Note that from \eqref{ABrelation}, it holds that $\sum_{j=t+1}^{T-1}\p_j\cdot A_{k,k-1-l}(j)=1-\sum_{q=0}^{k-1-l}A_{k,q}(t)$. Thus, we have that
\[
\beta^*_{l,t}+\sum_{j=t+1}^{T-1}\p_j\cdot(\beta^*_{l,j}-\beta^*_{l+1,j})=\p_TR=\xi^*_t
\]
which completes our proof.
\Halmos
\end{proof}
\begin{lemma}\label{duallemma5}
For each $l=1,2,\dots,k-1$ and each $l_1=l, l+1, \dots,k-1$, it holds that
\begin{equation}\label{007}
\sum_{l_2=l_1}^{k}\delta_{l,l_2}-\sum_{l_2=l_1+1}^{k}\sum_{q=0}^{l_2-l-1}B_{l_2,q}\cdot\delta_{q+l,l_2}=
\sum_{l_2=l_1+1}^{k}\delta_{l_1,l_2}-\sum_{l_2=l_1+1}^{k}\sum_{q=0}^{l_2-l_1-1}B_{l_2,q}\cdot\delta_{q+l_1,l_2}
\end{equation}
\end{lemma}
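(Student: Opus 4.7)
The plan is to prove identity \eqref{007} by induction on $l_1$, starting from the trivial base case $l_1 = l$ (where both sides collapse to the same expression since $\delta_{l,l}=0$) and stepping up to $l_1 = k-1$. The cornerstone will be a ``local recurrence'' for $\delta$:
\[
\delta_{l,l_1} \;=\; \sum_{q=0}^{l_1 - l} B_{l_1+1,\, q}\,\delta_{q+l,\, l_1+1} \;-\; B_{l_1+1,\, 0}\,\delta_{l_1,\, l_1+1}, \qquad (\star)
\]
which I would prove directly from the nested-sum definition of $\delta$.

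To establish $(\star)$, I would expand $\delta_{l,l_1}$ by peeling off the outermost two indices: for each pair $(w_0, w_1)$ with $l+1 \le w_0 \le l_1$ and $w_0 \le w_1 \le l_1+1$, extract the factor $B_{l_1+1,\, w_1-w_0}$, and recognize that the remaining $(k-1-l_1)$-fold nested sum depends only on $w_1$; call it $A(w_1)$. One then checks from the definitions that $\delta_{w_0-1,\, l_1+1} = \sum_{w_1 \ge w_0} A(w_1)$ and $\delta_{l_1,\, l_1+1} = A(l_1+1)$. Reindexing $q = w_1 - w_0$ and swapping the order of summation over $(w_0, w_1)$ versus $(q, w_1)$ yields $(\star)$, with the ``excess'' contribution $B_{l_1+1,\,0}\cdot\delta_{l_1,\, l_1+1}$ coming from the boundary $w_1 = w_0 = l_1+1$ that appears on the right-hand side of $(\star)$ but not on the left.

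With $(\star)$ in hand, the induction step proceeds by subtracting the identity at $l_1$ from the putative identity at $l_1+1$. The LHS difference is $-\delta_{l,l_1} + \sum_{q=0}^{l_1-l} B_{l_1+1,\,q}\,\delta_{q+l,\, l_1+1}$, which by $(\star)$ equals $B_{l_1+1,\,0}\,\delta_{l_1,\, l_1+1}$. The RHS difference produces several pieces, but applying $(\star)$ with $l$ replaced by $l_1$ (and then by $l_1+1$) and with $l_1$ replaced by $l_2-1$ collapses each summand of the inner double sum to $\delta_{l_1,\,l_2-1} - \delta_{l_1+1,\,l_2-1}$. The resulting expression telescopes against $\sum_{l_2=l_1+2}^k (\delta_{l_1+1,l_2} - \delta_{l_1,l_2})$, and using the boundary values $\delta_{\cdot,k} = 1$ and $\delta_{v,v}=0$ the net contribution reduces to exactly $B_{l_1+1,\,0}\,\delta_{l_1,\,l_1+1}$, matching the LHS difference.

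The main obstacle is establishing the local recurrence $(\star)$: the bookkeeping required to match ``peeled'' nested sums against the appropriate $\delta$-values is intricate, and I anticipate subtle boundary issues both at $l_1 = k-1$ (where the outer layer of nested sums degenerates and $\delta_{\cdot,k}=1$ is defined atypically) and at $l_2 = k$ in the telescoping step. These boundary cases would need to be verified separately to ensure the argument is airtight.
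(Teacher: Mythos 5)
Your proposal is correct, and it takes a genuinely different route from the paper's. The paper proves \eqref{007} by a \emph{double} induction: an outer induction on $l$ decreasing from $k-1$ to $1$, and for each fixed $l$ an inner induction on $l_1$ decreasing from $k-1$ to $l+1$, where the inner base case $l_1=k-1$ is verified by direct computation. The paper's reduction step invokes \emph{two} previously-established instances of \eqref{007} (at $(l,l_1+1)$ and at $(l+1,l_1)$), subtracts them, and reduces everything to the difference identity $\delta_{l,l_1}-\delta_{l+1,l_1}=\sum_{q=0}^{l_1-l}B_{l_1+1,q}(\delta_{q+l,l_1+1}-\delta_{q+l+1,l_1+1})$, which is verified by expanding the nested-sum definitions (the difference in $\delta$'s collapses the outer index $w_0$ to a single value). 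You instead run a \emph{single} induction on $l_1$ increasing from the trivial base case $l_1=l$, and your engine is the absolute identity $(\star)$. These two identities are in fact equivalent --- summing the paper's difference identity telescopically over $l'=l,\dots,l_1-1$ (and using $\delta_{l_1,l_1}=0$) yields $(\star)$, and conversely differencing $(\star)$ at $l$ and $l+1$ recovers the paper's identity. Your proof of $(\star)$ by switching the order of summation over $(w_0,w_1)$ and recognizing $\delta_{w_0-1,l_1+1}=\sum_{w_1\geq w_0}A(w_1)$, $\delta_{l_1,l_1+1}=A(l_1+1)$ is correct; the "excess" $B_{l_1+1,0}\delta_{l_1,l_1+1}$ appears precisely because the boundary $w_1=l_1+1$, $w_0=l_1+1$ is allowed in the reindexed sum but not in the original. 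In the induction step, your application of $(\star)$ with $(l,l_1)\to(l_1,l_2-1)$ and $(l_1+1,l_2-1)$ does collapse each summand of the inner double sum to $\delta_{l_1,l_2-1}-\delta_{l_1+1,l_2-1}$, the telescoping against $\sum_{l_2=l_1+2}^k(\delta_{l_1+1,l_2}-\delta_{l_1,l_2})$ leaves only the endpoints, and the boundary conventions $\delta_{\cdot,k}=1$ (making the $l_2=k$ endpoint vanish since $1-1=0$) and $\delta_{v,v}=0$ give the net $B_{l_1+1,0}\delta_{l_1,l_1+1}$, matching the LHS difference. The boundary cases you flag (at $l_1=k-1$ and $l_2=k$) do require care, but both check out: $(\star)$ at $l_1=k-1$ reduces to $\delta_{l,k-1}=\sum_{q=1}^{k-1-l}B_{k,q}$, which matches the definition directly. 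What your approach buys is a cleaner logical structure --- a trivial base case and a single inductive hypothesis --- at the modest cost of proving the slightly bulkier absolute identity $(\star)$ rather than the paper's leaner difference identity.
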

\begin{proof}{Proof:}
We now prove \eqref{007} by induction on $l$ from $l=k-1$ to $l=1$. When $l=k-1$, we must have $l_1=k-1=l$, then \eqref{007} holds obviously. Suppose that there exists a $1\leq l'\leq k-2$ such that for any $l$ satisfying $l'+1\leq l\leq k-1$, \eqref{007} holds for each $l_1$ such that $l\leq l_1\leq k-1$, then we consider the case when $l=l'$. For this case, we again use induction on $l_1$ from $l_1=k-1$ to $l_1=l+1=l'+1$. When $l_1=k-1$, we have that
\[
\sum_{l_2=l_1}^{k}\delta_{l,l_2}-\sum_{l_2=l_1+1}^{k}\sum_{q=0}^{l_2-l-1}B_{l_2,q}\cdot\delta_{q+l,l_2}=\delta_{l,k-1}+\delta_{l,k}-\sum_{q=0}^{k-l-1}B_{k,q}\cdot\delta_{q+l,k}
\]
and
\[
\sum_{l_2=l_1+1}^{k}\delta_{l_1,l_2}-\sum_{l_2=l_1+1}^{k}\sum_{q=0}^{l_2-l_1-1}B_{l_2,q}\cdot\delta_{q+l_1,l_2}=\delta_{k-1,k}-B_{k,0}\cdot\delta_{k-1,k}
\]
Further note that from definition, $\delta_{v,k}=1$ for each $v\leq k-1$ and $\delta_{l,k-1}=\sum_{w_0=l+1}^{k-1}B_{k,k-w_0}=\sum_{q=1}^{k-l-1}B_{k,q}$, it is obvious that \eqref{007} holds when $l_1=k-1$. Now suppose that \eqref{007} holds for $l_1+1$ (we assume $l_1\geq l+1$ since when $l_1=l$, it is direct from definition that \eqref{007} holds), we consider the case for $l_1$. Note that
\[
\text{LHS~of~}\eqref{007}=\delta_{l,l_1}-\sum_{q=0}^{l_1-l}B_{l_1+1,q}\cdot\delta_{q+l,l_1+1}+\sum_{l_2=l_1+1}^{k}\delta_{l,l_2}-\sum_{l_2=l_1+2}^{k}\sum_{q=0}^{l_2-l-1}B_{l_2,q}\cdot\delta_{q+l,l_2}
\]
and
\[
\text{RHS~of~}\eqref{007}=\delta_{l_1,l_1+1}-\sum_{l_2=l_1+1}^{k}B_{l_2,l_2-l_1-1}\cdot\delta_{l_2-1,l_2}+\sum_{l_2=l_1+2}^{k}\delta_{l_1,l_2}-\sum_{l_2=l_1+2}^{k}
\sum_{q=0}^{l_2-l_1-2}B_{l_2,q}\cdot\delta_{q+l_1,l_2}
\]
Since we suppose for induction that \eqref{007} holds for $l_1+1$, we have that
\[
\eqref{007}\text{~holds~for~} (l, l_1)\Leftrightarrow \delta_{l,l_1}-\sum_{q=0}^{l_1-l}B_{l_1+1,q}\cdot\delta_{q+l,l_1+1}=\delta_{l_1,l_1+1}-\sum_{l_2=l_1+1}^{k}B_{l_2,l_2-l_1-1}\cdot\delta_{l_2-1,l_2}
\]
Further note that we have supposed for induction that \eqref{007} holds for $(l+1, l_1)$, which implies
\[
\delta_{l+1,l_1}-\sum_{q=0}^{l_1-l-1}B_{l_1+1,q}\cdot\delta_{q+l+1,l_1+1}=\delta_{l_1,l_1+1}-\sum_{l_2=l_1+1}^{k}B_{l_2,l_2-l_1-1}\cdot\delta_{l_2-1,l_2}
\]
Thus, it holds that
\[
\eqref{007}\text{~holds~for~} (l, l_1)\Leftrightarrow \delta_{l,l_1}-\delta_{l+1,l_1}=\sum_{q=0}^{l_1-l}B_{l_1+1,q}\cdot(\delta_{q+l,l_1+1}-\delta_{q+l+1,l_1+1})
\]
Finally, from definition, we have
\[
\delta_{l,l_1}-\delta_{l+1,l_1}=\sum_{w_1=l+1}^{l_1+1}\sum_{w_2=w_1}^{l_1+2}\dots\sum_{w_{k-1-l_1}=w_{k-2-l_1}}^{k-1}B_{l_1+1,w_1-l-1}\cdot B_{l_1+2,w_2-w_1}\dots B_{k-1,w_{k-1-l_1}-w_{k-2-l_1}}\cdot B_{k,k-w_{k-1-l_1}}
\]
and
\[
\delta_{q+l,l_1+1}-\delta_{q+l+1,l_1+1}=\sum_{w_2=q+l+1}^{l_1+2}\dots\sum_{w_{k-1-l_1}=w_{k-2-l_1}}^{k-1}B_{l_1+2,w_2-q-l-1}\dots B_{k-1,w_{k-1-l_1}-w_{k-2-l_1}}\cdot B_{k,k-w_{k-1-l_1}}
\]
which implies that
\begin{equation}\label{008}
\delta_{l,l_1}-\delta_{l+1,l_1}=\sum_{q=0}^{l_1-l}B_{l_1+1,q}\cdot(\delta_{q+l,l_1+1}-\delta_{q+l+1,l_1+1})
\end{equation}
Thus, from induction, we prove that \eqref{007} holds for each $l_1\geq l+1$. Note that \eqref{007} holds obviously for $l_1=l$, \eqref{007} holds for each $l_1\geq l$. From the induction on $l$, we know that \eqref{007} holds for each $1\leq l\leq k-1$ and each $l\leq l_1\leq k-1$, which completes our proof.
\Halmos
\end{proof}
~\\
Finally, we only need to prove feasibility of $\{\beta^*_{l,t}, \xi^*_t\}$ in the following lemma.
\begin{lemma}\label{duallemma6}
For each $l=1,2,\dots,k$ and each $t=1,2,\dots,t_l$, it holds that
\[
\beta^*_{l,t}+\sum_{j=t+1}^T \p_j\cdot(\beta^*_{l,j}-\beta^*_{l+1,j} )\geq \xi^*_t
\]
where we denote $\beta^*_{k+1,t}=0$ for notation simplicity.
\end{lemma}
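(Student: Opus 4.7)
The plan is to reduce the inequality to a monotonicity property of the parameters $\phi_l$, then tackle that by combinatorial manipulation. First, I would argue that for any $t \leq t_l$ the LHS collapses to the constant $\phi_l\cdot\p_T R$ independent of $t$: the term $\beta^*_{l,t}$ vanishes since $t \leq t_l \leq t_{l+1}$ triggers the ``zero case'' in the construction~\eqref{constructiondual}, and for every $j\in(t,t_l]$ both $\beta^*_{l,j}=0$ and $\beta^*_{l+1,j}=0$ by the same reasoning (using $j\leq t_{l+1}$ and $j\leq t_{l+2}$). Hence $\beta^*_{l,t}+\sum_{j=t+1}^T \p_j(\beta^*_{l,j}-\beta^*_{l+1,j}) = \sum_{j=t_l+1}^T \p_j(\beta^*_{l,j}-\beta^*_{l+1,j})$. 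Applying \Cref{duallemma4} at $t=t_l+1$ and observing that $\beta^*_{l,t_l+1}=\beta^*_{l+1,t_l+1}=0$ (again by $t_l+1\leq t_{l+1}\leq t_{l+2}$), I would conclude that this tail sum equals $\xi^*_{t_l+1}=\phi_l\cdot\p_T R$, so LHS is exactly $\phi_l\cdot\p_T R$ throughout $t\leq t_l$.

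Second, I would identify the RHS. For each $t\in\{1,\ldots,t_l\}$ there is a unique $l'\in\{1,\ldots,l-1\}$ with $t\in[t_{l'}+1,t_{l'+1}]$, and~\eqref{constructiondual} gives $\xi^*_t=\phi_{l'}\cdot\p_T R$. The lemma therefore reduces to proving $\phi_l\geq \phi_{l'}$ for every $l'<l$, which by transitivity follows from the monotonicity claim $\phi_l\geq\phi_{l-1}$ for all $l\geq 2$. The case $l=1$ is vacuous since $t_1=0$.

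The main obstacle is establishing this monotonicity $\phi_l\geq\phi_{l-1}$. The plan is to apply summation by parts in $w$, together with $\delta_{q,q}=0$, to rewrite
\[
\phi_l=\sum_{q=l+1}^{k}\Big[\delta_{l,q}-\sum_{v=0}^{q-l-1}\delta_{v+l,q}B_{q,v}\Big],\qquad \phi_{l-1}=\sum_{q=l}^{k}\Big[\delta_{l-1,q}-\sum_{v=0}^{q-l}\delta_{v+l-1,q}B_{q,v}\Big].
\]
Subtracting these two expressions and re-indexing produces cross terms of the form $\delta_{l-1,q}-\delta_{l,q}$, which are non-negative by \Cref{duallemma1} but unfortunately appear with unfavourable signs. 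To neutralise them, I would invoke the recursive identity $\delta_{l-1,q}-\delta_{l,q}=\sum_{u=0}^{q-l+1}B_{q+1,u}(\delta_{u+l-1,q+1}-\delta_{u+l,q+1})$, which is exactly equation~\eqref{008} established inside the proof of \Cref{duallemma5}, to propagate each unfavourable contribution from column $q$ into column $q+1$. Applying this identity iteratively from $q=l$ up to $q=k-1$ should telescope $\phi_l-\phi_{l-1}$ into a weighted sum of terms of the form $B_{q,v}\cdot(\delta_{\cdot,q}-\delta_{\cdot+1,q})$, all of which are non-negative; the boundary at $q=k$ contributes a manifestly non-negative term because $\delta_{l,k}=1$ for all $l\leq k-1$ makes the corresponding column differences collapse. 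The delicate bookkeeping in this telescoping — in particular verifying that no residual negative terms survive after the substitutions — will be the hardest part of the argument, but the nested combinatorial structure of the $\delta_{l_1,l_2}$ mirrors that of the right-hand side of~\eqref{008}, which strongly suggests the cancellation goes through cleanly.
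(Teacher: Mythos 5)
Your proposal is correct and follows essentially the same route as the paper's proof. Like the paper, you reduce the lemma to the monotonicity of the sequence $\phi_l$ (the paper phrases it as $\phi_l \leq \phi_{l+1}$, you as $\phi_l \geq \phi_{l-1}$, which are the same thing), and you correctly identify the recursive identity~\eqref{008} as the key tool for neutralising the unfavourable terms $\delta_{l-1,q}-\delta_{l,q}$ via telescoping; indeed the paper's computation collapses $\phi_l - \phi_{l+1}$ to the single non-positive term $-B_{l+1,0}\cdot\delta_{l,l+1}$ after one application of~\eqref{008} per column $q$, confirming your expectation that the cancellation goes through. Your first step is in fact a slight sharpening of the paper's: you observe that for $t\le t_l$ the left-hand side is \emph{constant} and equal to $\xi^*_{t_l+1}=\phi_l\cdot\p_T R$ (all the dropped terms $\beta^*_{l,j}-\beta^*_{l+1,j}$ with $j\le t_l+1$ actually vanish, rather than merely being non-negative as the paper uses), which is a cleaner observation though it yields the same reduction.
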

\begin{proof}{Proof:}
Note that from Lemma \ref{duallemma1}, we have $\delta_{w,l_2}\geq\delta_{w+1,l_2}$, which implies that $\beta^*_{l,j}\geq\beta^*_{l+1,j}$ for each $l$ and $j$. Thus, we have that for each $t=1,2,\dots,t_l$, it holds that
\[
\beta^*_{l,t}+\sum_{j=t+1}^T \p_j\cdot(\beta^*_{l,j}-\beta^*_{l+1,j} )\geq\beta^*_{l,t_l+1}+\sum_{j=t_l+2}^T \p_j\cdot(\beta^*_{l,j}-\beta^*_{l+1,j} )
\]
Further note that Lemma \ref{duallemma4} implies that
\[
\beta^*_{l,t_l+1}+\sum_{j=t_l+2}^n p_j\cdot(\beta^*_{l,j}-\beta^*_{l+1,j} )\geq \xi^*_{t_l+1}
\]
Thus, it is enough to show that $\xi^*_i\leq \xi^*_{t_l+1}$ for each $t=1,2,\dots,t_l$. From the definition of $\xi^*_i$, it is enough to show that $\phi_l\leq\phi_{l+1}$. When $l=k-1$, we have $\phi_{l+1}=\phi_k=1$ and $\phi_l=\phi_{k-1}=1-B_{k,0}$, which implies that $\phi_{k-1}\leq\phi_k$. When $l\leq k-2$, from definition, we have
\[
\phi_l-\phi_{l+1}=\sum_{q=l+1}^{k}(\delta_{l,q}-\delta_{l+1,q})\cdot(1-B_{q,0})-\sum_{q=l+2}^{k}\sum_{w=l+2}^{q}(\delta_{w-1,q}-\delta_{w,q})\cdot B_{q,w-l-1}
\]
Note that in the proof of Lemma \ref{duallemma5}, we proved \eqref{008}, then when $k-1\geq q\geq l+1$, we have
\[
\delta_{l,q}-\delta_{l+1,q}=\sum_{w=0}^{q-l}B_{q+1,w}\cdot(\delta_{w+l,q+1}-\delta_{w+l+1,q+1})=\sum_{w=l+1}^{q+1}B_{q+1,w-l-1}\cdot(\delta_{w-1,q+1}-\delta_{w,q+1})
\]
Thus, it holds that
\[\begin{aligned}
\phi_l-\phi_{l+1}&=-\sum_{q=l+1}^{k-1}(\delta_{l,q}-\delta_{l+1,q})\cdot B_{q,0}+\sum_{q=l+1}^{k-1}\sum_{w=l+1}^{q+1}B_{q+1,w-l-1}\cdot(\delta_{w-1,q+1}-\delta_{w,q+1})\\
&~~~~-\sum_{q=l+2}^{k}\sum_{w=l+2}^{q}(\delta_{w-1,q}-\delta_{w,q})\cdot B_{q,w-l-1}\\
&=-\sum_{q=l+1}^{k-1}(\delta_{l,q}-\delta_{l+1,q})\cdot B_{q,0}+\sum_{q=l+1}^{k-1}B_{q+1,0}\cdot(\delta_{l,q+1}-\delta_{l+1,q+1})\\
&=-B_{l+1,0}\cdot\delta_{l,l+1}\leq 0
\end{aligned}\]
which completes our proof.
\Halmos
\end{proof}
~\\
Together, Lemma \ref{duallemma3}, Lemma \ref{duallemma4}, and Lemma \ref{duallemma6} establish the feasibility of $\{\beta^*_{l,t}, \xi^*_t\}$. Then, from the definition of $\{\beta^*_{l,t}, \xi^*_t\}$, obviously condition \eqref{condition1} is satisfied and from Lemma \ref{duallemma4}, condition \eqref{condition2} is satisfied. Thus, we finish the proof of Theorem \ref{Prophetoptimaltheorem}.

\subsection{Proof of Lemma \ref{splititemlemma}}\label{ProofLemma8}
\begin{proof}{Proof:}
Since we have $\LCkunit(\bm{\p})=\DCkunit(\bm{\p})$, it is enough to consider the dual LP $\DCkunit(\bm{p})$ in \eqref{dual} and prove that $\DCkunit(\bm{\p})\geq \DCkunit(\tilde{\bm{p}})$. Suppose the optimal solution of $\DCkunit(\bm{\p})$ is denoted as $\{\beta^*_{l,t}, \xi^*_t\}$, as constructed in \eqref{constructiondual}, we then construct a feasible solution $\{\tilde{\beta}_{l,t}, \tilde{\xi}_t\}$ to $\DCkunit(\tilde{\bm{p}})$ as follows:
\[\begin{aligned}
&\tilde{\xi}_t=\xi^*_t~~~\forall 1\leq t<q,~~~~\tilde{\xi}_q=\tilde{\xi}_{q+1}=\xi^*_q,~~~~\tilde{\xi}_{t+1}=\xi^*_t~~~\forall  q+1\leq t\leq T\\
&\tilde{\beta}_{l,t}=\beta^*_{l,t}~~~\forall l=1,\dots,k, \forall 1\leq t<q\\
&\tilde{\beta}_{l,q}=\tilde{\beta}_{l,q+1}=\beta^*_{l,q}~~~\forall l=1,\dots,k\\
&\tilde{\beta}_{l,t+1}=\beta^*_{l,t}~~~\forall l=1,\dots,k, \forall q+1\leq t\leq T
\end{aligned}\]
Note that we have
\[
\DCkunit(\bm{\p})=\sum_{t=1}^{T}\p_t\cdot\beta^*_{1,t}=\sum_{t=1}^{T+1}\tilde{p}_t\cdot\tilde{\beta}_{1,t}
\]
it is enough to prove that $\{\tilde{\beta}_{l,t}, \tilde{\xi}_t\}$ is feasible to $\text{Primal}(\tilde{\bm{p}},k)$. Obviously, we have $\{\tilde{\beta}_{l,t}, \tilde{\xi}_t\}$ are non-negative and $\sum_{t=1}^{T+1}\tilde{p}_t\cdot\tilde{\xi}_t=\sum_{t=1}^{T}\p_t\cdot\xi^*_t=1$, then we only need to check whether the following constraint is satisfied:
\begin{equation}\label{splitcheck}
\tilde{\beta}_{l,t}+\sum_{\tau>t}\tilde{p}_\tau\cdot (\tilde{\beta}_{l,\tau}-\tilde{\beta}_{l+1,\tau})-\tilde{\xi}_t\geq0,~~~\forall l=1,\dots,k, \forall t=1,\dots,T+1
\end{equation}
where we denote $\tilde{\beta}_{k+1,t}=0$ for notation simplicity. Note that when $t\geq q+1$, we have that
\[
\tilde{\beta}_{l,t}+\sum_{\tau>t}\tilde{p}_\tau\cdot (\tilde{\beta}_{l,\tau}-\tilde{\beta}_{l+1,\tau})-\tilde{\xi}_t=\beta^*_{l,t}+\sum_{\tau>t}\p_\tau\cdot (\beta^*_{l,\tau}-\beta^*_{l+1,\tau})-\xi^*_t\geq0,~~\forall l=1,\dots,k
\]
and when $1\leq t\leq q-1$, we also have
\[
\tilde{\beta}_{l,t}+\sum_{\tau>t}\tilde{p}_\tau\cdot (\tilde{\beta}_{l,\tau}-\tilde{\beta}_{l+1,\tau})-\tilde{\xi}_t=\beta^*_{l,t}+\sum_{\tau>t}\p_\tau\cdot (\beta^*_{l,\tau}-\beta^*_{l+1,\tau})-\xi^*_t\geq0,~~\forall l=1,\dots,k
\]
by noting $\tilde{p}_q+\tilde{p}_{q+1}=\p_q$. Now we consider the case when $t=q$, then for each $l=1,\dots,k$, we have
\begin{align}
\tilde{\beta}_{l,q}+\sum_{j=q+1}^{T+1}\tilde{p}_j\cdot(\tilde{\beta}_{l-1,j}-\tilde{\beta}_{l,j})-\tilde{\xi}_q&= \tilde{\beta}_{l,q}+\sum_{j=q+2}^{T+1}\tilde{p}_j\cdot(\tilde{\beta}_{l-1,j}-\tilde{\beta}_{l,j})-\tilde{\xi}_q+\tilde{p}_{q+1}\cdot(\tilde{\beta}_{l-1,q+1}-\tilde{\beta}_{l,q+1})\nonumber\\
&=\beta^*_{l,q}+\sum_{j=q+1}^{T}\p_j\cdot(\beta^*_{l-1,j}-\beta^*_{l,j})-\xi^*_q+\p_q\cdot(1-\sigma)\cdot(\beta^*_{l-1,q}-\beta^*_{l,q})\nonumber\\
&\geq \p_q\cdot(1-\sigma)\cdot(\beta^*_{l-1,q}-\beta^*_{l,q})\nonumber
\end{align}
Thus, it is enough to show that $\beta^*_{l-1,q}\geq\beta^*_{l,q}$ to prove feasibility. Note that from Lemma \ref{duallemma1}, for each $l_2=1,2,\dots,k$ and each $l_1=1,2,\dots,l_2-1$, we have $\delta_{l_1,l_2}\geq\delta_{l_1+1,l_2}$, then, it is direct to show that $\beta^*_{l-1,q}\geq\beta^*_{l,q}$ from the construction \eqref{constructiondual}, which completes our proof.
\Halmos
\end{proof}

\subsection{Proof of Proposition \ref{Upperprophetproposition}}\label{Proofproposition3}
\begin{proof}{Proof:}
We consider the following problem instance $\mathcal{H}$. At the beginning, there are two queries arriving deterministically with a reward $1$. Then, over the time interval $[0,1]$, there are queries with reward $r_1>1$ arriving according to a Poisson process with rate $\lambda$. At last, there is one query with a reward $\frac{r_2}{\epsilon}$ arriving with a probability $\epsilon$ for some small $\epsilon>0$.

Obviously, since $r_1>1$ and $\epsilon$ is set to be small, the prophet will first serve the last query as long as it arrives, and then serve the queries with a reward $r_1$ as much as possible, and at least serve the first two queries. Then, we have that
\[
\mathbb{E}_{\bm{I}\sim\bm{F}}[V^{\text{off}}(\bm{I})]=\hat{V}:= r_2+2\cdot\exp(-\lambda)+(r_1+1)\cdot\lambda\cdot\exp(-\lambda)+2r_1\cdot(1-(\lambda+1)\cdot\exp(-\lambda)+O(\epsilon)
\]
Moreover, for any online algorithm $\pi$, we consider the following situations separately based on the number of the first two queries that $\pi$ will serve.\\
(i). If $\pi$ will always serve the first two queries, then it is obvious that $\mathbb{E}_{\pi, \bm{I}\sim\bm{F}}[V^\pi(\bm{I})]=2$.\\
(ii). If $\pi$ serves only one of the first two queries, then the optimal way for $\pi$ to serve the second query will depend on the value of $r_1$ and $r_2$. To be more specific, if $r_1\geq r_2$, then the optimal way is to serve the query with reward $r_1$ as long as it arrives, and if $r_1<r_2$, then the optimal way is to reject all the arriving queries with reward $r_1$ and only serve the last query. Thus, it holds that
\[
\mathbb{E}_{\pi, \bm{I}\sim\bm{F}}[V^\pi(\bm{I})]\leq V(1):= 1+\exp(-\lambda)\cdot r_2+(1-\exp(-\lambda))\cdot\max\{r_1, r_2\}+O(\epsilon)
\]
(iii). If $\pi$ rejects all the first two queries, then conditioning on there are more than one queries with reward $r_1$ arriving during the interval $[0,1]$, the optimal way for $\pi$ is to serve both queries with reward $r_1$ if $r_1\geq r_2$ and only serve one query with reward $r_1$ if $r_1< r_2$. Then, it holds that
\[
\mathbb{E}_{\pi, \bm{I}\sim\bm{F}}[V^\pi(\bm{I})]\leq V(2):=\exp(-\lambda)\cdot r_2+\lambda\cdot\exp(-\lambda)\cdot(r_1+r_2)+(1-(\lambda+1)\cdot\exp(-\lambda)\cdot(r_1+\max\{r_1, r_2\})
\]
Thus, we conclude that for any online algorithm $\pi$, it holds that
\[
\frac{\mathbb{E}_{\pi, \bm{I}\sim\bm{F}}[V^\pi(\bm{I})]}{\mathbb{E}_{\bm{I}\sim\bm{F}}[V^{\text{off}}(\bm{I})]}\leq g(r_1, r_2, \lambda):= \frac{\max\{V(1), V(2), 2\}}{\hat{V}}
\]
where we can neglect the $O(\epsilon)$ term by letting $\epsilon\rightarrow0$.
In this way, we can focus on the following optimization problem
\[
\inf_{r_1>1, r_2>1, \lambda} g(r_1, r_2, \lambda)
\]
to obtain the upper bound of the guarantee of any online algorithm relative to the prophet's value. We can numerically solve the above problem and show that when $r_1=r_2=1.4119$, $\lambda=1.2319$, the value of $g(r_1, r_2, \lambda)$ reaches its minimum and equals $0.6269$, which completes our proof.
\Halmos
\end{proof}

\subsection{Proof of Theorem \ref{worstcasetheorem}}\label{Prooftheorem4}
\begin{proof}{Proof:}
For each $\bm{\p}=(\p_1,\dots,\p_T)$ satisfying $\sum_{t=1}^{T}\p_t=k$, since each irrational number can be arbitrarily approximated by a rational number, we assume without loss of generality that $\p_t$ is a rational number for each $t$, i.e., $\p_t=\frac{n_t}{N}$ where $n_t$ is an integer for each $t$ and $N$ is an integer to denote the common denominator. We first split $\p_1$ into $\frac{1}{N}$ and $\frac{n_1-1}{N}$ to form a new sequence $\tilde{\bm{p}}=(\frac{1}{N},\frac{n_1-1}{N},\frac{n_2}{N},\dots,\frac{n_T}{N})$. From Lemma \ref{splititemlemma}, we know $\LCkunit(\bm{\p})\geq\LCkunit(\tilde{\bm{p}})$. We then split $\frac{n_1-1}{N}$ into $\frac{1}{N}$ and $\frac{n_1-2}{N}$ and so on. In this way, we split $\p_1$ into $n_1$ copies of $\frac{1}{N}$ to form a new sequence $\tilde{\bm{p}}=(\frac{1}{N},\dots,\frac{1}{N},\frac{n_2}{N},\dots,\frac{n_T}{N})$ and Lemma \ref{splititemlemma} guarantees that $\LCkunit(\bm{\p})\geq\LCkunit(\tilde{\bm{p}})$. We repeat the above operation for each $t$. Finally, we form a new sequence of arrival probabilities, denoted as $\bm{\p}^{N}=(\frac{1}{N},\dots,\frac{1}{N})\in\mathbb{R}^{Nk}$, and we have $\LCkunit(\bm{\p})\geq\LCkunit(\bm{\p}^N)$.

From the above argument, we know that for each $\bm{\p}=(\p_1,\dots,\p_T)$ satisfying $\sum_{t=1}^{T}\p_t=k$, there exists an integer $N$ such that $\LCkunit(\bm{\p})\geq\LCkunit(\bm{\p}^N)$, which implies that
\[
\inf_{\bm{\p}:\sum_{t}\p_t=k}~\LCkunit(\bm{\p})=\liminf_{N\rightarrow\infty} \LCkunit(\bm{\p}^N)
\]
Thus, it is enough to consider $\liminf_{N\rightarrow\infty} \LCkunit(\bm{\p}^N)$.

We denote $\tilde{\bm{y}}_{\theta}(t)=(\tilde{y}_{1,\theta}(t),\dots,\tilde{y}_{k,\theta}(t))$. We define a function $\bm{f}_{\theta}(\cdot)=(f_{1,\theta}(\cdot), \dots,f_{k,\theta}(\cdot))$, where we denote $\tilde{y}_{0,\theta}(t)=1$ and for each $l=1,\dots,k-1$
\[
f_{l,\theta}(\tilde{y}_{1,\theta},\dots,\tilde{y}_{k,\theta},t)=\left\{\begin{aligned}
&0,~~&&\text{if~}\tilde{y}_{l-1,\theta}(t)\leq1-\theta\\
&\tilde{y}_{l-1,\theta}(t)-(1-\theta),~~&&\text{if~}\tilde{y}_{l,\theta}(t)\leq1-\theta\leq\tilde{y}_{l-1,\theta}(t)\\
&\tilde{y}_{l-1,\theta}(t)-\tilde{y}_{l,\theta}(t),~~&&\text{if~}\tilde{y}_{l,\theta}(t)\geq1-\theta\\
\end{aligned}\right.
\]
and
\[
f_{k,\theta}(\tilde{y}_{1,\theta},\dots,\tilde{y}_{k,\theta},t)=\left\{\begin{aligned}
&0,~~&&\text{if~}\tilde{y}_{k-1,\theta}(t)\leq1-\theta\\
&\tilde{y}_{k-1,\theta}(t)-(1-\theta),~~&&\text{if~}\tilde{y}_{k-1,\theta}(t)\geq 1-\theta\\
\end{aligned}\right.
\]
Moreover, variable $(\tilde{y}_1,\dots,\tilde{y}_k,t)$ belongs to the feasible set of the function $f_{l,\theta}(\cdot)$ if and only if $y_{v-1}\geq y_{v}$ for $v=1,\dots,k-1$.
Then, for each $\theta\in[0,1]$, the function $\tilde{\bm{y}}_{\theta}(t)$ in Definition \ref{ODEdefinition} should be the solution to the following ordinary differential equation (ODE):
\begin{equation}\label{ODEformula}
\frac{d\tilde{\bm{y}}_{\theta}(t)}{dt}=\bm{f}_{\theta}(\tilde{\bm{y}}_{\theta}, t)\text{~~for~}t\in[0,k]\text{~with~starting~point~}\tilde{\bm{y}}_{\theta}(0)=(0,\dots,0)
\end{equation}
For each integer $N$ and $\bm{\p}^{N}=(\frac{1}{N},\dots,\frac{1}{N})$ where $\|\bm{\p}^{N}\|_1=k$, for any fixed $\theta\in[0,1]$, we denote $\{x_{l,t}(\theta,N)\}$ as the variables constructed in Definition \ref{constructprophet} under the arrival probabilities $\bm{\p}^{N}$, where $l=1,\dots,k$ and $t=1,\dots,Nk$. We further denote $y_{l,\theta,N}(\frac{t}{N})=\sum_{\tau=1}^{t}x_{l,\tau}(\theta,N)$ and denote $\bm{y}_{\theta,N}(\cdot)=(y_{1,\theta,N}(\cdot),\dots,y_{k,\theta,N}(\cdot))$. It is direct to check that for each $t=1,\dots,Nk$, it holds that
\[
(\bm{y}_{\theta,N}(\frac{t}{N})-\bm{y}_{\theta,N}(\frac{t-1}{N}))/(\frac{1}{N})=\bm{f}_{\theta}(\bm{y}_{\theta,N} ,\frac{t-1}{N})
\]
Thus, $\{\bm{y}_{\theta,N}(t)\}_{\forall t\in[0,k]}$ can be viewed as the result obtained from applying Euler's method \citep{butcher2008numerical} to solve ODE \eqref{ODEformula}, where there are $Nk$ discrete points uniformly distributed within $[0,k]$. Note that for each $\theta\in[0,1]$, the function $\bm{f}_{\theta}(\cdot)$ is Lipschitz continuous with a Lipschitz constant $2$ under infinity norm. Moreover, it is direct to note that for each $\theta\in[0,1]$ and each $t\in[0,k]$, it holds that $\|\bm{f}_{\theta}(\tilde{\bm{y}},t)\|_{\infty}\leq1$. Then, for each $\theta\in[0,1]$, each $t_1, t_2\in[0,k]$ and each $l=1,\dots,k$, we have
\[
|\frac{d\tilde{y}_{l,\theta}(t_1)}{dt}-\frac{d\tilde{y}_{l,\theta}(t_2)}{dt}|\leq2\cdot\|\tilde{\bm{y}}_{\theta}(t_1)-\tilde{\bm{y}}_{\theta}(t_2)\|_{\infty}\leq 2\cdot|t_1-t_2|
\]
Thus, we know that
\[
|\tilde{y}_{l,\theta}(t_1)-\tilde{y}_{l,\theta}(t_2)-\frac{d\tilde{y}_{l,\theta}(t_2)}{dt}\cdot(t_1-t_2)|\leq 2\cdot (t_1-t_2)^2
\]
We can apply the global truncation error of Euler's method (Theorem 212A \citep{butcher2008numerical}) to show that $\bm{y}_{\theta,N}(k)$ converges to $\tilde{\bm{y}}_{\theta}(k)$ when $N\rightarrow\infty$. Specifically, we have
\begin{equation}\label{proofeulerconvergence}
\|\bm{y}_{\theta,N}(k)-\tilde{\bm{y}}_{\theta}(k)\|_{\infty}\leq(\exp(2k)-1)\cdot\frac{1}{N},~~\forall \theta\in[0,1]
\end{equation}
Now we define $Y(\theta)=\tilde{y}_{k,\theta}(k)$ as a function of $\theta\in[0,1]$ and for each $N$, we define $Y_N(\theta)=y_{k,\theta,N}(k)$ as a function of $\theta\in[0,1]$. \eqref{proofeulerconvergence} implies that the function sequence $\{Y_N\}_{\forall N}$ converges uniformly to the function $Y$ when $N\rightarrow\infty$. Note that for each $N$, the function $Y_N(\theta)$ is continuously monotone increasing with $\theta$ due to Lemma \ref{feasilemma3}, then from uniform limit theorem, $Y(\theta)$ must be a continuously monotone increasing function over $\theta$. Thus, the equation $Y(\theta)=1-\theta$ has a unique solution, denoted as $\sgk$. For each $N$, we denote $\theta^*_N$ as the unique solution to the equation $Y_N(\theta)=1-\theta$, where we have that $\theta^*_N=\LCkunit(\bm{\p}^N)$. Since $\{Y_N\}_{\forall N}$ converges uniformly to the function $Y$, it must hold that $\sgk=\lim_{N\rightarrow\infty}\theta^*_N$, which completes our proof.
\Halmos
\end{proof}

\section{Proofs in \Cref{randomsizesection}}

\subsection{Proof of \Cref{Largesmallproposition}}\label{pfprop:largesmall}
\begin{proof}{Proof:}
The proof is the same as the proof of Proposition 3.1 in \citet{jiang2022tight}.

Consider a problem setup ${\mathcal{H}}$ with 4 queries and
\[
({r}_1, \p_1, d_1)=(r,1,\epsilon),~~({r}_2, \p_2, d_2)=({r}_3, \p_3, d_3)=(r, \frac{1-2\epsilon}{1+2\epsilon}, \frac{1}{2}+\epsilon),~~({r}_4, \p_4, d_4)=(r/\epsilon,\epsilon,1)
\]
for $r>0$ and some $\epsilon>0$. Obviously, if the policy $\pi$ only serves queries with a size greater than 1/2, then the expected total reward is $V^{\pi}_L=r$. If the policy $\pi$ only serves queries with a size no greater than 1/2, then the expected total reward is $V^{\pi}_S=r$. Thus, the expected total reward of the policy $\pi$ is
\[
V^{\pi}=\max\{V^{\pi}_L, V^{\pi}_S\}=r+O(\epsilon)
\]
Moreover, it is direct to see that $\sum_{t=1}^{4}\p_t\cdot d_t=1$, then, we have $\text{UP}({\mathcal{H}})=4r$.
Thus, the guarantee of $\pi$ is upper bounded by $1/4+O(\epsilon)$, which converges to $1/4$ as $\epsilon\rightarrow0$.
\Halmos
\end{proof}

\subsection{Proof of \Cref{knapsackratiotheorem}}\label{Prooftheorem2}
\begin{proof}{Proof:}
It is enough to prove that the Best-fit Magician policy $\pi_{\gamma}$ in \Cref{definedistributionrandomsize} is feasible when $\gamma=\frac{1}{3+e^{-2}}$. In the remaining proof, we set $\gamma=\frac{1}{3+e^{-2}}$. For a fixed $t$, and any $a$ and $b$, denote $\mu_{t,\gamma}(a,b]=P(a<\tilde{X}_{t,\gamma}\leq b)$ assuming $\tilde{X}_{t,\gamma}$ is well-defined, it is enough to prove that $\mu_{t,\gamma}(0,1]\leq1-\gamma$ thus the random variable $\tilde{X}_{t+1,\gamma}$ is well-defined.

We define $U_t(s)=\mu_{t,\gamma}(0,s]$ for any $s\in(0,1]$. Note that by definition, we have $\mathbb{E}[\tilde{X}_{t,\gamma}]=\gamma\cdot\sum_{\tau=1}^{t}\p_\tau\cdot d_\tau\leq\gamma$. From integration by parts, we have that
\begin{equation}\label{integrationbypart}
\gamma\geq\mathbb{E}[\tilde{X}_{t,\gamma}]=\int_{s=0}^{1}sdU_t(s)=U_t(1)-\int_{s=0}^{1}U_t(s)ds
\end{equation}
We then bound the term $\int_{s=0}^{1}U_t(s)ds$. Now suppose $U_t(1)>\gamma$, otherwise $U_t(1)\leq\gamma$ immediately implies that $U_t(1)\leq1-\gamma$, which proves our result. Then there must exist a constant $u^*\in (0,1)$ such that $\gamma\cdot u^*-\gamma\cdot\ln(u^*)=U_t(1)$. We further define
\[
s^*=\left\{\begin{aligned}
&\min\{ s\in (0,1/2]: U_t(s)\geq \gamma\cdot u^* \},~~&\text{if~}U_t(\frac{1}{2})\geq \gamma\cdot u^*\\
&\frac{1}{2},&~~\text{if~}U_t(\frac{1}{2})<\gamma\cdot u^*
\end{aligned}\right.\]
Denote $U_t(s^*-)=\lim_{s\rightarrow s^*-}U_t(s)$, it holds that
\[\begin{aligned}
\int_{s=0}^{1}U_t(s)ds&=\int_{s=0}^{s^*-}U_t(s)ds+\int_{s=s^*}^{1/2}U_t(s)ds+\int_{s=1/2}^{1-s^*}U_t(s)ds+\int_{s=1-s^*}^{1}U_t(s)ds\\
&\leq s^*\cdot (U_t(s^*-)+U_t(1))+\int_{s=s^*}^{1/2}U_t(s)ds+\int_{s=1/2}^{1-s^*}U_t(s)ds\\
&\leq s^*\cdot(2\gamma u^*-\gamma\cdot\ln(u^*))+\int_{s=s^*}^{1/2}(2U_t(s)-\gamma\cdot\ln(\frac{U_t(s)}{\gamma}))ds
\end{aligned}\]
where the last inequality holds by noting that $U_t(s^*-)\leq\gamma u^*$ and for any $s\in[s^*,1/2]$, from Lemma \ref{generalsamplepathlemma}, we have $\frac{U_t(s)}{\gamma}\leq\exp(-\frac{U_t(1-s)-U_t(s)}{\gamma})$, which implies that $\frac{U_t(1-s)}{\gamma}\leq\frac{U_t(s)}{\gamma}-\ln(\frac{U_t(s)}{\gamma})$. Note that for any $s\in[s^*,1/2]$, we have that $\gamma\cdot u^*\leq U_t(s^*)\leq U_t(s)\leq U_t(1/2)\leq\gamma$, where $U_t(1/2)\leq\gamma$ holds directly from Lemma \ref{generalsamplepathlemma}. Further note that the function $2x-\gamma\cdot\ln(x/\gamma)$ is a convex function, thus is quasi convex. Then, for any $s\in[s^*,1/2]$, it holds that
\[
2U_t(s)-\gamma\cdot\ln(\frac{U_t(s)}{\gamma})\leq\max\{ 2\gamma u^*-\gamma\cdot\ln(u^*), 2\gamma \}
\]
Thus, we have that
\[\begin{aligned}
\int_{s=0}^{1}U_t(s)ds&\leq s^*\cdot(2\gamma u^*-\gamma\cdot\ln(u^*))+(1/2-s^*)\cdot \max\{ 2\gamma u^*-\gamma\cdot\ln(u^*), 2\gamma \}\\
\end{aligned}\]
If $2\gamma u^*-\gamma\cdot\ln(u^*)\leq 2\gamma$, we have $\int_{s=0}^{1}U_t(s)ds\leq 2s^*\gamma+\gamma-2s^*\gamma=\gamma$. From \eqref{integrationbypart}, we have that $U_t(1)\leq2\gamma<1-\gamma$.\\
If $2\gamma u^*-\gamma\cdot\ln(u^*)> 2\gamma$, we have $\int_{s=0}^{1}U_t(s)ds\leq\gamma u^*-\frac{\gamma}{2}\cdot\ln(u^*)$. From \eqref{integrationbypart} and the definition of $u^*$, we have that
\[
U_t(1)=\gamma u^*-\gamma\cdot\ln(u^*)\leq \gamma+\gamma u^*-\frac{\gamma}{2}\cdot\ln(u^*)
\]
which implies that $u^*\geq \exp(-2)$. Note that the function $x-\ln(x)$ is non-increasing on $(0,1)$, we have $U_t(1)\leq \gamma\cdot\exp(-2)+2\gamma=1-\gamma$, which completes our proof.
\Halmos
\end{proof}

\subsection{Proof of \Cref{uppertheorem}}\label{pf:uppertheorem}
\begin{proof}{Proof:}
We denote by $d_t$ the size of query $t$. Then, we have $\bm{p}=(p_1,\dots,p_t)$ and $\bm{D}=(d_1,\dots,d_T)$.
For each $\epsilon>0$, we consider the following $\bm{\p}$ and $\bm{D}$:
\[\begin{aligned}
(\p_1, d_1)=(1, \epsilon),~~(\p_t, d_t)=(\frac{1-2\epsilon}{(T-2)(\frac{1}{2}+\epsilon)}, \frac{1}{2}+\epsilon)\text{~~for~all~}2\leq t\leq T-1\text{~~and~~}(\p_T,d_T)=(\epsilon, 1)
\end{aligned}\]
It is direct to check that $\sum_{t=1}^{T}\p_t\cdot d_t=1$.

We denote by $\theta^*$ the maximum ratio in the knapsack OCRS problem. We now focus on the last query $(\p_T,d_T)=(\epsilon, 1)$. Note that in order to accept this last query with probability $\theta^*$, we must not accept any query during the period $1$ to period $T-1$, with probability at least $\theta^*$. Therefore, it holds that
\[\begin{aligned}
\theta^*&\leq 1-P(\text{accept some query $t\leq T-1$})\\
&\leq 1-P(\text{accept query }1 \text{ and all queries }2\leq t\leq T-1 \text{ are inactive})-P(\text{accept some query $2\leq t\leq T-1$}).
\end{aligned}\]
Note that we can bound 
\[\begin{aligned}
P(\text{accept query }1 \text{ and all queries }2\leq t\leq T-1 \text{ are inactive})&=\theta^*\cdot \left(1-\frac{1-2\epsilon}{(T-2)(\frac{1}{2}+\epsilon)} \right)^{T-2}\\
&=\theta^*\cdot e^{-2}+O(\epsilon).
\end{aligned}\]
On the other hand, we know that
\[
P(\text{accept some query $2\leq t\leq T-1$})=\theta^*\cdot\sum_{t=2}^{T-1}p_t=2\theta^*+O(\epsilon).
\]
Therefore, when $\epsilon\rightarrow0$, the optimal value $\theta^*$ must satisfy the inequality
\[
\theta^*\leq 1-\theta^*\cdot e^{-2}-2\theta^*
\]
which implies that $\theta^*\leq\frac{1}{3+e^{-2}}$. Our proof is thus completed.

\Halmos
\end{proof}

\section{Proofs in \cref{sec:extensions}}

\subsection{Proof of Lemma \ref{UDsamplepathlemma}}\label{proofUDlemma}
\begin{proof}{Proof:}
We prove \eqref{UDsamplepathinequality} by induction on $t$. When $t=0$, since $\mu_{0,\gamma}(0,b]=0$ for any $0<b\leq1/2$, \eqref{UDsamplepathinequality} holds trivially. Now suppose that \eqref{UDsamplepathinequality} holds for $t-1$, we consider the case for $t$. Denote $\mathcal{F}_t$ as the support of $\tilde{d}_t$ and for each $ d_t\in\mathcal{F}_t$, we denote $\eta_{t,\boldsymbol{\gamma}}(d_t)$ as the threshold defined in \eqref{defineetaUD}. Then we define the following division of $\mathcal{F}_t$:
\[\begin{aligned}
&\mathcal{F}_{t,1}:=\{d_t\in\mathcal{F}_{t}: \eta_{t,\boldsymbol{\gamma}}(d_t)=0 \text{~and~} d_t\leq b\}\\
&\mathcal{F}_{t,2}=\{d_t\in\mathcal{F}_{t}:\eta_{t,\boldsymbol{\gamma}}(d_t)=0 \text{~and~}  b< d_t\leq 1-b\}\\
&\mathcal{F}_{t,3}=\{d_t\in\mathcal{F}_{t}:\eta_{t,\boldsymbol{\gamma}}(d_t)>0 \text{~and~} d_t\leq 1- b\}\\
&\mathcal{F}_{t,4}=\{d_t\in\mathcal{F}_{t}:\eta_{t,\boldsymbol{\gamma}}(d_t)=0 \text{~and~} 1- b<d_t\}\\
&\mathcal{F}_{t,5}=\{d_t\in\mathcal{F}_{t}:\eta_{t,\boldsymbol{\gamma}}(d_t)>0 \text{~and~}  1- b<d_t\}
\end{aligned}\]
Note that for each $d_t\in\mathcal{F}_t$, $\eta_{t,\boldsymbol{\gamma}}(d_t)=0$ implies that a measure $p_t(d_t)\cdot (\gamma_t-\mu_{t-1,\boldsymbol{\gamma}}(0,1-d_t])$ of empty sample paths will be moved to $d_t$ due to the inclusion of realization $d_t$ when defining $\tilde{X}_{t,\boldsymbol{\gamma}}$. More specifically, the movement of sample paths due to the inclusion of each realization $d_t\in\mathcal{F}_t$ can be described as follows:\\
(i). For each $d_t\in\mathcal{F}_{t,1}$, obviously, $p_t(d_t)\cdot (\gamma_t-\mu_{t-1,\boldsymbol{\gamma}}(0,1-d_t])$ measure of sample paths, which is upper bounded by $p_t( d_t)\cdot (\gamma_t-\mu_{t-1,\boldsymbol{\gamma}}(0,1-b])$, will be moved from $0$ to the range $(0,b]$, while a quantity $a_1(d_t)\leq p_t(d_t)\cdot\mu_{t-1,\boldsymbol{\gamma}}(0,b]$ measure of sample paths will be moved out of the range $(0,b]$. Moreover, at most $p_t(d_t)\cdot\mu_{t-1,\boldsymbol{\gamma}}(0,b]$ measure of sample paths will be moved into the range $(b,1-b]$. \\
(ii). For each $d_t\in\mathcal{F}_{t,2}$, $p_t(d_t)\cdot\mu_{t-1,\boldsymbol{\gamma}}(0,b]$ measure of sample paths will be moved out of the range $(0,b]$. Moreover, $p_t(d_t)\cdot (\gamma_t-\mu_{t-1,\boldsymbol{\gamma}}(0,1-d_t])$ measure of sample paths, which is upper bounded by $p_t(d_t)\cdot (\gamma_t-\mu_{t-1,\boldsymbol{\gamma}}(0,b])$, will be moved from $0$ into the range $(b,1-b]$, while at most $p_t(d_t)\cdot\mu_{t-1,\boldsymbol{\gamma}}(0,b]$ measure of sample paths will be moved from $(0,b]$ into $(b, 1-b]$. Thus, the measure of new sample path that is moved into the range $(b,1-b]$ is upper bounded by $\gamma_t\cdot p_t(d_t)$.\\
(iii). For each $d_t\in\mathcal{F}_{t,3}$, then a quantity $a_3(d_t)\leq p_t(d_t)\cdot\mu_{t-1,\boldsymbol{\gamma}}(0,b]$ measure of sample paths is moved out of the range $(0,b]$, and at most $p_t(d_t)\cdot\mu_{t-1,\boldsymbol{\gamma}}(0,b]$ measure of sample paths is moved into the range $(b,1-b]$.\\
(iv). For each $d_t\in\mathcal{F}_{t,4}$ or $d_t\in\mathcal{F}_{t,5}$, since $d_t>1-b$, obviously, no new sample path will be added to the range $(b,1-b]$ due to the inclusion of such realization $d_t$ when defining $\tilde{X}_{t,\boldsymbol{\gamma}}$, while the measure of the sample paths within the range $(0,b]$ can only become smaller.\\
To conclude, denoting
\[
a_1=\sum_{d_t\in\mathcal{F}_{t,1}} a_1(d_t) \text{~and~}
\hat{p}_1= \sum_{d_t\in\mathcal{F}_{t,1}}p_t(d_t)\text{~and~}\hat{p}_2=\sum_{d_t\in\mathcal{F}_{t,2}}p_t(d_t)\text{~and~}a_3=\sum_{d_t\in\mathcal{F}_{t,3}} a_3(d_t) \text{~and~}\hat{p}_3=\sum_{d_t\in\mathcal{F}_{t,3}}p_t(d_t)
\]
we have that
\begin{equation}\label{UDrandmovemu1}
\begin{aligned}
\mu_{t,\boldsymbol{\gamma}}(0,b]\leq&\mu_{t-1,\boldsymbol{\gamma}}(0,b]+(\gamma_t-\mu_{t-1,\boldsymbol{\gamma}}(0,1-b])\cdot\hat{p}_1-a_1-\mu_{t-1,\boldsymbol{\gamma}}(0,b]\cdot \hat{p}_2-a_3
\end{aligned}
\end{equation}
and
\begin{equation}\label{UDrandmovemu2}
\mu_{t,\boldsymbol{\gamma}}(b,1-b]\leq \mu_{t-1,\boldsymbol{\gamma}}(b,1-b]+a_1+\gamma_t\cdot\hat{p}_2+a_3
\end{equation}
Moreover, it holds that $\hat{p}_1+\hat{p}_2+\hat{p}_3\leq1$. We now consider the following two cases separately.\\
\textit{Case 1:} If $\hat{p}_1>0$, then we must have $\gamma_t\geq\mu_{t-1,\boldsymbol{\gamma}}(0,1-b]$. Notice that $\hat{p}_1\leq1-\hat{p}_2$, from \eqref{UDrandmovemu1},  we have
\begin{align}
\mu_{t,\boldsymbol{\gamma}}(0,b]&\leq\mu_{t-1,\boldsymbol{\gamma}}(0,b]+(\gamma_t-\mu_{t-1,\boldsymbol{\gamma}}(0,1-b])\cdot\hat{p}_1
-a_1-\mu_{t-1,\boldsymbol{\gamma}}(0,b]\cdot \hat{p}_2-a_3\nonumber\\
&\leq \mu_{t-1,\boldsymbol{\gamma}}(0,b]+(\gamma_t-\mu_{t-1,\boldsymbol{\gamma}}(0,1-b])\cdot(1-\hat{p}_2)-a_1-\mu_{t-1,\boldsymbol{\gamma}}(0,b]\cdot \hat{p}_2-a_3\nonumber\\
&= (\gamma_t-\mu_{t-1,\boldsymbol{\gamma}}(b,1-b])\cdot(1-\hat{p}_2)-a_1-a_3\nonumber\\
&\leq (\gamma_1-\mu_{t-1,\boldsymbol{\gamma}}(b,1-b])\cdot(1-\hat{p}_2)-a_1-a_3\label{UDrandmovemu6}
\end{align}
where the last inequality holds from $\gamma_1\geq\gamma_t$.
Moreover, from \eqref{UDrandmovemu2}, we have that
\begin{align}
\exp(-\frac{1}{\gamma_1}\cdot\mu_{t,\boldsymbol{\gamma}}(b,1-b])&\geq \exp(-\frac{1}{\gamma_1}\cdot\mu_{t-1,\boldsymbol{\gamma}}(b,1-b]-\frac{\gamma_t\hat{p}_2}{\gamma_1})\cdot\exp(-\frac{1}{\gamma_1}\cdot a_1-\frac{1}{\gamma_1}\cdot a_3)\nonumber\\
&\geq \exp(-\frac{1}{\gamma_1}\cdot\mu_{t-1,\boldsymbol{\gamma}}(b,1-b]-\hat{p}_2)\cdot\exp(-\frac{1}{\gamma_1}\cdot a_1-\frac{1}{\gamma_1}\cdot a_3)\nonumber\\
&\geq \exp(-\frac{1}{\gamma_1}\cdot\mu_{t-1,\boldsymbol{\gamma}}(b,1-b]-\hat{p}_2)\cdot (1-\frac{1}{\gamma_1}\cdot a_1-\frac{1}{\gamma_1}\cdot a_3)\nonumber\\
&=\exp(-\frac{1}{\gamma_1}\cdot\mu_{t-1,\boldsymbol{\gamma}}(b,1-b]-\hat{p}_2)\nonumber\\
&~~~-\exp(-\frac{1}{\gamma_1}\cdot\mu_{t-1,\boldsymbol{\gamma}}(b,1-b]-\hat{p}_2)\cdot (\frac{1}{\gamma_1}\cdot a_1+\frac{1}{\gamma_1}\cdot a_3)\nonumber\\
&\geq \exp(-\frac{1}{\gamma_1}\cdot\mu_{t-1,\boldsymbol{\gamma}}(b,1-b]-\hat{p}_2)-\frac{1}{\gamma_1}\cdot a_1-\frac{1}{\gamma_1}\cdot a_3\label{UDrandmovemu7}
\end{align}
where the second inequality holds from $\gamma_1\geq\gamma_t$, the third inequality holds from $\exp(-x)\geq1-x$ for any $x\geq0$ and the last inequality holds from $\exp(-x)\leq1$ for any $x\geq0$.
Further note that
\[
\exp(-\frac{1}{\gamma_1}\cdot\mu_{t-1,\boldsymbol{\gamma}}(b,1-b]-\hat{p}_2)=\exp(-\frac{1}{\gamma_1}\cdot\mu_{t-1,\boldsymbol{\gamma}}(b,1-b])
\cdot\exp(-\hat{p}_2)\geq(1-\frac{1}{\gamma_1}\cdot\mu_{t-1,\boldsymbol{\gamma}}(b,1-b])\cdot(1-\hat{p}_2)
\]
From \eqref{UDrandmovemu6} and \eqref{UDrandmovemu7}, we have
\[\begin{aligned}
\exp(-\frac{1}{\gamma_1}\cdot\mu_{t,\boldsymbol{\gamma}}(b,1-b])&\geq (1-\frac{1}{\gamma_1}\cdot\mu_{t-1,\boldsymbol{\gamma}}(b,1-b])\cdot(1-\hat{p}_2)-\frac{1}{\gamma_1}\cdot a_1-\frac{1}{\gamma_1}\cdot a_3\\
&\geq \frac{1}{\gamma_1}\cdot\mu_{t,\boldsymbol{\gamma}}(0,b]
\end{aligned}\]
\textit{Case 2:} If $\hat{p}_1=0$ which also implies $a_1=0$, then we have
\begin{equation}\label{UDrandmovemu10}
\mu_{t,\boldsymbol{\gamma}}(0,b]\leq\mu_{t-1,\boldsymbol{\gamma}}(0,b]-\mu_{t-1,\boldsymbol{\gamma}}(0,b]\cdot \hat{p}_2-a_3
\end{equation}
and
\[
\mu_{t,\boldsymbol{\gamma}}(b,1-b]\leq \mu_{t-1,\boldsymbol{\gamma}}(b,1-b]+\gamma_t\cdot\hat{p}_2+a_3\leq \mu_{t-1,\boldsymbol{\gamma}}(b,1-b]+\gamma_1\cdot\hat{p}_2+a_3
\]
Thus, it holds that
\begin{align}
\exp(-\frac{1}{\gamma_1}\cdot \mu_{t,\boldsymbol{\gamma}}(b,1-b])&\geq \exp(-\frac{1}{\gamma_1}\cdot\mu_{t-1,\boldsymbol{\gamma}}(b,1-b]-\hat{p}_2 )\cdot \exp(-\frac{1}{\gamma_1}\cdot a_3)\nonumber\\
&\geq \exp(-\frac{1}{\gamma_1}\cdot\mu_{t-1,\boldsymbol{\gamma}}(b,1-b]-\hat{p}_2 )\cdot(1-\frac{1}{\gamma_1}\cdot a_3)\nonumber\\
&\geq \exp(-\frac{1}{\gamma_1}\cdot\mu_{t-1,\boldsymbol{\gamma}}(b,1-b]-\hat{p}_2 )-\frac{1}{\gamma_1}\cdot a_3\nonumber\\
&\geq \exp(-\frac{1}{\gamma_1}\cdot\mu_{t-1,\boldsymbol{\gamma}}(b,1-b])\cdot(1-\hat{p}_2)-\frac{1}{\gamma_1}\cdot a_3\nonumber\\
&\geq \frac{1}{\gamma_1}\cdot\mu_{t-1,\boldsymbol{\gamma}}(0,b]\cdot(1-\hat{p}_2)-\frac{1}{\gamma_1}\cdot a_3\label{UDrandmovemu11}
\end{align}
where the third inequality holds from $\exp(-a)\leq1$ for any $a\geq0$ and the last inequality holds from induction hypothesis. Our proof is completed immediately by combining \eqref{UDrandmovemu10} and \eqref{UDrandmovemu11}.
\Halmos
\end{proof}

\subsection{Proof of Theorem \ref{UDemptyboundtheorem}}\label{proofUDemptytheorem}
\begin{proof}{Proof:}
For each fixed $t$, we define $U_t(s)=\mu_{t,\boldsymbol{\gamma}}(0,s]=P(0<\tilde{X}_{t,\boldsymbol{\gamma}}\leq s)$ for any $s\in(0,1]$. Note that by \Cref{UDpolicy}, we have $\mathbb{E}[\tilde{X}_{t,\boldsymbol{\gamma}}]=\sum_{\tau=1}^{t}\gamma_{\tau}\cdot \psi_\tau$. From integration by parts, we have that
\begin{equation}\label{UDintegrationbypart}
\sum_{\tau=1}^{t}\gamma_{\tau}\cdot \psi_\tau=\mathbb{E}[\tilde{X}_{t,\boldsymbol{\gamma}}]=\int_{s=0}^{1}sdU_t(s)=U_t(1)-\int_{s=0}^{1}U_t(s)ds
\end{equation}
We then bound the term $\int_{s=0}^{1}U_t(s)ds$. If $U_t(1)\leq\gamma_1$, then we immediately have
\[
P(\tilde{X}_{t,\boldsymbol{\gamma}}=0)\geq1-\gamma_1\geq 1-\gamma_1-\sum_{\tau=1}^{t}\gamma_{\tau}\cdot\psi_{\tau}
\]
which proves \eqref{UDbound1}.
Thus, in the remaining part of the proof, it is enough for us to only focus on the case $U_t(1)>\gamma_1$.

If $U_t(1)>\gamma_1$,
then there must exists a constant $u^*\in (0,1)$ such that
\[
\gamma_1\cdot u^*-\gamma_1\cdot\ln(u^*)=U_t(1).
\]
We further define
\[
s^*=\left\{\begin{aligned}
&\min\{ s\in (0,1/2]: U_t(s)\geq \gamma_1\cdot u^* \},~~&\text{if~}U_t(\frac{1}{2})\geq \gamma_1\cdot u^*\\
&\frac{1}{2},&~~\text{if~}U_t(\frac{1}{2})<\gamma_1\cdot u^*
\end{aligned}\right.\]
Following the proof of \Cref{knapsackratiotheorem}, we can show that
\[\begin{aligned}
\int_{s=0}^{1}U_t(s)ds&\leq s^*\cdot(2\gamma_1\cdot u^*-\gamma_1\cdot\ln(u^*))+(1/2-s^*)\cdot \max\{ 2\gamma_1\cdot u^*-\gamma_1\cdot\ln(u^*), 2\gamma_1 \}\\
\end{aligned}\]
We further simplify the above expression separately by comparing the value of $2\gamma_1\cdot u^*-\gamma_1\cdot\ln(u^*)$ and $2\gamma_1$.\\
\textit{Case 1}: If $2\gamma_1\cdot u^*-\gamma_1\cdot\ln(u^*)\leq 2\gamma_1$, we have $\int_{s=0}^{1}U_t(s)ds\leq 2s^*\gamma_1+\gamma_1-2s^*\gamma_1=\gamma_1$. From \eqref{UDintegrationbypart}, we have that
\[
U_t(1)\leq\gamma_1+\sum_{\tau=1}^{t}\gamma_{\tau}\cdot \psi_\tau
\]
\textit{Case 2}: If $2\gamma_1\cdot u^*-\gamma_1\cdot\ln(u^*)> 2\gamma_1$, we have $\int_{s=0}^{1}U_t(s)ds\leq\gamma_1\cdot u^*-\frac{\gamma_1}{2}\cdot\ln(u^*)$. From \eqref{UDintegrationbypart} and the definition of $u^*$, we have that
\[
U_t(1)=\gamma_1\cdot u^*-\gamma_1\cdot\ln(u^*)\leq \sum_{\tau=1}^{t}\gamma_{\tau}\cdot \psi_\tau+\gamma_1\cdot u^*-\frac{\gamma_1}{2}\cdot\ln(u^*)
\]
The above inequality implies that
\[
u^*\geq\exp(-\frac{2}{\gamma_1}\cdot\sum_{\tau=1}^{t}\gamma_{\tau}\cdot \psi_\tau)
\]
Note that the function $x-\ln(x)$ is non-increasing on $(0,1)$, hence we have
\[
U_t(1)\leq 2\cdot\sum_{\tau=1}^{t}\gamma_{\tau}\cdot \psi_\tau+\gamma_1\cdot \exp(-\frac{2}{\gamma_1}\cdot\sum_{\tau=1}^{t}\gamma_{\tau}\cdot \psi_\tau)
\]
Combing the above two cases, we conclude that
\[
U_t(1)\leq\max\{ \gamma_1+\sum_{\tau=1}^{t}\gamma_{\tau}\cdot \psi_\tau,~~~ 2\cdot\sum_{\tau=1}^{t}\gamma_{\tau}\cdot \psi_\tau+\gamma_1\cdot \exp(-\frac{2}{\gamma_1}\cdot\sum_{\tau=1}^{t}\gamma_{\tau}\cdot \psi_\tau) \}
\]
Note that $P(\tilde{X}_{t,\boldsymbol{\gamma}}=0)=1-U_t(1)$, we conclude that
\[
P(\tilde{X}_{t,\boldsymbol{\gamma}}=0)\geq\min\{1-\gamma_1-\sum_{\tau=1}^{t}\gamma_{\tau}\cdot\psi_{\tau},~~1-2\cdot\sum_{\tau=1}^{t}\gamma_{\tau}\cdot \psi_\tau-\gamma_1\cdot \exp(-\frac{2}{\gamma_1}\cdot\sum_{\tau=1}^{t}\gamma_{\tau}\cdot \psi_\tau) \}
\]
which completes our proof.
\Halmos
\end{proof}

\subsection{Proof of Lemma \ref{UDoconstgammalemma}}\label{UDproofconstlemma}
\begin{proof}{Proof:}
Since the function $h_{\gamma_0}(\cdot)$ is non-increasing and non-negative over $[0,1]$, it is direct to see that
\[
1\geq\hat{\gamma}_1\geq\dots\geq\hat{\gamma}_T\geq0
\]
Note that for each $t=1,\dots,T$, we have
\[
\int_{\tau=0}^{k_t}h_{\gamma_0}(\tau)d\tau=\sum_{\tau=1}^{t}\hat{\gamma}_{\tau}\cdot\psi_{\tau}
\]
and $\gamma_0\geq\hat{\gamma}_1$. Then, for each $t=1,\dots,T-1$ and each $\tau\in [k_t, k_{t+1}]$, it holds that
\[
h_{\gamma_0}(\tau)\leq 1-\gamma_0-\int_{\tau'=0}^{\tau}h_{\gamma_0}(\tau')d\tau'\leq 1-\gamma_0-\int_{\tau'=0}^{k_t}h_{\gamma_0}(\tau')d\tau'\leq1-\hat{\gamma}_1-\sum_{\tau'=1}^{t}\hat{\gamma}_{\tau'}\cdot\psi_{\tau'}
\]
which implies that
\[
\hat{\gamma}_{t+1}\leq 1-\hat{\gamma}_1-\sum_{\tau'=1}^{t}\hat{\gamma}_{\tau'}\cdot\psi_{\tau'}
\]
since $\hat{\gamma}_{t+1}$ is defined as the average of function $h_{\gamma_0}(\cdot)$ over $[k_t, k_{t+1}]$ in \eqref{UDfeasiblegamma}.

Similarly, note that the function $2x+\gamma_0\cdot\exp(-\frac{2}{\gamma_0}\cdot x)$ is monotone increasing when $x\geq0$. Then, for each $t=1,\dots,T-1$ and each $\tau\in [k_t, k_{t+1}]$, we have
\[\begin{aligned}
h_{\gamma_0}(\tau)&\leq 1-2\cdot\int_{\tau'=0}^{\tau}h_{\gamma_0}(\tau')d\tau'-\gamma_0\cdot\exp(-\frac{2}{\gamma_0}\cdot\int_{\tau'=0}^{\tau}h_{\gamma_0}(\tau')d\tau'  ) \\
&\leq 1-2\cdot\int_{\tau'=0}^{k_t}h_{\gamma_0}(\tau')d\tau'-\gamma_0\cdot\exp(-\frac{2}{\gamma_0}\cdot\int_{\tau'=0}^{k_t}h_{\gamma_0}(\tau')d\tau'  )\\
&=1-2\cdot \sum_{\tau'=1}^{t}\hat{\gamma}_{\tau'}\cdot\psi_{\tau'}-\gamma_0\cdot\exp(-\frac{2}{\gamma_0}\cdot\sum_{\tau'=1}^{t}\hat{\gamma}_{\tau'}\cdot\psi_{\tau'})
\end{aligned}\]
which implies that
\[
\hat{\gamma}_{t+1}\leq1-2\cdot \sum_{\tau'=1}^{t}\hat{\gamma}_{\tau'}\cdot\psi_{\tau'}-\gamma_0\cdot\exp(-\frac{2}{\gamma_0}\cdot\sum_{\tau'=1}^{t}\hat{\gamma}_{\tau'}\cdot\psi_{\tau'})
\]
since $\hat{\gamma}_{t+1}$ is defined as the average of function $h_{\gamma_0}(\cdot)$ over $[k_t, k_{t+1}]$ in \eqref{UDfeasiblegamma}. Thus, we conclude that $\{\hat{\gamma}_t\}_{t=1}^T$ is a feasible solution to $\text{OP}(\boldsymbol{\psi})$.
\Halmos
\end{proof}

\subsection{Proof of \Cref{Upperunitdensity}}\label{proofUpperunitdensity}

It is enough for us to consider a problem setup $\mathcal{H}$ with $T$ queries, where each query has a deterministic size $\frac{1}{2}+\frac{1}{T}$ and is active with probability $\frac{2}{T}$. It is clear that $\textbf{UP}(\mathcal{H})=1$. However, any online algorithm $\pi$ can serve at most one query, given at least one query has arrived. Then, the expected capacity utilization of any online algorithm $\pi$ is upper bound by
\[
(\frac{1}{2}+\frac{1}{T})\cdot (1-(1-\frac{2}{T})^T)=\frac{1-e^{-2}}{2}+O(\frac{1}{T})
\]
This implies an upper bound $\frac{1-e^{-2}}{2}$ as $T\rightarrow\infty$.

\end{APPENDIX}
\end{document}